\newcommand*\bigcdot{\mathpalette\bigcdot@{.5}}
\newcommand*\bigcdot@[2]{\mathbin{\vcenter{\hbox{\scalebox{#2}{$\m@th#1\bullet$}}}}}
\newtheorem{theorem}{Theorem}
\newtheorem{lemma}{Lemma}
\newtheorem{definition}{Definition}
\newcommand{\etal}{\emph{et al.} }
\newcommand{\p}{\mathcal{P}}
\newcommand{\new}[0]{\color{black}}
\newcommand{\revise}[0]{\color{black}}
\newcommand{\rrevise}[0]{\color{black}}
\newcommand{\rrrevise}[0]{\color{black}}
\begin{document}
%
\title{Manifoldchain: Maximizing Blockchain Throughput via Bandwidth-Clustered Sharding}



%
\author{
\IEEEauthorblockN{
Chunjiang Che\IEEEauthorrefmark{1},
Songze Li\IEEEauthorrefmark{2}\IEEEauthorrefmark{3},
Xuechao Wang\IEEEauthorrefmark{1},
}
\IEEEauthorblockA{
\IEEEauthorrefmark{1}
The Hong Kong University of Science and Technology (Guangzhou)
}
\IEEEauthorblockA{
\IEEEauthorrefmark{2}Southeast University
}

\IEEEauthorblockA{
\IEEEauthorrefmark{3}
Engineering Research Center of Blockchain Application, Supervision and Management (Southeast University), \\
Ministry of Education
}
}



\maketitle

\begin{abstract}
Bandwidth limitation is the major bottleneck that hinders scaling throughput of proof-of-work blockchains. To guarantee security, the mining rate of the blockchain is determined by the miners with the lowest bandwidth, resulting in an inefficient bandwidth utilization among fast miners. We propose Manifoldchain, an innovative blockchain sharding protocol that alleviates the impact of slow miners to maximize blockchain throughput. Manifoldchain utilizes a bandwidth-clustered shard formation mechanism that groups miners with similar bandwidths into the same shard. Consequently, this approach enables us to set an optimal mining rate for each shard based on its bandwidth, effectively reducing the waiting time caused by slow miners. Nevertheless, the adversary could corrupt miners with similar bandwidths, thereby concentrating hashing power and potentially creating an adversarial majority within a single shard. To counter this adversarial strategy, we introduce \textit{sharing mining}, allowing the honest mining power of the entire network to participate in the secure ledger formation of each shard, thereby achieving the same level of security as an unsharded blockchain.
Additionally, we introduce an asynchronous atomic commitment mechanism to ensure transaction atomicity across shards with various mining rates. Our theoretical analysis demonstrates that Manifoldchain scales linearly in throughput with the increase in shard numbers and inversely with network delay in each shard. We implement a full system prototype of Manifoldchain, comprehensively evaluated on both simulated and real-world testbeds. These experiments validate its vertical scalability with network bandwidth and horizontal scalability with network size, achieving a substantial improvement of 186\% in throughput over baseline sharding protocols, for scenarios where bandwidths of miners range from 5Mbps to 60Mbps.
\end{abstract}


%

\section{Introduction}



Blockchain technology, pioneered by Nakamoto's proof-of-work (PoW) longest-chain protocol~\cite{nakamoto2008bitcoin}, has garnered substantial attention in recent years. PoW blockchains are distinct in offering unique properties like dynamic availability, unpredictability, and security against adaptive adversaries — merits that have been validated both theoretically~\cite{garay2017bitcoin,garay2020does} and practically~\cite{charts}. However, a fundamental challenge remains its inherent poor scalability, hindering its broader adoption and real-world applicability. A major impediment to scale PoW blockchains lies in the necessity for every miner to replicate the communication, storage, and state representation of the entire ledger. Blockchain sharding protocols have emerged as a promising solution to address this challenge. Sharding protocols allocate miners to distinct shards, where they independently mine separate chains within their respective shards~\cite{luu2016secure,kokoris2018omniledger,wang2019monoxide,zamani2018rapidchain,rana2022free2shard,gencer2016service}. This innovative approach enables a linear scaling of system throughput in proportion to the network size.

However, a noteworthy shortcoming of most existing sharding protocols is the lack of attention to the bandwidth heterogeneity. {\new Bandwidth variance can be substantial in real-world scenarios. For example, some Bitcoin miners operate within the Tor~\cite{tor_wiki} network to safeguard their privacy, where only half of the available bandwidth in the Tor network is utilized~\cite{loesing2009measuring, 10.1145/2946802, 6407715}, resulting in significantly smaller bandwidth compared to normal miners.}
Miners with limited bandwidth resources, often referred to as \textit{stragglers}, contribute to increased network delays, which result in forking of blockchains, i.e., two distinct blocks extend the same preceding block. 
Forking significantly undermines blockchain security by diminishing the effective hashing power of honest miners, and elevating the probability of adversaries successfully tampering with ledger information. Consequently, fast miners are compelled to reduce their mining rates in order to mitigate forking, limiting overall transaction throughput. Most of the existing sharding protocols~{\new \cite{luu2016secure, kokoris2018omniledger, zamani2018rapidchain, gearbox, reticulum, zhang2023frontrunning, wang2019monoxide}} typically employ a uniform shard formation (USF) mechanism across shards, resulting in each shard accommodating both stragglers and fast miners. Therefore, the throughput in each shard is still limited by bandwidths of stragglers.

We introduce \textit{Manifoldchain} as a comprehensive solution to tackle this challenge. The key insight behind Manifoldchain is to cluster miners with similar bandwidths into the same shard, thereby segregating fast miners from stragglers. Consequently, an optimal mining rate can be set for each shard, tailored to bandwidth resources of miners within. This approach empowers fast miners to fully harness their computation resources, enabling them to propose blocks at a much higher rate without being impeded by stragglers. {\new Manifoldchain employs a bandwidth-clustered shard formation (BCSF) mechanism. Specifically, miners package their bandwidth information, public key, and IP address as a ``Pseudo-identity'' (PID), and extend a credential chain with transactions encapsulating these PIDs. PID owners confirmed on the credential chain are subsequently distributed across distinct shards corresponding to their bandwidths.} Once distributed, they begin processing transactions and growing ledgers within their respective shards. 


A pivotal challenge in realizing such a sharding protocol is ensuring security within each shard in the presence of a mildly adaptive adversary (a general adversary model adopted by most sharding protocols). It may seek to corrupt miners with similar bandwidths, thereby concentrating adversarial hashing power within a single shard and potentially yielding an adversarial ratio exceeding 50\%. To tackle this challenge, we introduce \emph{sharing mining}, which utilizes cryptographic sortition to determine whether a miner will produce an \emph{exclusive block} to a specific shard, or an \emph{inclusive block} that can be appended to chains in all shards. Sharing mining enables Manifoldchain to aggregate the honest hashing power from the entire network to secure each individual shard, while simultaneously benefiting from boosted throughput due to various mining rates across those shards.   

Nevertheless, as our design avoids full blocks in one shard from being sent to miners in other shards, it may lead to data availability and validity issues. We utilize \textit{coded Merkle tree} (CMT)~\cite{DBLP:conf/fc/YuSLAKV20} and \textit{fraud proof}~\cite{al2018fraud} to verify data availability/validity without downloading the full blocks. Additionally, sharing mining incorporates strategies like \emph{predictive mining} and \emph{fork pruning} to parallelize mining and verification processes. Particularly, predictive mining empowers miners to mine on multiple unverified parent blocks while concurrently requesting data availability and validity proofs. Subsequently, miners can prune forks upon receiving unavailability or invalidity proofs. 

Another challenge arises from the various mining rates across different shards when processing cross-shard transactions (\textit{cross-txs}), which may lead to inconsistent cross-shard asset transfers. The standard atomic commitment protocol \textit{Two-Phase Commit} (2PC)~\cite{gray2005notes} employed by existing sharding protocols such as OmniLedger~\cite{kokoris2018omniledger}, is unsuitable for Manifoldchain. For instance, 2PC requires the verification process of a cross-tx to pause until confirmed vote messages are received from all involved shards, introducing prohibitive latency for confirming cross-txs. To address this challenge, we adopt an asynchronous commitment mechanism by eliminating the locking phase of 2PC. Specifically, coins are directly spent instead of being locked. Upon any unsuccessful expenditure, the coins are refunded to the payers.


We have conducted a comprehensive theoretical analysis of Manifoldchain, emphasizing its security and overall system throughput.  
Our analysis rigorously establishes the persistence and liveness properties of Manifoldchain, offering a precise upper bound on the error probability, a significant advance over the asymptotic results commonly found in previous works~\cite{garay2015bitcoin,dembo2020everything}. This precise security characterization plays a crucial role in guiding the optimal configuration of protocol parameters. Notably, we address the determination of mining rates across different shards through an optimization problem, delivering an optimal solution for these settings. Finally, we formally prove that the optimal configuration of mining rates allows Manifoldchain to achieve a linear horizontal scaling with the number of miners, and a linear vertical scaling with the reciprocal of network delay in each shard. 

Furthermore, to manifest the theoretical promise of Manifoldchain, we have implemented it concisely in about 15,000 lines of Rust code (available open source~\cite{codeAnon}). Our implementation has undergone comprehensive evaluation across diverse scenarios, encompassing a realistic testbed hosted on Amazon EC2, as well as a simulated testbed on a local machine. {\new The experimental results from both realistic and simulated testbeds demonstrate that Manifoldchain outperforms baseline sharding protocols in several key aspects: (1) Manifoldchain delivers approximately $5\times$ throughput in the fastest shard configured with the highest mining rate, and approximately $3\times$ average throughput; (2) Manifoldchain achieves around a $3\times$ increase in throughput with the same increase in the number of miners; (3) Manifoldchain significantly boosts throughput with increased bandwidth resources, while baseline sharding protocols show no improvement.
}

\section{Related Work}
\noindent\textbf{Vertically scaling protocols.} The majority of research efforts focused on scaling blockchain performance vertically, namely designing consensus protocols with inherently high performance. Bitcoin-NG \cite{DBLP:conf/nsdi/EyalGSR16} achieves high throughput by decoupling Bitcoin's execution into two distinct primitives: leader election and transaction proposal. 
It decouples a full block into two parts: a key block for leader election and a microblock that contains transactions. The protocol relies on only key blocks with significantly reduced block size to expedite leader election, while allowing the elected leader to generate microblocks as fast as possible according to its computing/network resource. Similarly, Prism \cite{DBLP:conf/ccs/BagariaKTFV19} embraces the same decoupling principle, allowing for concurrent leader election and transaction proposal processes. {\new However, these protocols maintain a single ledger, making it difficult to scale with the network size.}



\noindent\textbf{Horizontally scaling protocols.} To overcome the constraint encountered by vertically scaling protocols, several protocols aim to improve performance through horizontal scaling, which involves adding more nodes to distribute the load more effectively, thus boosting the system's throughput. Sharding stands out as the principal method for achieving Horizontal scaling. It divides the blockchain nodes into shards, each of which is only responsible for processing a distinct subset of the transactions. 
As a result, sharding enables the overall throughput to scale with the number of nodes.

{\new
Numerous works are dedicated to designing general sharding protocols in the permissionless setting. Elastico~\cite{luu2016secure} uniformly partitions the mining network into small committees, within which miners run pBFT~\cite{castro1999practical} to reach consensus on a disjoint set of transactions. 
Building upon this approach, Omniledger~\cite{kokoris2018omniledger} addresses Elastico's limitations of not supporting cross-tx atomicity, by introducing a cross-tx verification mechanism called Atomix, which enables miners to verify cross-txs without storing the full blocks from foreign shards. RapidChain~\cite{zamani2018rapidchain} further enhances the 
fault tolerance threshold from up to 1/4 in Omniledger to 1/3, boosts throughput within each shard through block pipelining, and reduces communication overhead via efficient routing. 
{\revise Nonetheless, these pBFT-based sharding protocols cannot achieve \textit{full sharding} (as formally defined in Definition~\ref{full_sharding_definition}). Specifically, the aforementioned sharding protocols necessitate that the number of honest miners in each shard scales with the shard size. This scaling demand requires a significantly larger shard size to ensure an honest (super)majority within each shard, in accordance with the law of large numbers. In contrast, a full sharding protocol requires only a constant number of honest miners within each shard to ensure security.}
{\revise Wang \etal introduced Monoxide~\cite{wang2019monoxide}, the first full sharding protocol, which employs Chu-ko-nu mining to make attacking any specific shard as difficult as targeting the entire network. In the best-case scenario, this design can ensure security as long as each shard contains at least one honest miner.} Nevertheless, Monoxide has its own shortcomings: (1) rigorous security analysis is not provided; (2) its proposed cross-tx verification mechanism, eventual atomicity, cannot support many-to-many transaction models.

Some other sharding solutions aim to tackle sub-problems within a sharding system. GearBox~\cite{gearbox} and Reticulum\cite{reticulum} aims to attain the smallest shard size, thereby maximizing parallelism.
Haechi\cite{zhang2023frontrunning} focuses on fortifying resilience against front-running attacks, wherein adversaries manipulate transaction execution order to achieve unfair finalization. 
ByShard\cite{DBLP:journals/vldb/HellingsS23} extends the system-specific
specialized sharding protocols to a application-agnostic solution. All these works address various issues distinct from ours, yet Manifoldchain, as a general sharding solution, is compatible with them to address corresponding challenges.
}

\noindent\textbf{Bandwidth considered protocols.} It has been theoretically demonstrated that the network delay, denoted by $\Delta$, plays an important role in determining an appropriate mining difficulty in synchronous blockchain protocols~\cite{DBLP:journals/corr/abs-2203-06357, DBLP:conf/ccs/Gazi0R22, DBLP:conf/aft/LiG021, DBLP:conf/eurocrypt/PassSS17}.
Specifically, the block generation rate is constrained by $\Delta$, as arbitrarily accelerating block generation can lead to excessive forking, thereby wasting honest mining power and compromising the network's security.
Bitcoin-NG and Prism fundamentally enhance vertical scalability by mitigating the impact of $\Delta$ via functionality decoupling. However, they neglect the heterogeneity of blockchain nodes. Specifically, stragglers contribute to higher $\Delta$ values and subsequently become the bottlenecks for system throughput. To mitigate this issue, Yang \textit{et al.} proposed DispersedLedger \cite{DBLP:conf/nsdi/YangPAKT22}, a partial synchronous BFT protocol designed to achieve near-optimal throughput even in scenarios with heterogeneous network bandwidths. 
DispersedLedger enables nodes to agree on proposals at a rapid rate without  downloading the entire block. Instead, nodes are able to retrieve the serialized transactions at their own paces. For instance, DispersedLedger nodes agree on Verifiable Information Dispersal blocks. These blocks utilizes erasure codes to store transactions across $N$ nodes, ensuring that they can be retrieved later in the presence of Byzantine behavior. {\new However, these non-sharding solutions inherently lack horizontal scalability.}

\noindent\textbf{Our protocol.} We propose Manifoldchain, a permissionless full sharding protocol with salient attributes as follows:
\begin{itemize}
    \item Manifoldchain is the first sharding protocol that takes the bandwidth heterogeneity into account. Other aforementioned protocols, either overlook the bandwidth heterogeneity in the sharding setup\cite{luu2016secure, kokoris2018omniledger, zamani2018rapidchain, gearbox, reticulum, zhang2023frontrunning}, or only consider them in the non-sharding setup\cite{DBLP:conf/nsdi/EyalGSR16, DBLP:conf/nsdi/YangPAKT22}. 
    \item Manifoldchain achieves the highest fault tolerance threshold. Compared with pBFT-based sharding protocols\cite{luu2016secure, kokoris2018omniledger, zamani2018rapidchain, gearbox, reticulum, zhang2023frontrunning} whose fault tolerance threshold is up to 1/3, Manifoldchain supports a fault tolerance of up to 1/2 and achieves full sharding as Monoxide. Furthermore, it only requires that there is at least one honest miner in each shard. Compared with Monoxide, it addresses remaining limitation by providing a tight security analysis that guides the selection and optimization of protocol parameters. Additionally, Manifoldchain can support many-to-many cross-tx verification. Table~\ref{comparison_table} provides an overview of Manifoldchain in comparison to other sharding protocols.
    \item Manifoldchain is a general sharding principle that is compatible with many other sharding solutions. For instance, while GearBox and Reticulum improve horizontal scalability by maximizing number of shards, Manifoldchain offers a new angle to boost throughput within individual shards by allowing normal miners and stragglers to operate independently. Insights from previous works can be combined with ours to further improve both horizontal and vertical scalability.
\end{itemize}

\begin{table*}[]
\centering
\caption{Comparison of Manifoldchain with SOTA sharding protocols}
\label{comparison_table}
\begin{threeparttable}
\begin{tabular}{l|ccccccc}
\hline
                             & {\revise Elastico~\cite{luu2016secure}}                 & {\revise Omniledger~\cite{kokoris2018omniledger}}               & {\revise RapidChain~\cite{zamani2018rapidchain}}               & {\revise Gearbox~\cite{gearbox}}                  & {\revise Reticulum~\cite{reticulum}}                                   & {\revise Monoxide~\cite{wang2019monoxide}}                 & \textbf{Manifoldchain}  \\ \hline
Global honest threshold\tnote{1}      & $\frac{3}{4}N$           & $\frac{3}{4}N$           & $\frac{2}{3}N$           & $\frac{2}{3}N$           & $\frac{2}{3}N$                      & $\frac{1}{2}N$           & $\mathbf{\frac{1}{2}N}$ \\
Intra-shard honest threshold\tnote{2} & $\frac{2}{3}\frac{N}{m}$ & $\frac{2}{3}\frac{N}{m}$ & $\frac{2}{3}\frac{N}{m}$ & $\frac{2}{3}\frac{N}{m}$ & $\frac{2}{3}\frac{N}{m}$  & $\geq 1$\tnote{3} & $\mathbf{1}$          \\
Supports many-to-many tx?    & Yes                      & Yes                      & Yes                      & Yes                      & Yes                                            & No                       & \textbf{Yes}            \\
Supports full sharding?      & No                       & No                       & No                       & No                       & No                                              & Yes                      & \textbf{Yes}            \\
Scalable with bandwidth?     & No                       & No                       & No                       & No                       & No                                              & No                       & \textbf{Yes}            \\ \hline
\end{tabular}
\begin{tablenotes}    
    \footnotesize        
    \item[1] The minimum number of honest miners necessary to ensure the security of the entire network, where $N$ is the total number of miners.
    \item[2] The minimum number of honest miners necessary to ensure the security of a specific shard, where $m$ is the number of shards.
    \item[3] In best-case scenario, Monoxide can achieve the same intra-shard honest threshold as Manifoldchain, but it lacks a formal security analysis.
\end{tablenotes}
\vspace{-5mm}
\end{threeparttable}
\end{table*}




\section{Background and Model}
\label{background}

\subsection{Nakamoto Consensus Protocol}\label{nakamotoconsensus}



Nakamoto consensus (NC), adopted by Bitcoin, operates on the PoW mechanism and the longest chain rule.  It can be described succinctly as follows: at any moment, an honest miner adopts the longest chain available to it and aims to mine a new block extending this chain; {\new a block is considered confirmed once it is sufficiently deep within the chain.}

\noindent\textbf{Proof of Work.} 
Finding a valid PoW solution necessitates locating a $\mathtt{nonce}$ value to ensures the output of the SHA256 hash function is less than the preset mining difficulty $\sigma$. We adopt the standard random oracle model~\cite{garay2015bitcoin}, utilized by an algorithm $\mathbf{PoW}^{\sigma}(\mathtt{parent\_hash}, \mathtt{info}, \mathtt{nonce})$ for finding a valid PoW solution, {\new as depicted in Appendix~\ref{pseudocode_pow}. } Here $\mathtt{parent\_hash}$ and $\mathtt{info}$ are the hash of the parent block and block content respectively. The blockchain's immutability stems from the fact that any attempt to modify a single block within the chain inevitably triggers alterations to its hash value as well as to the hash values of all subsequent blocks.

\noindent\textbf{Unspent Transaction Output (UTXO).} 
A Bitcoin transaction consists of multiple inputs and outputs, each represented as an UTXO. For a Bitcoin transaction $\rm \mathtt{tx}:\{I_1,...,I_k;O_p,...,O_h\}$ with $\rm \sum_{i=1}^k \mathcal{A}(I_k) = \sum_{j=1}^h \mathcal{A}(O_j)$(where $\mathcal{A}(\cdot)$ represents the amount of coins), $k$ inputs are consumed and $h$ outputs are generated.  
Transaction outputs may function as inputs for subsequent transactions; however, an output can be utilized only once. In other words, UTXOs embody the available outputs that can be employed to generate new transactions, effectively representing the user's balance in Bitcoin network.

\subsection{Blockchain Sharding Protocol}

A blockchain sharding protocol partitions miners into different shards, each processing a disjoint set of transactions. Overall, a specific sharding protocol involves two phases: shard formation phase and ledger generation phase. In the shard formation phase, sharding protocols utilize a shard formation mechanism to partition miners into distinct shards. A standard approach is the USF mechanism, which randomly shuffles miners and uniformly distributes them into shards. This initial design ensures that adversarial hashing power is evenly distributed among multiple shards, thereby fortifying each shard against targeted attacks. In the ledger generation phase, an intra-shard consensus protocol like NC or pBFT is employed by miners to reach agreement on the transaction order and update the ledger state within each shard. 
The choice between NC and pBFT is scenario-dependent: pBFT is preferred for its faster transaction confirmation times, whereas NC is favored for its adaptability to variable shard sizes. Additionally, cross-shard atomicity must be ensured to guarantee the atomic transfer of assets across shards. Specifically, the outcome of a cross-shard transaction—whether committed or aborted—is consistent across all involved shards.

{\revise
\noindent\textbf{Full Sharding Protocols.} Let $N$ represent the number of miners and $m$ represent the number of shards, we define a full sharding protocol as follows:
\begin{definition}\label{full_sharding_definition}
A sharding protocol achieves full sharding if there exists a constant $n_0$, such that for all $m$ and $N$ satisfying $0 < m \leq N$, security holds as long as each shard contains at least $n_0$ honest miners.
\end{definition}
Intuitively, a full sharding protocol requires only a constant number of honest miners in each shard to ensure security. Early sharding protocols, such as Elastico~\cite{luu2016secure}, Omniledger~\cite{kokoris2018omniledger}, and Rapidchain~\cite{zamani2018rapidchain}, require a linearly scaling number of honest miners in each shard (generally over two-thirds of the miners) and are categorized as non-full sharding protocols. Wang \etal introduced the first full sharding protocol Monoxide~\cite{wang2019monoxide}. In the best-case scenario, it requires only one honest miner in each shard. Our proposed protocol, Manifoldchain, is a more complete full sharding protocol equipped with formal security analysis and a many-to-many transaction design.
}

\vspace{-1mm}
\subsection{Network Model}


Given that NC-style protocols are known to lack security in a partial synchronous network~\cite{pass2017analysis, DBLP:journals/corr/abs-2203-06357} where there is an unknown bound on the network delays, we adopt the standard synchronous model. This model constrains the adversary to delaying messages from honest nodes by no more than a known maximum delay, denoted as $\Delta$. 
Previous works~\cite{garay2015bitcoin,pass2017analysis, DBLP:journals/corr/abs-2203-06357,dembo2020everything,garay2020does} have consistently overlooked the network heterogeneity, instead assuming the worst-case end-to-end delay to be $\Delta$. This simplification leads to an elegant trade-off between security and throughput in NC, as shown by Pass et al.~\cite{DBLP:conf/eurocrypt/PassSS17}. Let $\rho \in (\frac{1}{2}, 1]$ be the fraction of honest hashing power and $N$  the total number of miners. Pass et al. have theoretically established that to prevent consistency violations, the difficulty parameter $p$ must be set below $\frac{1}{(1-\rho)N\Delta}$, where $p = \sigma/2^{256}$ is the probability of a successful outcome from a single query to the random oracle $\mathbf{PoW}^{\sigma}(\cdot)$. Consequently, as the difficulty parameter $p$ determines the block generation rate, the throughput is directly proportional to $p$ and inversely affected by an increase in delay $\Delta$.
 
In this work, we adopt a standard network model adopted by many other works~\cite{DBLP:conf/nsdi/EyalGSR16, DBLP:conf/ccs/BagariaKTFV19, DBLP:conf/nsdi/YangPAKT22, DBLP:conf/fc/SompolinskyZ15, DBLP:conf/p2p/DeckerW13}. Specifically, miners have various bandwidths, represented as $\mathtt{C}_i$ for miner $i$. The network delay consists of two components: transmission delay $\Delta_t$ and propagation delay $\Delta_p$. Transmission delay refers to the duration required to push all bits of a message into the link, calculated as $\Delta_t = \frac{m}{\mathtt{C}}$ for a message of $m$ bits and a link bandwidth of $\mathtt{C}$ bits per second. Propagation delay $\Delta_p$, on the other hand, measures the time it takes for a bit to travel from one end of the link to the other, dependent on the distance and the speed of light or electrical signal propagation. 
Consequently, let $\Delta_\mathtt{C} = \mathtt{B}/\mathtt{C} + \Delta_p$ be the network delay for a node with bandwidth $\mathtt{C}$, where $\mathtt{B}$ is the maximum block size. We denote by $\Delta = \mathop{\max}_{i}\Delta_{\mathtt{C}_i}$ the maximum delay in the network, primarily determined by the straggler with the smallest bandwidth.

\vspace{-2mm}
\subsection{Threat Model} 
We consider both static and dynamic settings in this paper, tailored to address distinct types of adversaries. {\revise In the static setting, we consider a \textit{static} adversary~\cite{9479723} who corrupts a fixed set of miners when the protocol starts. In contrast, an \textit{adaptive} adversary can change which miners to corrupt during protocol execution. In the dynamic setting, we consider a \textit{mildly} adaptive adversary\cite{kokoris2018omniledger}, which requires a certain delay to transfer corruption. Other types of adaptive adversaries, including weakly and strongly adaptive adversaries~\cite{DBLP:conf/tcc/WanXDS20, DBLP:journals/dc/KlonowskiKM19, DBLP:journals/dc/AbrahamCDNPRS23}, lack this delay and can corrupt all miners in a shard at any time, are not considered in typical sharding protocol designs~\cite{kokoris2018omniledger, zamani2018rapidchain, luu2016secure, DBLP:conf/crypto/KiayiasRDO17}. A detailed comparison of the mildly adaptive adversary with other adaptive adversaries is provided in Appendix~\ref{adaptive_adversary_appendix}.
We present the detailed adversary model as follows:}

\begin{itemize}
    \item In the static setting, the set of participating miners, totaling $N$, remains unchanged with no new entries or exits during the protocol's operation. This scenario assumes a static adversary who, prior to the protocol's start, can corrupt up to $(1 - \rho)N$ miners. The adversary does not have prior knowledge about the miners' participation or specific attributes (particularly their bandwidth information). Once the protocol begins, the adversary cannot modify the selection of corrupted miners. 
    \item In the dynamic setting, miners can join or leave the network during the shard formation phase, but their participation stabilizes during the ledger generation phase. {\rrevise Formally, we denote by $\alpha_i$ the fraction of the new miners among all miners in the $i$-th shard formation phase.}
    Specifically, in the initialization phase and each shard formation phase, the adversary can corrupt up to $1-\rho$ fraction of participating miners or new miners without prior knowledge of their bandwidth specifics. Additionally, the adversary has the capability to dynamically change which miners to corrupt but any transition to corrupt additional miners demands a minimum time investment equivalent to the duration of each ledger generation phase. Regardless of the adversary's corruption strategy, the maximum fraction of corrupted miners remains $1-\rho$.
\end{itemize}

Both static and adaptive adversary can arbitrarily determine the participating strategy of the corrupted miners, and fully control their associated communication and computation resources.  
We assume there exits an underlying bandwidth distribution of honest miners, and the new honest miners joining in each shard formation phase adhere to this distribution. Prior to protocol execution, we conduct bandwidth estimation, which can be achieved by leveraging existing techniques~\cite{10.1145/3012426.3022184, 10.1145/1064212.1064242}, even under Byzantine settings~\cite{DBLP:journals/corr/abs-2210-11546}.

\section{Manifoldchain}\label{solution}

\noindent\textbf{Overview.} We introduce Manifoldchain, the first blockchain sharding protocol that simultaneously scales system throughput vertically and horizontally. The key idea is using the BCSF mechanism to allocate miners with similar bandwidths to the same shard. This mechanism clusters stragglers into some specific shards, mitigating their impact 
on the performance of other shards. As a result, we can tailor the mining rate for each shard based on its specific network delay,
and fast miners can create blocks and process transactions at an accelerated pace, leading to an overall increase in throughput. Fig.~\ref{fig:idea} demonstrates our key insight. Here, we emphasize two primary challenges that are intrinsic to the design of such a blockchain sharding protocol.

The first challenge is to maintain security within each shard. 
The corrupted miners may misstate similar bandwidths, concentrating adversarial hashing power in a single shard, resulting in an adversarial majority in that shard. 
We propose sharing mining to ensure the security within those shards with an adversarial majority. It is implemented by allowing miners to simultaneously extend multiple longest chains in all shards using a single PoW solution, which enables honest miners to contribute their hashing power across all shards. However, in sharing mining, only the block header - rather than the full block - is sent to miners in other shards. The adversary can potentially create equivocation on the highest block through sending heads of invalid blocks to other shards, splitting the honest hashing power within and outside of a shard.
We address this issue by employing predictive mining and fork pruning. Predictive mining enables miners to mine on multiple potential parents and try to request data availability and validity proof at the same time, while fork pruning serves to prune forks with invalid availability or validity proofs.

Another challenge is to develop an asynchronous atomic commitment mechanism to ensure the atomic multi-input-multi-output UTXO transfer across shards under various mining rates. 
The standard atomic commitment 2PC~\cite{gray2005notes} incorporates Voting Phase and Decision Phase to handle this problem. Specifically, a coordinator node prompts participants to vote, and if all vote to commit, broadcast a commit message; otherwise, broadcast an abort message if any votes to abort. In Manifoldchain, the coordinator node cannot begin Decision Phase until all participants' vote messages are confirmed. The participants may be stragglers from slows shards, and their long confirmation time results in an unacceptable cross-tx latency. To address this challenge, we implement an asynchronous commitment mechanism by eliminating the locking phase. Initially, coins are expended directly rather than being locked if input shards vote to commit. Upon receiving all messages voting to commit, the corresponding coins will be subsequently generated in the output shards. Conversely, if any shard votes to abort or any vote message becomes outdated, the associated coins will not be generated, and the coins will be refunded to the input shards with prior successful expenditure.


\begin{figure}
  \centering
  \includegraphics[width=8cm]{./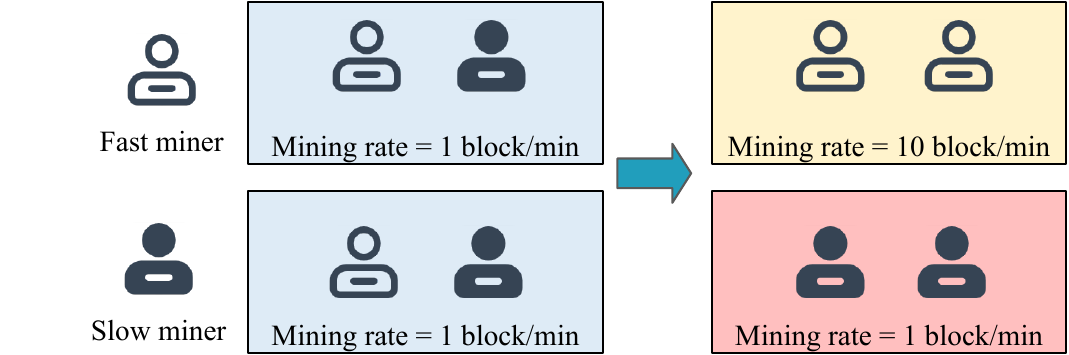}
  \vspace{-2mm}
  \caption{Basic insight of Manifoldchain. The hindrance imposed by bandwidth heterogeneity is alleviated by gathering the stragglers. }
  \label{fig:idea}
  \vspace{-6mm}\hspace{-10mm}
\end{figure}

\subsection{Bandwidth-Clustered Shard Formation}\label{formation}

We propose the BCSF protocol such that miners with similar bandwidths are allocated into the same shard. In BCSF, all miners maintain a \textit{credential-chain} based on NC, incorporating miner PID in place of transactions into the chain. A PID contains a miner's public key, the signed bandwidth using its secret key, and his IP address. The mining process of the credential-chain mirrors that of NC: miners package identities into blocks and strive to find a valid PoW solution. Upon receiving a block, each miner undertakes two verification processes: the first verifies the correctness of the PoW solution, while the second verifies the signature of the bandwidth using the public key provided in the identity. Given the liveness parameter $u$ of the credential-chain, which determines the maximum time it takes for a transaction to be confirmed, and the start time $t$ of the shard formation phase, we ensure that all PIDs of honest miners are confirmed by $t + u$. The owner of a confirmed PID within the time interval $[t, t+u)$ acquires the authorization to participate in the ledger generation phase.

We represent the information of these authorized miners in a two-dimensional space. The domain of the X-axis spans from 0 to $2^{256}$, while the Y-axis is defined within the range of $(\mathtt{C}_{min}, \mathtt{C}_{max})$, where $\mathtt{C}_{min}$ and $\mathtt{C}_{max}$ represent the minimum and maximum bandwidth in the network respectively. The coordinate of a miner, represented as $(h, \mathtt{C})$, is composed of the hash value of its PID $h$ and bandwidth $\mathtt{C}$. Subsequently, different shard formation mechanisms specify how to partition the space and distribute miners into different shards.


\vspace{-1mm}
\begin{figure}[h]
\label{fig:bcsf}
  \centering
  \includegraphics[width=6cm]{./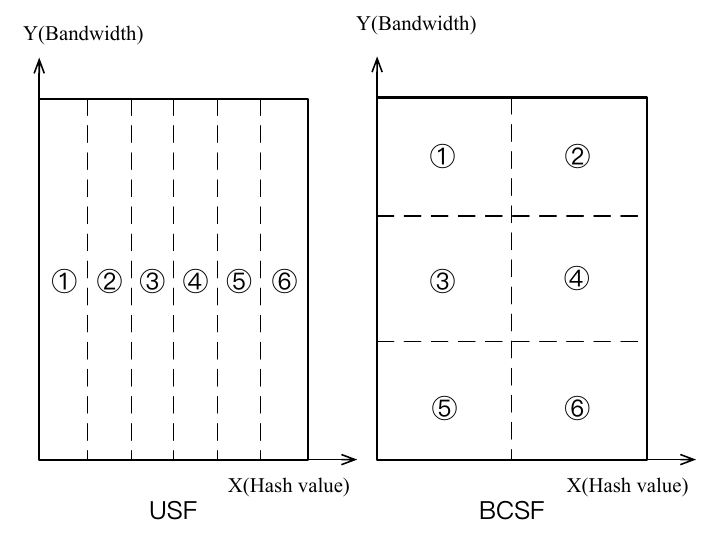}
  \vspace{-4mm}
  \caption{USF vs BCSF. The first one uniformly distributes miners across shards while the second one clusters miners with similar bandwidths.}
  \label{fig:formation_protocol}
\end{figure}

\noindent\textbf{Shard formation mechanism}. We abstract the USF mechanism as follows: the space is divided into $m$ regions along the X-axis, each corresponding to one shard. Given that hash values can be treated as random numbers, the USF mechanism effectively ensures a uniform distribution of miners across shards; BCSF incorporates an additional consideration of nodes' bandwidth information: the space is partitioned into multiple regions along both the X-axis and Y-axis, facilitating the grouping of miners with similar bandwidths. An example for $m=6$ shards is shown in Fig. \ref{fig:formation_protocol}.

\noindent\textbf{Secure shard partition}. We denote by $S_X$ and $S_Y$ the numbers of segmented regions along X-axis and Y-axis,  respectively. Increasing both $S_X$ and $S_Y$ improves overall throughput by generating more shards. However, it's crucial not to overly expand the number of shards, as doing so may result in shards devoid of honest miners, especially if the total shard count surpasses the number of honest participants. Let $R_Y(\underline{y}, \overline{y})$ represent a Y-region containing points with Y-coordinates within the range $(\underline{y}, \overline{y}]$. A shard formation mechanism segments the space into $S_Y$ Y-regions. 

We can set the Y-region ranges to achieve a uniform distribution of miners across these regions. Particularly, given an estimated bandwidth distribution,
we select $S_Y-1$ separation points $y_1,\ldots, y_{S_Y-1}$,
such that 
    \begin{equation}
        \begin{split}
           P(y_0, y_1)=P(y_1, y_2)= \cdots=P(y_{S_Y-1}, y_{S_Y}),
        \end{split}        
    \end{equation}
where $y_0=\mathtt{C}_{min}$, $y_{S_Y} = \mathtt{C}_{max}$, and $P(a, b)$ denotes the probability that a miner's bandwidth is within $(a, b]$. 

During the ledger generation phase, miners with the authorization sign the generated blocks using their secret keys and subsequently broadcast these blocks to other miners. The recipients validate the incoming blocks using the public keys stored within the confirmed PIDs. {\revise For rotation, miners update bandwidth information by broadcasting new PIDs during each shard formation phase. They then access updated shard allocation information from the credential chain and dynamically re-shard the network.}

{\rrrevise
\noindent {\bf Selection of $S_X$ and $S_Y$.} We demonstrate how to select appropriate values for $S_X$ and $S_Y$ to ensure that each shard contains at least one honest miner, as illustrated in Lemma~\ref{honest_presence_proof}. Initially, we set $S_Y$ as large as possible to effectively separate fast and slow miners, while minimizing bandwidth variance within each Y-region. Given a fixed $S_Y$, we set $S_X$ as large as possible to increase the number of shards, as long as the error probability in (\ref{honest_presence_probability}) is negligible. For instance, if all miners have either high or low bandwidth, we set $S_Y=2$ and increase $S_X$ until the error probability exceeds a predefined threshold.
}

\noindent\textbf{Adversary-proofness of BCSF.} Two malicious deviations can occur during the shard formation phase. Firstly, a malicious miner might attempt to enter a specific shard by manipulating its PID, both the hash value and bandwidth information, submitted to the credential-chain. However, by carefully choosing the values of $S_X$ and $S_Y$, BCSF ensures the presence of at least one honest miner in every shard, irrespective of the adversary's choices of shards. This condition is sufficient to establish the security of Manifoldchain, as elaborated in Section~\ref{main_security_analysis}. At a high level, the security and performance of each shard remain intact despite the presence of malicious miners with incompatible bandwidths, thanks to the Byzantine fault tolerance inherent in our protocol. Secondly, a malicious miner may submit multiple PIDs to the credential-chain. However, during the ledger generation phase, such a miner must either distribute its hashing power among various PIDs or concentrate it on a single one. Either case does not change the aggregate adversarial mining power across shards, rendering the strategy ineffective. Note that we rely on the liveness property of NC to ensure that all honest miners' PIDs are confirmed on the credential-chain before the shard formation phase concludes. Attacks at the network layer, such as spamming, fall outside this paper's scope.

\subsection{Block \& Chain Structures}\label{blk_arch}
Our block structure decouples the traditional Bitcoin full block into two types of blocks: the {\it consensus block} and the \textit{transaction block}. 
A consensus block has a fixed size of approximately 100 bytes, introducing negligible overhead for both communication and storage. Consensus blocks are broadcast to all miners in the network, while transaction blocks are only transmitted within their own shards. To support the implementation of sharing mining, we categorize consensus blocks into exclusive blocks and inclusive blocks. 

Distinct from a Bitcoin full block, a consensus block could have multiple parents. More specifically, an exclusive block extends chains within its affiliated shard and can have multiple parents, whereas an inclusive block, with the ability to have parents from multiple shards, extends chains of all shards. This design, allowing for multiple parents, facilitates predictive mining where miners may have to mine on multiple unverified blocks. This will be further elaborated in Section~\ref{sharing_mining}. We employ the idea of 2-for-1 PoW~\cite{garay2015bitcoin} to mine both exclusive and inclusive blocks concurrently. Initially, miners engage in the process of mining a consensus block and they do not know the type of the consensus block until the puzzle is solved. When a consensus block is mined, it is considered mined of either exclusive block or inclusive block depending on the region the block hash falls. {\new Specifically, A miner mines a consensus block when it discovers a nonce that satisfies $\mathbf{PoW}^{\sigma}(\bigcdot, \mathtt{nonce}) < \sigma$. Further, we can set a threshold $\sigma' < \sigma$. If $\mathbf{PoW}^{\sigma}(\bigcdot, \mathtt{nonce}) < \sigma'$, the consensus block is deemed an inclusive block; otherwise, it is classified as an exclusive block.} We present all the segments of a consensus block as follows:

\begin{itemize}
    \item $\mathtt{shard\_index}$: an index denoting the specific shard to which the block is affiliated.
    \item $\mathtt{verified\_parent}$: hash of the highest verified block within the affiliated shard.
    \item $\mathtt{inter\_parents}$: a set composed of all unverified blocks' hashes extending the highest verified block within the affiliated shard.
    \item $\mathtt{global\_parents}$: a set composed of all inter parents across all shards.
    \item $\mathtt{timestamp}$: the time when the block is generated.
    \item $\mathtt{nonce}$: the resolved solution to a PoW puzzle within the context of sharing mining.
    \item $\mathtt{tx\_merkle\_root}$: the root of a Merkle tree generated from all transactions in a transaction block.
    \item $\mathtt{tmy\_merkle\_root}$: the root of a Merkle tree generated from all testimonies in a transaction block.
\end{itemize}

Adopting the UTXO model, a transaction in Manifoldchain consist of multiple inputs and outputs. Each input contains a data field called $\mathtt{payer\_addr}$, signifying the address of the payer. Similarly, each output contains a data field named $\mathtt{receiver\_addr}$, denoting the receiver's address. Basically, transactions are categorized into two main types: domestic transactions (domestic-txs) and cross-txs. If all payers and receivers involved in a transaction belong to the same shard, the transaction is termed a domestic-tx; conversely, if participants span multiple shards, it is referred to as a cross-tx. A transaction block resembles transactions is referenced to a consensus block through the $\mathtt{tx\_merkle\_root}$. Within a transaction block each cross-tx is accompanied by an additional testimony. This structure offers an efficient method for verifying cross-txs, presented in Section \ref{cross-tx method}. 

\vspace{-1mm}
\subsection{Sharing Mining}\label{sharing_mining}


Using the BCSF shard formation mechanism, honest hashing power is evenly distributed across all shards, but adversary could concentrate its hashing power on specific shards. To ensure security in shards with adversarial majority, we propose sharing mining to allow honest miners to contribute their hashing power to all shards. An initial design of sharing mining, inspired by Chu-ko-nu mining, allows a miner to extend chains in all shards simultaneously using a single PoW solution. Instead of mining on one parent block, a miner works on creating a consensus block using the hash of highest blocks from all shards. Adopting 2-for-1 PoW, the consensus block is then classified as either an exclusive block or an inclusive block. In the case of an exclusive block, it extends the longest chain within the miner's corresponding shard. Conversely, in the case of an inclusive block, it simultaneously extends multiple longest chains across all shards. Honest miners treat exclusive blocks and inclusive blocks equivalently, adhering to the same verification and fork selection mechanisms. Fig.~\ref{fig:Chu-ko-nu} shows how sharing mining works. 

\begin{figure}
  \centering
  \includegraphics[width=7cm]{./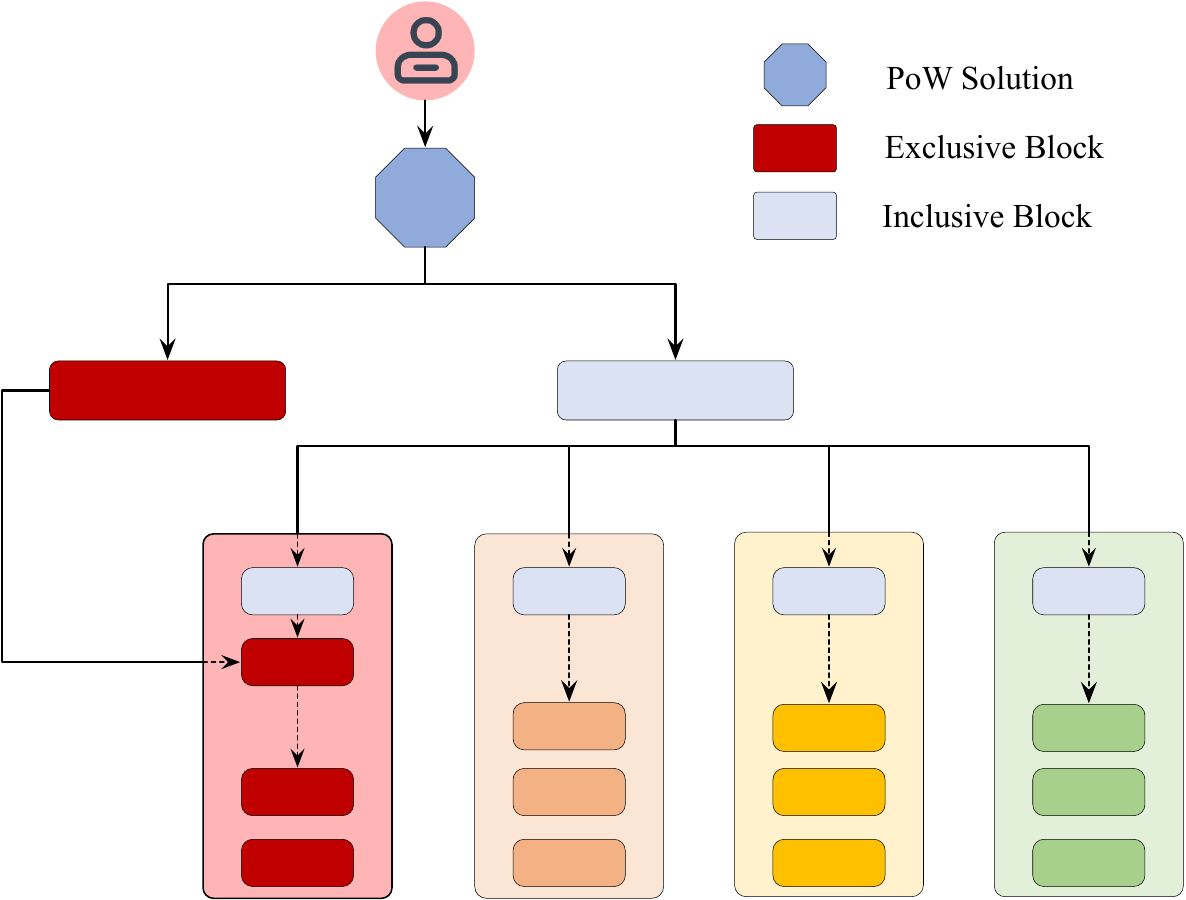}
  \vspace{-2mm}
  \caption{Overview of sharing mining. An exclusive block extends the blockchain within a specific shard, whereas an inclusive block extends all blockchains across shards.}
  \label{fig:Chu-ko-nu}
  \vspace{-6mm}
\end{figure}

\noindent {\bf Hashing power splitting attack.} As the transaction block associated with a consensus block is not sent to miners outside the shard, it causes the protocol vulnerable to a Hashing Power Splitting Attack (HPSA). Specifically, an attacker with overwhelming hashing power within a shard may mine a longer chain, containing invalid transactions but appearing valid from the perspective of honest miners in other shards, as they cannot verify the validity of the block without the transactions. This can cause honest miners to mine on different consensus blocks, leading to split of honest mining power and hence reduced security. A detailed description of HPSA is given in the Appendix \ref{HPSA}. 

To counter HPSA, Manifoldchain requires each miner to verify the following properties of each block from foreign shards, without storing the corresponding transaction block.
\begin{itemize}
    \item \textbf{Transaction validity} ensures that all transactions packaged within a block adhere to the appropriate format and preclude any double-spending events.
    \item \textbf{Data availability} requires a full block including both consensus block and associated transaction block, appears in each honest miner's view within its shard. 
\end{itemize}

We provide corresponding verification mechanisms for these two properties in scenarios where there is at least one honest miner in each shard. To ensure transaction validity, we employ fraud proof~\cite{al2018fraud} to enable in-shard miner to inform out-shard miner of a block's violation of validity rules. {\revise When an honest in-shard miner encounters an invalid block, it generates a fraud proof and broadcast it among all out-shard miners. Upon receiving a valid fraud proof, honest out-shard miners reject the block.} 

For data availability, we utilize CMT~\cite{DBLP:conf/fc/YuSLAKV20} to enable miners to verify data availability without downloading the full block. Specifically, CMT adds redundancy to transaction blocks via erasure codes. Any honest miner can verify full data availability through the mere download of a block hash commitment with a size of $\mathcal{O}(1)$ byte, combined with a random sampling of $\mathcal{O}(\log(B))$ bytes. Therefore, out-shard miners can validate the data availability of an in-shard transaction block by requesting just $\mathcal{O}(\log(B))$ samples. {\revise When an honest out-shard miner receives a foreign block, it requests block samples from the source shard. If it receives any error-coding proofs or fails to receive all requested samples within the maximum network delay $\Delta$, it rejects the block.} Detailed verification mechanisms are provided in Appendix~\ref{validity_availability_verification}.




Verifying data availability and transaction validity can take up to the maximum network delay $\Delta$ to complete, diminishing the benefit of Manifoldchain by clustering miners with similar bandwidths. 
To maintain the throughput gain from downloading fraud and data availability proofs from foreign shards, we further optimize sharing mining, by proposing novel techniques of \emph{predictive mining} and \emph{fork pruning}. The key idea is for the miner to simultaneously mine on multiple blocks, with some of them waiting to be verified by messages from other shards. By doing this, a miner can start mining on a foreign block as soon as receiving the block, without needing to wait for verification. 
Upon completion of the verification process, a pruning procedure is conducted to maintain the uniqueness of the ledger. Consequently, any forks that do not extend the the longest verified chain are discarded.

\noindent\textbf{Predictive mining}. An exclusive block has multiple parents within one shard, comprising the last blocks of the longest verified chain and all unverified forks that extend on the longest verified chain. This exclusive block temporarily extends all chains that are possibly the longest verified chain. An inclusive block has multiple parents across all shards. Within each shard, similar to exclusive block, its parents comprise the last blocks of the longest verified chain and all unverified forks in that shard. As illustrated in Fig.~\ref{fig:predictive_pruning}, three unverified blocks extend from the second verified block. By employing predictive mining, the subsequent block will consider all three unverified blocks, along with the second verified block, as parents and extend its chain on top of them.

\begin{figure}
  \centering
  \includegraphics[width=8cm]{./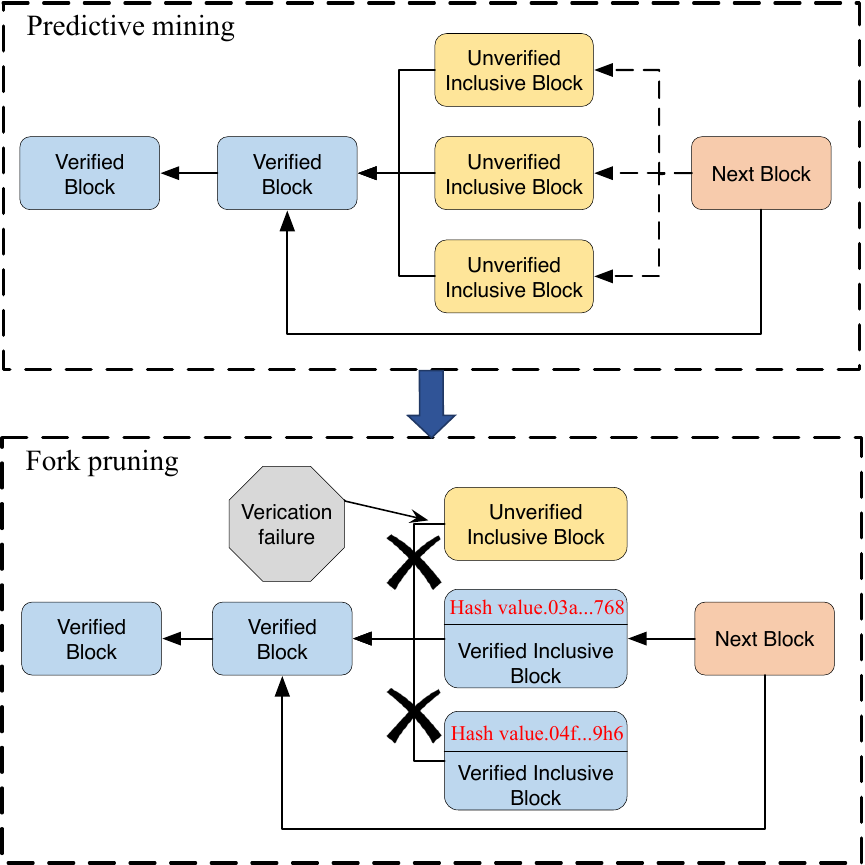}
  \vspace{-2mm}
  \caption{Key insight of predictive mining and fork pruning. Blocks can extend potential parents and invalid forks are subsequently pruned.}
  \label{fig:predictive_pruning}
  \vspace{-5mm}
\end{figure}

We present the procedures for mining exclusive and inclusive blocks, along with their verification process as follows:
\begin{itemize}
    \item \textbf{Mining}: a miner gathers all requisite segments for generating a consensus block. Initially, it package valid transactions and testimonies into a transaction block and generate the corresponding $\mathtt{tx\_merkle\_root}$ and $\mathtt{tmy\_merkle\_root}$. As for other segments, $\mathtt{verified\_parent}$ is designated as the hash value of the last block of the longest verified chain; $\mathtt{inter\_parents}$ is a vector containing $\mathtt{verified\_parent}$ and the hashes of the last blocks of all unverified forks that are at least as long as the longest verified chain; The $\mathtt{global\_parents}$ is formed through the aggregation of the $\mathtt{inter\_parents}$ across all shards. The miner works on finding a solution for $\textbf{PoW}^{\sigma}$$(\mathtt{parent\_hash}$, $\mathtt{info}$,$ \mathtt{nonce})$. In this context, $\mathtt{parent\_hash}$ refers to the hash of the three segments comprising $\mathtt{verified\_parent}$, $\mathtt{inter\_parents}$, and $\mathtt{global\_parents}$. All other segments, excluding $\mathtt{nonce}$, undergo hashing to formulate $\mathtt{info}$. The generated consensus block is subsequently categorized as an exclusive block or inclusive block based on its hash value, and it is linked to the transaction block. Both exclusive blocks and inclusive blocks are sent to all miners, while transactions blocks are only broadcast within their corresponding shards. 
    \item \textbf{Verification}: when a miner receives an exclusive or inclusive block from other shards, it first checks the correctness of the PoW solution by recomputing the hash value. If the solution is correct, the miner examines each parent included in the block. If a parent is verified invalid or does not exist in all possible forks, the miner gives up extending the block on this parent. Otherwise, it extends the block on the parent, even if it is not part of the longest chain (storing an exclusive or inclusive block incurs negligible storage as it does not contain any transactions). Conversely, if a miner receives an exclusive or inclusive block belonging to its shard, in addition to the aforementioned verification procedure, it also verifies the validity of the linked transaction block. For each domestic transaction, it conducts standard verification akin to Bitcoin's process. For each cross-shard transaction, it follows the verification method outlined in Section \ref{cross-tx method}. The miner rejects blocks containing any invalid transactions.
\end{itemize}

\noindent\textbf{Fork pruning}. 
Once an unverified block is successfully verified with both transaction validity and data availability, some forks are temporarily considered invalid and subsequently excluded from the parent blocks during the mining of the next block. Pruning is employed under the following two situations:

\begin{enumerate}
    \item \textbf{Verification failure}. If a block fails the data availability verification (a miner has not received enough samples to complete the availability verification of a block within $2\Delta$ seconds since receiving the block) or the transaction validity verification fails (a miner receives a valid $\mathtt{proof\_of\_invalidity}$ of that block within $2\Delta$ seconds since receiving the block), any forks extending from this block are pruned. 
    \item \textbf{Sibling block with lower hash value is verified}. In cases where multiple blocks on the same level attain verification, the block possessing the lowest hash value stays, while its sibling blocks are subsequently pruned. 
\end{enumerate}



\noindent\textbf{Forking rate about sharing mining}.
Even though sharing mining enables both exclusive blocks and inclusive blocks to extend on multiple parents, it does not produce any extra unexpected forks. (1) An exclusive block with multiple parents results in extra forks, but these forks are expected and only exist for a maximum interval of $\Delta$ seconds. After this interval, only one fork remains, indicating that if we solely consider the verified portion of a chain, our sharing mining protocol aligns with the insights of the NC protocol. (2) An inclusive block does not lead to any additional forks. Upon finding a PoW solution for an inclusive block, a miner generates both the inclusive block and its associated transaction blocks. The inclusive block solely contains essential header information and is negligibly small compared to the transaction block. Only the inclusive block is transmitted to other shards, while the transaction block remains within the shard. The transmission delay for inclusive blocks is negligible, thereby preventing the occurrence of extra forks when transmitting inclusive blocks across shards.

\subsection{Cross-tx Atomicity}\label{cross-tx method}

In the context of a sharding protocol, miners across different shards must collectively reach consensus on whether a cross-tx should be committed or not. This scenario aligns with the classic atomic commitment problem in distributed databases~\cite{DBLP:conf/pods/KeidarD95, DBLP:conf/itng/ParkLY13}. As discussed in Section~\ref{formation}, transactions are either domestic-txs or cross-txs. A cross-tx involves multiple inputs and outputs originating from different shards. 
Here ``input'' signifies the expenditure of coins, while ``output'' denotes the creation of coins. We refer to shards in which coins are spent as ``input shards'', and shards in which coins are created as ``output shards''. The atomic commitment across shards requires that
if all inputs are confirmed as spent, all outputs should be successfully created. Conversely, if any input is confirmed as invalid, all outputs should not be generated and other inputs are not spent eventually. The most straightforward and widely recognized approach for achieving atomic commitment is 2PC, a method employed by existing blockchain protocols such as RSCoin\cite{DBLP:conf/ndss/DanezisM16} and OmniLedger. Specifically, the coordinator node prompts participants to vote, and if all vote to commit, broadcast a commit message; otherwise, broadcast an abort message if any votes to abort. 

New challenge arises from sharding setting with various mining rates when naively applying 2PC to Manifoldchain: The vote messages may become outdated when forking occurs. For instance, a shard initially votes to commit a cross-tx, but the inputs may become invalid 
upon a forking occurrence. An intuitive solution is to delay the decision phase until the confirmation of all vote messages, which is adopted by OmniLedger. However, this solution requires miners in fast shards to pause the decision phase before receiving all confirmed vote messages, resulting in unacceptable cross-tx latency and contradicting our initial motivation.
To address this challenge, we implement an asynchronous commitment mechanism by eliminating the locking phase. Initially, coins are expended directly rather than being locked if input shards vote to commit. Upon the successful expenditure of all inputs, the corresponding coins will be subsequently generated in the output shards. Conversely, if any input shard votes to abort or any vote message becomes outdated, the associated coins will not be generated, and the coins will be refunded to the input shards with prior successful expenditure. 

We introduce testimony to enable miners within the output shards to ascertain the successful expenditure of inputs. We implement the execution of a cross-tx by broadcasting it to each input shard and output shard. Cross-txs within the input shards are designated as input-txs, whereas those situated within the output shards are labeled as output-txs. 
Each output-tx is associated with a testimony to enable miners to verify the validity of its corresponding input-txs asynchronously. A testimony is composed of multiple units, each corresponding to an input.
Essentially, each testimony unit maintains a Merkle proof for corresponding input-tx that proves its inclusion within the longest chain. 
Specifically, the structure of a testimony unit is presented as follows:

\noindent \textbf{Input hash}. The hash of the corresponding input. 

\noindent \textbf{Originate block hash}. The hash of the originate block where the input-tx is packaged.

\noindent \textbf{Transaction Merkle proof}. A merkle proof which proves the inclusion of the input-tx in the originate block.

\noindent \textbf{Vote message}. An additional bit indicates an acceptance or rejection.


When a miner generates a new block, for each input-tx, it generates the corresponding testimony and sends it to all involved output shards. As the inputs of a cross-tx are distributed among different shards, each shard can create a partial testimony encompassing units for a portion of inputs. After receiving all partial testimonies, miners in the output shards combine them into one full testimony, composed of corresponding units for all inputs. The full protocol is illustrated as follows:
\begin{enumerate}
    \item \textbf{Initialization}. A user creates a cross-tx and gossips it to all involved input shards and output shards.
    \item \textbf{Expenditure}. When a miner in the input shards receive an input-tx, it proceeds as follows. First, it decides whether the coins can be spent based on the ledger information. If the input-tx is valid, the miner {\rrrevise labels} this input-tx as accepted and includes it in the mining pool. In the case of an invalid input-tx, it is labeled as rejected but is still placed in the mining pool. Upon successfully mining a valid block, for each involved accepted or rejected input-tx, the miner generates a testimony containing a vote message of corresponding acceptance or rejection and broadcasts it to all output shards. {\rrrevise Aligning with its label, an input-tx is confirmed as either accepted or rejected upon reaching a $\kappa$-confirmation (the block containing the input-tx is followed by $\kappa$ consensus blocks).}
    \item \textbf{Receipt}. When a miner in the output shards receives an output-tx along with all associated testimonies for all inputs, it is included in the mining pool. Output-txs are subsequently packaged into blocks and appear in the longest chain. For each output-tx in the longest chain, if all associated inputs are confirmed in other shards, miners validate their validity by verifying the corresponding testimonies. If all input-txs {\rrrevise are confirmed as} accepted, the output-tx is {\rrrevise labeled as} accepted. Conversely, if any input-txs {\rrrevise are confirmed as} rejected or testimonies are found to be invalid, the output-tx is {\rrrevise labeled as} rejected, triggering the subsequent refunding process. {\rrrevise When an output-tx reaches a $\kappa$-confirmation, it is confirmed as either accepted or rejected in accordance with its label.}
        
    \textbf{Refund}. Upon the {\revise $\kappa$-confirmation} of a rejected output-tx and all corresponding input-txs, miners generate a refund transaction (refund-tx) along with an \textit{proof-of-rejection}. This proof-of-rejection contains two testimonies $\mathtt{tes}_{I}$ and $\mathtt{tes}_{O}$. $\mathtt{tes}_O$ proves the rejected output-tx is included in the corresponding output shard's longest chain. $\mathtt{tes}_I$ testify either the inclusion of a rejected input-tx in the longest chain of the corresponding input shard, or the inclusion of an input-tx in a deconfirmed block (where another block at the same height is confirmed) of the input shard. Subsequently, both the refund-tx and proof-of-rejection are broadcast to all input shards. Upon receiving a refund-tx and the proof-of-rejection from all output shards, miners in the input shards can verify the validity of the refund-tx through the proof-of-rejections. If all {\rrrevise output-txs are confirmed as rejected}, the refund-tx is {\rrrevise labeled as} accepted, miners include it in the blockchain to refund the coins to the input shards. {\rrrevise Similarly, a refund-tx is confirmed as neither accepted or rejected based on its label upon a $\kappa$-confirmation.}
\end{enumerate}

{\revise
This protocol ensures cross-tx atomicity across shards, and the detailed proof is provided in Appendix~\ref{cross_tx_atomicity}. Intuitively, honest miners confirm an output-tx only if all input-txs are confirmed. If an adversary attempts to disrupt the atomicity of a cross-tx, they must target one of the input-txs and include a malicious input-tx with a different decision in a block. They must then ensure that this block is confirmed at the same level as the block containing the original input-tx. However, our sharing mining protocol ensures that, with overwhelming probability, no two different blocks at the same level are confirmed by honest miners, thereby preventing the adversary from breaking atomicity.

We point out that cross-tx latency depends on the slowest shard, as output-txs cannot be confirmed before all input-txs are confirmed. This bottleneck, shared by other sharding protocols, arises from the need for atomicity: decisions require votes from all participants. However, when cross-txs are not concentrated in the slowest shard, our protocol achieves lower average latency than baseline sharding protocols, as shown in Section~\ref{tps_real_testbed}.
}

\section{Security and Scalability Analysis}\label{main_security_analysis}

\subsection{Honest Presence}\label{honest_presence}
{\revise The security of Manifoldchain relies on the presence of at least one honest miner in each shard. Recall that $\alpha_i$ represents the fraction of new honest miners joining during the $i$-th shard formation phase, the following lemma demonstrates that Manifoldchain holds an honest presence within each shard.}

{\revise
\begin{lemma}\label{honest_presence_proof}
Let $\underline{\alpha} = \min_i \alpha_i$, i.e., the minimum fraction of newly joined miners among all miners in a shard formation phase. Given parameters $1 \leq S_Y \leq \max\{\underline{\alpha}\rho N, \lceil \frac{1}{1-\rho}\rceil - 1\}$ and $S_X \geq 1$, our shard formation mechanism ensures that each shard contains at least one honest miner, except with a negligible error probability bounded by $\varepsilon$, where
\begin{equation}\label{honest_presence_probability}
    \begin{split}
        \varepsilon = 
        \begin{cases}
        (S_XS_Y-1)(\frac{S_XS_Y-1}{S_XS_Y})^{\underline{\alpha} \rho N-1} & S_Y \geq \frac{1}{1-\rho}\\
        (S_XS_Y-1)(\frac{S_XS_Y-1}{S_XS_Y})^{\rho N-1} & 0 < S_Y < \frac{1}{1-\rho}
        \end{cases}
    \end{split}
\end{equation}

\end{lemma}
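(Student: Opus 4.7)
The plan is to bound the probability that at least one shard is devoid of honest miners by analyzing how honest miners distribute across the two-dimensional space and then applying a union-bound argument tailored to the regime of $S_Y$. I would proceed in three main steps.

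First, I would establish that every honest miner, when placed by the BCSF mechanism, falls into each of the $S_X S_Y$ shards with probability exactly $1/(S_X S_Y)$. Since the X-coordinate is the SHA256 hash of the miner's PID, the random-oracle assumption delivers a uniform distribution over $[0, 2^{256}]$, so the X-region is uniform over $\{1, \ldots, S_X\}$. The Y-coordinate is the miner's bandwidth, and by construction of the separation points $y_1, \ldots, y_{S_Y-1}$ with $P(y_{k-1}, y_k) = 1/S_Y$, an honest miner sampled from the underlying bandwidth distribution falls into each Y-region with probability $1/S_Y$. Independence of X- and Y-coordinates then yields the product.

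Second, I would identify the correct effective pool $n$ of honest miners whose shard assignment is genuinely fresh from the adversary's viewpoint, splitting on $S_Y$ versus $1/(1-\rho)$. When $S_Y \geq 1/(1-\rho)$, a single Y-strip contains at most $1/S_Y \leq 1-\rho$ fraction of miners, which fits inside the adversary's corruption budget. A mildly adaptive adversary could therefore have concentrated corruption into a single Y-strip before the current phase, so I can only rely on the $\underline{\alpha}\rho N$ new honest miners that joined during the latest shard formation phase — their bandwidths are drawn freshly from the distribution after the adversary's present choices are locked in. When $S_Y < 1/(1-\rho)$, every Y-strip holds more than $1-\rho$ fraction of all miners, exceeding the total corruption budget $(1-\rho)N$, so no Y-strip can be fully adversarial and the entire pool of $\rho N$ honest miners contributes.

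Third, I would run the union bound by conditioning on one of the $n$ honest miners. That miner automatically covers its own shard, so it suffices to bound the probability that at least one of the remaining $S_X S_Y - 1$ shards is missed by the other $n - 1$ honest miners. Each such shard is independently missed with probability $(1 - 1/(S_X S_Y))^{n - 1} = ((S_X S_Y - 1)/(S_X S_Y))^{n - 1}$, so a union bound over these $S_X S_Y - 1$ shards yields exactly the two branches of $\varepsilon$ with $n - 1 = \underline{\alpha}\rho N - 1$ or $n - 1 = \rho N - 1$, respectively.

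The main obstacle I anticipate is rigorously justifying the second step, namely why only the newly joined honest miners count when $S_Y \geq 1/(1-\rho)$. This requires pinning down the exact interplay between the mildly adaptive adversary's delayed corruption capability, the fact that pre-existing honest miners' bandwidths are already public and possibly skewed relative to the base distribution, and the freshness guarantee on newly joining honest miners. Once that is handled, the uniformity argument in the first step and the union bound in the third step are routine.
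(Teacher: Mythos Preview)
Your proposal is essentially the same approach as the paper: argue that honest miners land uniformly in the $S_XS_Y$ shards, identify the effective pool of honest miners ($\rho N$ versus $\underline{\alpha}\rho N$), and apply a union bound to get $(m-1)\bigl(\tfrac{m-1}{m}\bigr)^{n-1}$ with $m=S_XS_Y$. The only cosmetic differences are that the paper packages the union bound as a multinomial-distribution calculation (Lemma~\ref{probability_of_honest_existence_lemma}) rather than your conditioning-on-one-miner trick, and the paper phrases the case split as ``static setting'' versus ``dynamic setting'' instead of tying it directly to the threshold $S_Y=1/(1-\rho)$; your Step~2 actually gives a more explicit rationale for that threshold than the paper does.
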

}

We prove this lemma in Appendix~\ref{honest_presence_section}, where we also show that even under an imperfect bandwidth estimation with small deviation, honest presence still holds in each shard except with a negligible error probability for overwhelmingly large $N$. 


{\rrrevise
We meticulously follow the selection strategy presented in Section~\ref{formation} to choose $S_X$ and $S_Y$, ensuring that Lemma~\ref{honest_presence_proof} holds--each shard contains at least one honest miner with high probability. Specifically, in a scenario with a sufficient number of new honest miners, i.e., $\underline{\alpha} \rho N > \lceil \frac{1}{1-\rho} \rceil - 1$, our shard partition mechanism ensures that each Y-region has approximately $\frac{\underline{\alpha} \rho N}{S_Y}$ new honest miners, which are uniformly distributed across $S_X$ shards. Therefore, each shard averages $\frac{\underline{\alpha} \rho N}{S_X S_Y}$ new honest miners, and Lemma~\ref{honest_presence_proof} holds with high probability, by Chebyshev's inequality. Conversely, in a scenario with few new honest miners, i.e., $\underline{\alpha} \rho N \leq \lceil \frac{1}{1-\rho} \rceil - 1$, we set $S_Y = \lceil \frac{1}{1-\rho} \rceil - 1$ and uniformly distribute all honest miners in each Y-region across $S_X$ shards. Similarly, each shard averages $\frac{\rho N}{S_X S_Y}$ honest miners, and Lemma~\ref{honest_presence_proof} holds with high probability.
}

\subsection{Consistency and Liveness}

Next, we demonstrate that given honest presence in each shard, Manifoldchain attains three basic security properties: Common Prefix (CP), Chain Quality (CQ), and Chain Growth (CG), formally defined in Appendix~\ref{security_property}. The CP property implies consistency, while the combined CQ and CG properties imply liveness. Following the tradition set by Nakamoto~\cite{nakamoto2008bitcoin}, we adopt a continuous-time model and represent PoW mining as a homogeneous Poisson point process. This model was also used in several subsequent influential works~\cite{sompolinsky2015secure,ren2019analysis,li2020continuous,dembo2020everything}. Let $\lambda$ be the total mining rate of the network, specifically denoting the number of blocks generated within one second. Honest block arrivals and adversarial block arrivals as two independent Poisson processes with rates $\rho \lambda$ and $(1-\rho)\lambda$, respectively. 
We use $\lambda_i$ to denote the mining rate of exclusive blocks within shard $i$. As inclusive blocks extend on the longest chains in all shards, there exists only one mining rate for inclusive blocks spanning all shards, denoted as $\lambda_s$. We denote $\rho_i$ as the honest participation ratio within shard $i$, and $\Delta_i$ as the maximum network delay observed within shard $i$. {\new Below we present the main security result of Manifoldchain.}

\begin{theorem}
\label{informal_cp_property}
{\new If there is at least one honest miner in each shard}, CP, CG, and CQ hold for Manifoldchain regardless of the adversary's attack strategy within each shard $i$, as long as 
\begin{equation}
\begin{split}\label{main_eq_pi}
p_i = \frac{\lambda_i \rho_i + m \lambda_s \rho}{\lambda_i + m\lambda_s} e^{-(\lambda_i + \lambda_s)\Delta_i} > \frac{1}{2},
\end{split}
\end{equation}

The CP property is hold except with a negligible error probability $\varepsilon(\kappa)$ bounded by
\begin{equation}
\begin{split}\label{main_error_probability}
\varepsilon(\kappa) \leq (2 + 2\sqrt{\frac{p_i}{1-p_i}})(4p_i(1-p_i))^{\kappa}.
\end{split}
\end{equation}

For any $\delta_i >0$, $\delta_i' > 0$, the chain growth rate $\mathtt{g}_i$ and chain quality rate $\mathtt{q}_i$ satisfy
\begin{equation}
\begin{split}
\mathtt{g}_i \geq& (1 - \delta_i)\frac{p_i(\lambda_s + \lambda_i)}{1 + \Delta_i(\lambda_s + \lambda_i)},\\
\mathtt{q_i} \geq& 1 - (1+\delta_i')\frac{1 + \Delta_i p_i\rho_i(\lambda_s + \lambda_i)}{p_i},
\end{split}
\end{equation}
with respective error probabilities of $\varepsilon(\delta_i)$ and $\varepsilon(\delta_i')$.
\end{theorem}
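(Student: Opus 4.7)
The plan is to adapt the continuous-time Nakamoto backbone analysis (in the style of Dembo et al.~\cite{dembo2020everything} and Pass et al.~\cite{pass2017analysis}) to the sharing-mining setting, by reducing the multi-shard problem to a single-shard race analysis that treats exclusive and inclusive blocks within a uniform framework. First I fix a shard $i$ and model the four relevant block arrival streams as independent Poisson processes: honest exclusive blocks at rate $\lambda_i \rho_i$, adversarial exclusive blocks at rate $\lambda_i(1-\rho_i)$, honest inclusive blocks (summed over all miners network-wide) at rate $m\lambda_s \rho$, and adversarial inclusive blocks at rate $m\lambda_s(1-\rho)$. The key observation, enabled by sharing mining, is that every inclusive block mined anywhere in the network extends shard $i$'s verified chain, whereas only exclusive blocks mined inside shard $i$ do so; the honesty fractions $\rho_i$ versus $\rho$ enter asymmetrically for the same reason, giving rise to the numerator $\lambda_i\rho_i + m\lambda_s\rho$ in the definition of $p_i$.

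Second, I would introduce the notion of a \emph{loner} honest block: an honest arrival (exclusive or inclusive) at time $t$ such that no other block extending the same verified predecessor arrives in the interval $(t-\Delta_i, t+\Delta_i)$. By the thinning and independence properties of Poisson processes, and noting that the total rate of arrivals competing for shard $i$'s tip is $\lambda_i+\lambda_s$, the rate of loner honest arrivals is at least $(\lambda_i\rho_i + m\lambda_s\rho)\,e^{-(\lambda_i+\lambda_s)\Delta_i}$. The adversary, by withholding and releasing selectively, effectively contributes at rate $\lambda_i(1-\rho_i)+m\lambda_s(1-\rho)$ without paying any delay penalty. Normalizing by the total useful mining rate $\lambda_i + m\lambda_s$ yields $p_i$ as the probability that a chain-relevant mining event is a loner honest one; the condition $p_i > 1/2$ is precisely the honest majority in this effective budget.

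Third, for Common Prefix I model the race between the loner honest chain and the adversarial private chain as a biased random walk with forward probability $p_i$ and backward probability $1-p_i$. The probability that an adversary starting $\kappa$ blocks behind ever catches up is a first-passage event for this walk; applying the reflection principle and summing over possible initial lags (as in Ren's tight Nakamoto analysis) yields the stated bound $(2 + 2\sqrt{p_i/(1-p_i)})(4p_i(1-p_i))^{\kappa}$, rather than the looser $((1-p_i)/p_i)^{\kappa}$. For Chain Growth, Chernoff concentration on the number of loner honest arrivals in a window of duration $T$ gives the rate $p_i(\lambda_s+\lambda_i)/(1+\Delta_i(\lambda_s+\lambda_i))$ up to a $(1-\delta_i)$ factor, where the denominator captures forking losses due to delay. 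For Chain Quality I compare, on the same window, honest loner contributions against the total number of adversarial arrivals, and bound the adversarial fraction by combining the CG lower bound with a Chernoff upper bound on adversarial blocks; rearranging yields the claimed $\mathtt{q}_i$ expression with deviation $\delta_i'$.

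The main obstacle is justifying that the additional mechanisms built on top of the backbone --- predictive mining, fork pruning, and the CMT sampling plus fraud-proof verification --- do not inflate the adversary's effective power, so that the reduction to the clean Poisson backbone above is legitimate. Specifically, I must show that any fork rooted at an unavailable or invalid block is pruned within at most $2\Delta_i$ time with overwhelming probability under the honest-presence guarantee of Lemma~\ref{honest_presence_proof}, so such forks cannot feed the adversarial random walk; and that allowing multiple-parent blocks under predictive mining adds only transient siblings that collapse back to a single verified tip within $\Delta_i$, as claimed in the forking-rate discussion of Section~\ref{sharing_mining}. A subtler point is verifying that when an inclusive block is mined, its essentially free propagation to other shards does not introduce additional effective delay; this follows because only the $\mathcal{O}(1)$-sized consensus header crosses shard boundaries, so the exponent $e^{-(\lambda_i+\lambda_s)\Delta_i}$ depends only on the intra-shard delay $\Delta_i$. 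Once these reductions are in place, the three properties follow by plugging $p_i$ into the standard continuous-time Nakamoto machinery.
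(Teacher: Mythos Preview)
Your approach is broadly sound and would lead to the same bounds, but it takes a genuinely different decomposition from the paper's. The paper does not use loner/Nakamoto-block analysis in the style of Dembo et al.; instead it introduces a \emph{severe execution environment} (Definition~\ref{severe_execution_environment}) in which every honest full block is delayed the full $\Delta_i$ and all in-shard honest mining during that delay is declared invalid. In that environment every abstract block is i.i.d.\ ``exactly honest'' with probability $p_i$, and the paper then proves, by an explicit block-counting argument across several case splits (Lemmas~\ref{adversarial_base} and~\ref{adv_count}, culminating in Theorem~\ref{private_mining_worse}), that private mining is the worst-case attack in that environment. The CP bound is then obtained not by a reflection argument but by decomposing the violation event into three pieces governed by the pre-target advantage $\mathcal{A}dv$, a binomial variable $B$, and the maximum $M$ of a post-target random walk, and applying an MGF/Chernoff computation with $e^v=p_i/(1-p_i)$. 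The CG and CQ proofs also route through the severe environment via a monotonicity lemma (Lemma~\ref{typical_faster_severe}).

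What your route buys is conceptual economy: if the loner framework can be made to apply, you avoid the case analysis for private-mining optimality. What the paper's route buys is that the severe-environment abstraction handles the Manifoldchain-specific complications (multiple parents, predictive mining, light vs.\ full blocks) in one stroke, and makes the reduction to a clean i.i.d.\ coin with parameter $p_i$ exact rather than an inequality. Two points you should tighten if you pursue your route: first, your loner definition uses a two-sided window $(t-\Delta_i,t+\Delta_i)$ but your thinning factor $e^{-(\lambda_i+\lambda_s)\Delta_i}$ is one-sided, and you should be explicit that only \emph{full} blocks (rate $\lambda_i+\lambda_s$), not light blocks, create forking collisions, since inclusive headers from foreign shards propagate essentially instantly. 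Second, and more importantly, you implicitly assume private mining is worst-case; Dembo et al.'s ``everything is a race'' theorem is stated for single-parent Nakamoto, so you would still owe an argument that it extends to the multiple-parent, predict-then-prune setting of sharing mining --- this is exactly the content of the paper's Theorem~\ref{private_mining_worse}, and it is the non-obvious step.
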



{\revise We provide the detailed proof in Appendix \ref{proof_of_security_properties} and present the roadmap of our proof in Fig.~\ref{fig:roadmap}.} Specifically, we shift from a standard to a \textit{severe execution environment}, as defined in Definition~\ref{severe_execution_environment}, where any delivered full block experiences the maximal network delay, and during the block delivery any mining operation is considered adversarial. Achieving security in severe environment is more challenging than in standard environment. We then establish by Theorem \ref{private_mining_worse} in Appendix \ref{severe_execution_environment_section}, that the private-mining attack is the optimal strategy {\revise in a severe execution environment}. Additionally, we calculate the upper bound on mining rates necessary to maintain CP property against the private-mining attack, serving as a constraint against all potential attacks {\revise in all environments}. Towards CG and CQ properties, we obtain the upper bounds of chain growth speed and honest block fraction based on these bounded mining rates. {\revise By leveraging Lemma~\ref{honest_presence_proof} in conjunction with the CP, CG, and CQ properties, we substantiate the proof of Theorem~\ref{informal_cp_property}.}

\begin{figure}
  \centering
  \includegraphics[width=7cm]{./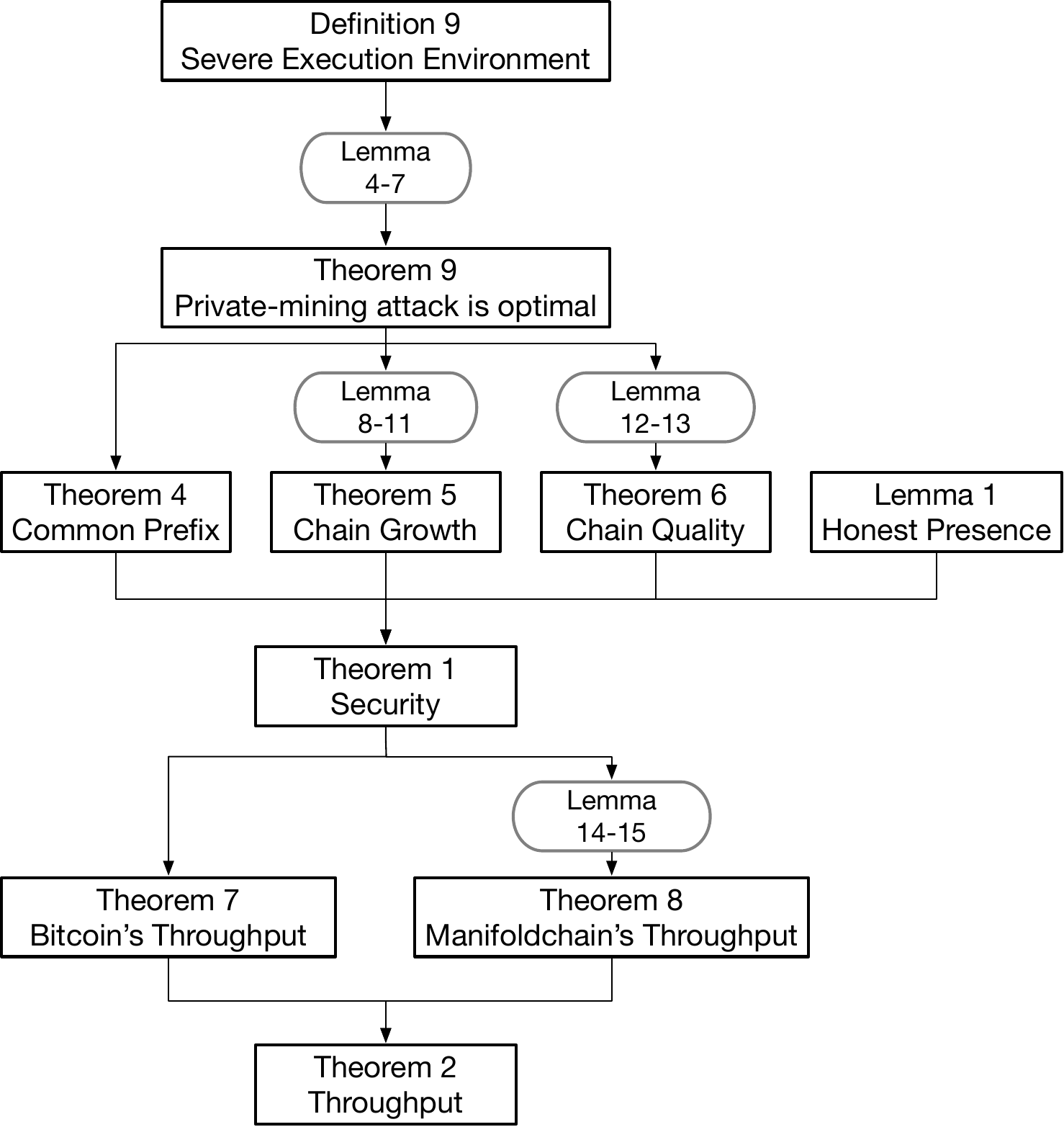}
  \vspace{-2mm}
  \caption{\revise The proof roadmap of Theorem~\ref{informal_cp_property} and~\ref{informal_tps_theorem}.}
  \label{fig:roadmap}
  \vspace{-5mm}
\end{figure}

This theorem suggests that shards experiencing lower delays can achieve faster mining rates compared to those with higher delays, while still maintaining the same error probability. Assuming all shards share the same $\kappa$, according to Eq.~\ref{main_error_probability}, there exists a one-to-one mapping between $\varepsilon(\kappa)$ and $p_i$. An increase in $\Delta_i$ also results in a decrease in $p_i$ and $\varepsilon(\kappa)$. Referring to Eq.~\ref{main_eq_pi}, the maximal values of $\lambda_s$ and $\lambda_i$ are constrained by $\varepsilon(\kappa)$. Therefore, shards with lower $\Delta_i$ can be configured with higher $\lambda_s + \lambda_i$.
With respect to the liveness, the longest chain in shard $i$ achieves an average chain growth rate of $\frac{p_i(\lambda_s + \lambda_i)}{1 + \Delta_i(\lambda_s + \lambda_i)}$ and an average adversarial block ratio of $\frac{1 + \Delta_i p_i\rho_i(\lambda_s + \lambda_i)}{p_i}$. Consequently, it can be deduced from this theorem that the longest chains in faster shards possesses higher chain growth rates and chain quality ratios over slower shards.

\subsection{Optimal Throughput}\label{optimal_throughput_section}

\noindent {\bf Throughput-security trade-off on bandwidth estimation.} 
A larger $S_Y$ helps to better distinguish between fast and slow nodes, thereby enhancing throughput. However, increasing $S_Y$ relies on a more accurate bandwidth estimation. A large estimation error derived from the gap between realistic bandwidth distribution and the estimated one can lead to unevenly distributed honest miners across Y-regions, with some Y-regions having a lower number of honest miners than the average value. This uneven distribution may lead to a larger error probability in Lemma~\ref{honest_presence_proof}, as shown in Appendix \ref{honest_presence_section}.

\noindent {\bf Optimal mining rate configuration.} A key question arises on how to optimally configure $\lambda_s$ and $\lambda_i$ to maximize throughput while maintaining security established by Theorem~\ref{informal_cp_property}. Assuming $\varepsilon(\kappa)$ is regulated to less than $\varepsilon$ (We do not consider $\varepsilon(\delta_i)$ and $\varepsilon(\delta_i')$ here because they depend on the Chernoff bound parameters $\delta_i$ and $\delta_i'$, which are unrelated to the protocol) and $\kappa$ is fixed to $\kappa'$. The network delay within each shard is computed based on the lowest bandwidth within that shard. Specifically, $\Delta_i = \frac{\mathtt{B}}{\underline{C}_i} + \Delta_p$, where $\mathtt{B}$ denotes the block size, $\underline{C_i}$ represents the lowest bandwidth in shard $i$, and $\Delta_p$ is a constant denoting the propagation delay. As the exact honest ratio in a specific shard is generally unknown, we consider the worst case where $\rho_i = 0$, $\forall i$.
Consequently, we construct the following optimization problem.
\begin{equation}\label{optimization}
    \begin{split}
        &\max_{\{\lambda_i\}_m, \lambda_s} \quad \sum_i^m(\lambda_i + \lambda_s)\\
    &\begin{array}{r@{\quad}r@{}l@{\quad}l}
    s.t. &p_i = \frac{ m \lambda_s \rho}{\lambda_i + m\lambda_s} e^{-(\lambda_i + \lambda_s)\Delta_i} > \frac{1}{2}, \\
        &(2 + 2\sqrt{\frac{p_i}{1-p_i}})(4p_i(1-p_i))^{\kappa'} \leq \varepsilon. \\
    \end{array}
    \end{split}
\end{equation}

The key challenge to solve this optimization problem lies in how to strike a balance between $\lambda_i$'s and $\lambda_s$, given that both contribute to the overall throughput. 
We present the algorithm to find the optimal solution for this problem and the key insight here, deferring the detailed proof to Appendix \ref{mfd_tps_proof}.
\begin{itemize}
    \item Miners in the slowest shard only mine inclusive blocks (i.e., $\lambda_j=0$). According to Theorem \ref{informal_cp_property}, there exist trade-offs between $\lambda_i$/$\lambda_s$ and the error probability $\varepsilon(\kappa)$ within each shard. Crucially, the trade-off between $\lambda_s$ and $\varepsilon(\kappa)$ is more favorable than that between $\lambda_i$ and $\varepsilon(\kappa)$. In other words, for the same increment in $\lambda_s$ or $\lambda_i$, the latter results in a more pronounced increase in $\varepsilon(\kappa)$. Therefore, with a bounded $\varepsilon$, $\lambda_s$ can be set higher than $\lambda_i$. In shard $j$, $\Delta_j = \Delta$. Given all other known parameters, the maximum value of $\lambda_s$, denoted as $\lambda_s'$, is easily obtained through binary search.
    \item In other shard $i$ where $\Delta_i < \Delta$, the first constraint of (\ref{optimization}) is relaxed, allowing us to set a non-zero $\lambda_i'$ to achieve an improved throughput.
\end{itemize}

It is noteworthy that the slowest shard attains the same throughput as Bitcoin (proved in Appendix \ref{mfd_tps_proof}). Denote the throughput of Bitcoin as $T$,  we present the total throughput of Manifoldchain as follows: 

\begin{theorem}[Informal]\label{informal_tps_theorem}
    Assuming $\forall i, m \gg \frac{\lambda_i}{\lambda_s}$, in a scenario where Bitcoin achieves a throughput of $T$, Manifoldchain attains a throughput of $\frac{T\Delta}{\rho}\sum_{i}^m\frac{\rho_i}{\Delta_i}$, while maintaining the same level of security as Bitcoin.
\end{theorem}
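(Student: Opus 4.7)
My plan is to solve the optimization problem (\ref{optimization}) explicitly under the simplification $m \gg \lambda_i/\lambda_s$, which reduces the prefactor $\frac{m\lambda_s\rho}{\lambda_i + m\lambda_s}$ to $\rho$ and rewrites the first security constraint as the Nakamoto-style relation $p_i \approx \rho\, e^{-(\lambda_s+\lambda_i)\Delta_i} \ge p^{*}$, where $p^{*}$ is fixed by the target error bound $\varepsilon$ and the confirmation depth $\kappa'$ through (\ref{main_error_probability}). The per-shard constraint then collapses to $(\lambda_s+\lambda_i)\Delta_i \le c$ for a shard-independent constant $c := -\ln(p^{*}/\rho)$, and the only global coupling across shards is through the shared variable $\lambda_s$.

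First I would pin down $\lambda_s$ from the slowest shard. Since $\lambda_s$ contributes to every shard while $\lambda_i$ helps only shard $i$, the objective $\sum_i(\lambda_i+\lambda_s)$ strictly prefers to shift throughput from $\lambda_j$ into $\lambda_s$ as long as the slowest shard's constraint $(\lambda_j+\lambda_s)\Delta \le c$ is binding. I would therefore set $\lambda_j=0$ in the slowest shard, saturate the constraint, and use a binary search on (\ref{main_error_probability}), exploiting its monotonicity in $p_i$, to extract $\lambda_s' = c/\Delta$. An identical derivation applied to unsharded Nakamoto consensus under delay $\Delta$ gives Bitcoin's optimal mining rate $T = c/\Delta$, identifying $\lambda_s' = T$ and $c = T\Delta$.

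For every faster shard $i$, the constraint $(\lambda_i+\lambda_s')\Delta_i \le c$ is strictly slack at $\lambda_i=0$, so I would saturate it with $\lambda_i' = c/\Delta_i - c/\Delta$. The per-shard throughput is then $\lambda_s'+\lambda_i' = c/\Delta_i = T\Delta/\Delta_i$, and summing over the $m$ shards yields $T\Delta\sum_i 1/\Delta_i$. The extra weight $\rho_i/\rho$ in the theorem statement I would recover either by retaining a first-order correction in the prefactor of (\ref{main_eq_pi}) that still carries a $\rho_i$ dependence, or by passing from raw block-generation rate to confirmed-honest-block throughput via the chain quality rate $\mathtt{q}_i$ from Theorem~\ref{informal_cp_property}, which contributes a factor proportional to the local honest fraction $\rho_i$.

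The main obstacle will be rigorously substantiating the ``$\lambda_s$ is cheaper than $\lambda_i$'' claim that underlies setting $\lambda_j = 0$ in the slowest shard. Concretely, at fixed error budget $\varepsilon(\kappa)\le \varepsilon$, I need $|\partial p_i/\partial \lambda_s| < |\partial p_i/\partial \lambda_i|$ from (\ref{main_eq_pi}) for every shard, keeping track of the coupled effect that $\lambda_s$ appears in both the exponent and the prefactor. This implicit-differentiation comparison is routine once the structure is understood but must hold uniformly across shards. A secondary obstacle is verifying self-consistency with the hypothesis $m \gg \lambda_i'/\lambda_s' = \Delta/\Delta_i - 1$, which amounts to a mild regime condition that the shard count grows faster than the bandwidth heterogeneity ratio; without it, the prefactor $\frac{m\lambda_s\rho}{\lambda_i + m\lambda_s}$ cannot be replaced by $\rho$ and the clean separable structure of the optimization breaks down.
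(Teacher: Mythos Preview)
Your approach is essentially the paper's: fix $\lambda_s$ from the slowest shard by setting $\lambda_j=0$ there, then saturate each remaining constraint to obtain $\lambda_i$, and compare to Bitcoin's optimal rate under the identical error budget. The paper formalizes the ``$\lambda_s$ is cheaper than $\lambda_i$'' step as Lemma~\ref{appendix_max_tps_lemma}, proving by contradiction that any $\{\lambda\}'$ with $\lambda_s'<\lambda_s^*$ yields strictly smaller total throughput, so your plan to justify this via implicit differentiation of (\ref{main_eq_pi}) is fine but can be replaced by that direct argument.

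There is one real gap: your two proposed routes to the $\rho_i/\rho$ weighting are both off. Neither a first-order correction to the prefactor (which is computed under the worst case $\rho_i=0$ and so carries no $\rho_i$) nor chain quality is where the factor comes from. In the paper, throughput is \emph{defined} as the honest-block rate: Bitcoin's $T=\rho\lambda_{btc}^*=\tfrac{\rho}{\Delta}\log\tfrac{\rho}{\delta}$ already carries a factor $\rho$ (Theorem~\ref{btc_tps}), and each shard's contribution is $\rho\lambda_s+\rho_i\lambda_i$, with inclusive blocks weighted by the global honest fraction $\rho$ and exclusive blocks by the local $\rho_i$. With your $c=\log(\rho/\delta)$ this gives $c=T\Delta/\rho$ rather than $c=T\Delta$, and summing $\rho\lambda_s^*+\rho_i\lambda_i^*$ over shards (using $\sum_i\rho_i=m\rho$) collapses exactly to $\tfrac{T\Delta}{\rho}\sum_i\rho_i/\Delta_i$. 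Once you adopt this throughput convention the computation is purely algebraic and needs no appeal to $\mathtt{q}_i$.
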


{\revise 
We defer the detailed proof to Appendix \ref{scalability_analysis} and present the proof roadmap in Figure~\ref{fig:roadmap}. Specifically, we calculate the maximum throughput of Bitcoin and Manifoldchain while maintaining identical error probabilities as stated in Theorem~\ref{informal_cp_property}. By benchmarking Manifoldchain against Bitcoin, we substantiate the proof of Theorem~\ref{informal_tps_theorem}.
}.

Given $m \gg \frac{\lambda_i}{\lambda_s}$, in a general scenario where $\forall \, \rho_i=\rho$, Manifoldchain can achieve a throughput of nearly $\frac{\Delta}{\Delta_i}T$ within shard $i$. In an extreme scenario where the adversary concentrates its hashing power within some shards, and in other shards $\rho_i = 1$, Manifoldchain can achieve a throughput of $\frac{\Delta}{\rho\Delta_i}T$.


\section{Experiments}

We implemented a prototype instantiation of a Manifoldchain client in about 15,000 lines of Rust code~\cite{codeAnon}. In our implementation, a Manifoldchain client employs two parallel threads—one for mining (called \textit{Miner}) and another for delivering incoming blocks from the network (called \textit{Networker}). Both threads have access to two core data structures: \textit{Mempool} and \textit{Multichain}. Below, we outline the specific functionalities of each module:

\begin{itemize}
    \item \textbf{Mempool} receives verified transactions and testimonies from the network module and pairs testimonies with their associated transactions. Upon requested by the Miner module, returns matched transactions and testimonies.
    \item \textbf{Multichain} provides the interface for reading and writing blockchains across all shards, with each individual blockchain maintaining a ledger for a single shard. 
    \item \textbf{Miner} retrieves transactions and testimonies from the Mempool, conducts double-spending check, and solves PoW to generate a valid block. Once a valid block is created, it is inserted into Multichain, and also immediately relayed to the Networker module for broadcasting.
    \item \textbf{Networker} manages inter-node communications, employing the Gossip protocol to broadcast and deliver messages. It handles three types of messages: transactions, testimonies, and blocks. Upon receiving messages, it validates their validity before further processing.
\end{itemize}

With this prototype, we conducted experimental evaluations on the following two different testbeds:

\begin{itemize}
    \item \textit{Real-world testbed}: We deployed Manifoldchain on Amazon Elastic Compute Cloud (EC2) to evaluate its performance in a genuine geo-distributed environment. Each miner operates within an EC2 $\mathtt{t3.medium}$ instance outfitted with 2 CPU cores, 4GB of RAM, and a 20GB NVMe SSD. {\new Importantly, these EC2 instances are positioned in 10 major cities geographically distributed worldwide, spanning Sydney, London, Virginia, California, Canada, Ireland, Mumbai, Sao Paulo, Stockholm, and Tokyo.} We deploy Manifoldchain in this real-world testbed to evaluate its actual throughput and latency.
    \item \textit{Simulated testbed}: Due to the complexity of the real-world testbed environment, we additionally deploy Manifoldchain in a simulated testbed to obtain an accurate and reproducible evaluation of its throughput under various bandwidth settings, while maintaining other configurations unchanged. {\new Specifically, we operate a Manifoldchain system on a single machine with 4 CPU cores, 16GB of RAM, and a 100GB NVMe SSD. We simulate the miners with different processes, adjusting the network bandwidths and delays using $\mathsf{tc}$ commands.}    
   With this fully controllable tested, we could easily and accurately evaluate Manifoldchain's performance under various network scenarios. 
\end{itemize}


We perform experiments on these testbeds to answer the following questions.
\begin{itemize}
    \item What is the throughput and the latency of Manifoldchain in a realistic deployment?
    \item Is Manifoldchain able to achieve better throughput compared with the SOTA full sharding protocol?
    \item How does the system scale to more miners and higher bandwidths?
\end{itemize}

{\new To our best knowledge, Monoxide is the only full sharding protocol offering a comparable security level to ours, i.e., 1/2 fault tolerance. Other permissionless sharding protocols, like Elastico, Omniledger, and RapidChain, have fault tolerance less than 1/3 or even 1/4, rendering direct experimental comparisons unfair. We compare with Monoxide to highlight our primary motivation: the throughput can be boost by clustering miners with similar bandwidths.} As Monoxide is not open source, we implement it in Rust, and choose the UTXO model instead of its original account/balance model as the transaction model. In this case, Manifoldchain and Monoxide utilizes the same verification scheme to verify both the domestic-txs and cross-txs. Moreover, Manifoldchain and Monoxide have different shard formation mechanisms. Manifoldchain employs the innovative BCSF mechanism proposed in this paper, while Monoxide employs the standard USF mechanism. We evaluate performance of Monoxide on the same testbed as Manifoldchain. In all experiments, all miners share a common block size of approximately 547.14KB (with each block containing 2048 transactions) and maintain a confirmation depth of $\kappa=6$. The mining rates across shards are configured by setting respective mining difficulties.

\subsection{Results on Amazon EC2}\label{tps_real_testbed}

{\bf Throughput.} Our first experiment measures the throughput of Manifoldchain on the realistic testbed. We run Manifoldchain clients on 50 EC2 instances, each hosting a single miner. These instances have been configured with various bandwidth capacities. The available bandwidths are ${5, 10, 20, 40, 60}$ Mbps, with each bandwidth being adopted by 10 miners. We distribute these 50 miners into 5 shards, with 10 miners in each shard. As Manifoldchain and Monoxide adopts different shard formation mechanisms, they exhibit different bandwidth distributions across shards. Specifically, 

\begin{itemize}
    \item Monoxide adopts USF mechanism:
    \begin{itemize}
        \item Shard 0: $\{5, 5, 10, 10, 20, 20, 40, 40, 60, 60\}$.
        \item Shard 1: $\{5, 5, 10, 10, 20, 20, 40, 40, 60, 60\}$.
        \item Shard 2: $\{5, 5, 10, 10, 20, 20, 40, 40, 60, 60\}$.
        \item Shard 3: $\{5, 5, 10, 10, 20, 20, 40, 40, 60, 60\}$.
        \item Shard 4: $\{5, 5, 10, 10, 20, 20, 40, 40, 60, 60\}$.
    \end{itemize}
    \item Manifoldchain adopts BCSF mechanism:
    \begin{itemize}
        \item Shard 0: $\{5, 5, 5, 5, 5, 5, 5, 5, 5, 5\}$.
        \item Shard 1: $\{10, 10, 10, 10, 10, 10, 10, 10, 10, 10\}$.
        \item Shard 2: $\{20, 20, 20, 20, 20, 20, 20, 20, 20, 20\}$.
        \item Shard 3: $\{40, 40, 40, 40, 40, 40, 40, 40, 40, 40\}$.
        \item Shard 4: $\{60, 60, 60, 60, 60, 60, 60, 60, 60, 60\}$.
    \end{itemize}
\end{itemize}

\begin{figure}
  \centering
  \includegraphics[width=8cm]{./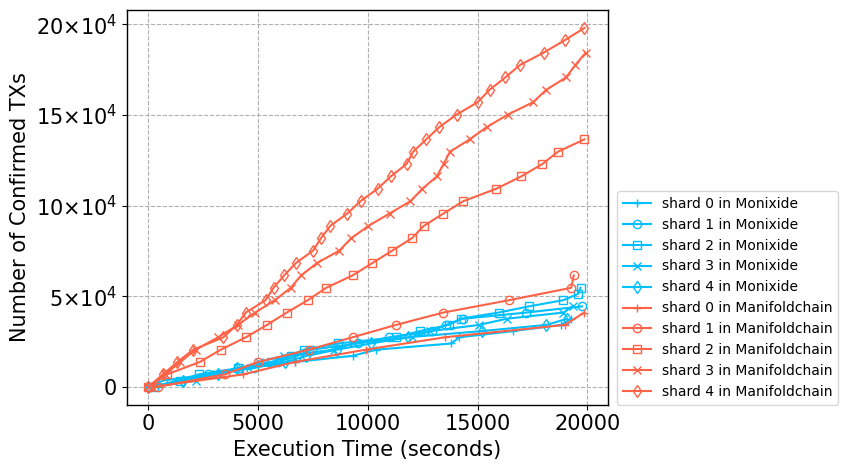}
  \vspace{-4mm}
  \caption{Throughputs of Manifoldchain and Monoxide in distinct shards. Comprising stragglers, all shards in Monoxide and the slowest shard in Manifoldchain achieves nearly the same throughput. Other shards in Manifoldchain, free from stragglers, achieve faster throughput.}
  \label{Fig:exper1}
  \vspace{-3mm}
\end{figure}

We set the mining rates $\lambda_s$ and $\lambda_i$ within each shard by solving Optimization~\ref{optimization}, with the same target error probability for Manifoldchain and Monoxide. The comparable security levels of these two sharding protocols are evidenced by their similar forking rates, both measured at less than 3\% in the experiment. As shown in Figure~\ref{Fig:exper1}, across all shards, Manifoldchain achieves an average throughput of 30.41 tx/sec, while Monoxide only achieves an average throughput of 10.62 tx/sec. The gain in throughput of Manifoldchain is attributed to improved performance in the fast shards where miners are not compromised by stragglers: the throughput in the fastest shard is 5.11 times that of the slowest shard.



\begin{figure}
  \centering
  \includegraphics[width=6cm]{./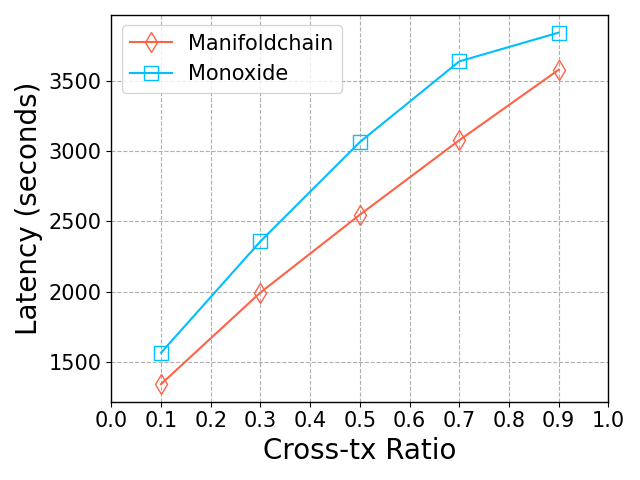}
  \vspace{-5mm}
  \caption{Average transaction latency for different cross-tx ratios. Manifoldchain consistently exhibits lower latency than Monoxide. As the cross-tx ratio increases, latency rises in both Manifoldchain and Monoxide.}
  \label{Fig:exper2}
  \vspace{-5mm}
\end{figure}

\noindent {\bf Latency.} Under the same parameter setting, we evaluate the confirmation latency of Manifoldchain and compare it with Monoxide. We manually generate domestic-tx as well as cross-txs and broadcast them within the corresponding shards. A domestic-tx is randomly assigned to one shard, while a cross-tx is randomly assigned to four shards. Setting $\kappa = 6$ across all shards, we evaluate the average latency of Manifoldchain and Monoxide over all shards under different cross-tx ratios ranging from $0.1$ to $0.9$. Figure \ref{Fig:exper2} shows that Manifoldchain achieves an overall lower latency than Monoxide. It is because that Manifoldchain generally has faster mining rates, and reaches the target confirmation depth with less time. As cross-txs rely on confirmations of all input-txs and output-txs which may locate in slow shards, their confirmation time is longer than that of domestic-txs. Consequently, as the cross-tx ratio increases, the average latency also rises.

\subsection{Results on Simulated Testbed}
The following experiments are conducted on our simulated testbed, to demonstrate the horizontal and vertical scalability of Manifoldchain, and compare them with Monoxide. 

\noindent {\bf Horizontal scalability.} 
We set the size of each shard to 5 and progressively increase the number of shards $m$. To simulate a scenario with varying bandwidths, we have 5 miners configured with a bandwidth configuration of $\{5, 10, 20, 40, 60\}$ Mbps. For each increment in $m$, we add 5 miners with this bandwidth configuration. Subsequently, we measure the maximum throughput achieved by Manifoldchain and Monoxide under varying values of $m$. 

\begin{figure}
  \centering
  \includegraphics[width=6cm]{./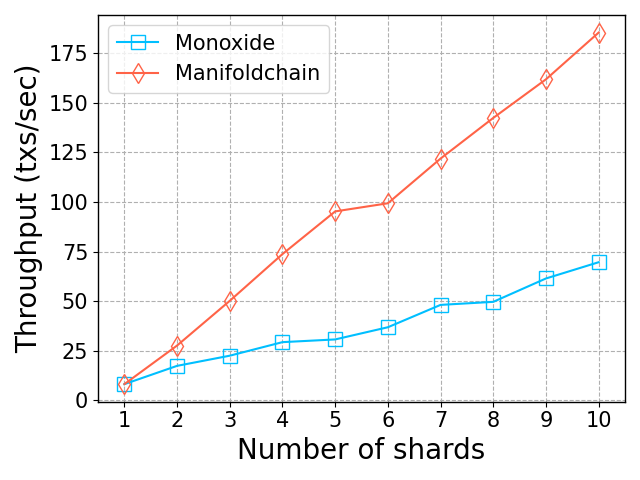}
  \vspace{-4mm}
  \caption{Throughput scaling with the growing number of shards. Manifoldchain demonstrates a more rapid increase in throughput compared with Monoxide.}
  \label{Fig:exper3}
  \vspace{-3mm}
\end{figure}

Fig. \ref{Fig:exper3} demonstrates the horizontal scalability of Manifoldchain and Monoxide. Manifoldchain achieves an average increase in throughput of 19.68 tx/sec for each increment in $m$, while Monoxide can only achieve an average increase in throughput of 6.83 tx/sec under the same increment. The reason why Manifoldchain achieves better horizontal scalability is that it fully utilizes the fast miners' bandwidths through BCSF mechanism for each shard increment. In Monoxide, the newly joined miners contain both stragglers and fast miners. The stragglers are uniformly distributed across shards. Therefore, regardless of the shard to which the fast miners are allocated, they are compromised by the presence of stragglers. 
In sharp contrast, in Manifoldchain, the stragglers among the new miners are assigned to the slow shards, while the fast miners are allocated to the fast shards, admitting higher mining rates. Therefore, the throughput increase results from each shard increment depends on the overall bandwidths configured by all the newly joined miners, which is larger than that in Monoxide.

\noindent {\bf Vertical scalability.} 
The vertical scalability reflects in how the throughput scales with the increase in the bandwidths of normal miners while considering the presence of stragglers. We simulate this scenario by sampling bandwidths of stragglers and normal miners from two distinct normal distributions. Specifically, there are a total of 25 miners, with 20 being normal miners and the remaining 5 miners being stragglers. The stragglers follow a normal distribution $\mathcal{N}_1(10, 2)$, and the normal miners follow a normal distribution $\mathcal{N}_2(\mu, 7)$, where $\mu$ varies from $35$ to $54$. These 25 miners are distributed across 5 shards, with 5 miners in each shard. In this scenario, we evaluate the maximum throughput achievable by Manifoldchain and Monoxide under varying values of $\mu$. 

\begin{figure}
  \centering
  \includegraphics[width=6cm]{./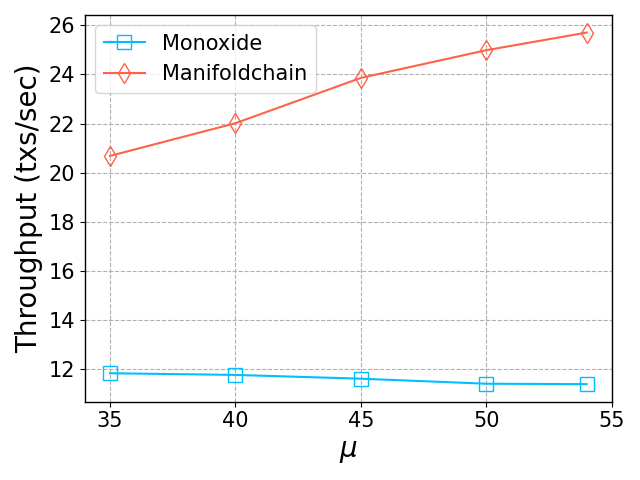}
  \vspace{-5mm}
  \caption{Throughput scaling with the average bandwidth of normal miners ($\mu$). Manifoldchain achieves higher throughput as $\mu$ increases, whereas Monoxide maintains nearly constant throughput across varying $\mu$.}
  \label{Fig:exper4}
  \vspace{-5mm}
\end{figure}

As shown in Figure~\ref{Fig:exper4}, 
Manifoldchain achieves an average increase in throughput of 1.25 tx/sec for each increment in $\mu$, while Monoxide merely achieves a constant throughput under different $\mu$. The absence of vertical scalability in Monoxide can be attributed to the presence of stragglers within each shard. Since the stragglers are uniformly distributed across shards, they compromise nearly all the normal miners. Despite the increase in the bandwidths of normal miners, the mining rates cannot be correspondingly increased. 
In contrast, normal miners in Manifoldchain are not compromised by the presence of stragglers. If the bandwidths of normal miners increase, they can be configured with a higher mining rate to achieve higher throughput while keeping the error probability unchanged. 

\section{Conclusion and Discussions} \label{discussion}
{\new We propose Manifoldchain, a permissionless full sharding protocol which boost vertical scalability within each shard via clustering miners with similar bandwidths. This uneven shard partition may introduce a new vulnerability: the adversary could concentrate hashing power in one shard to establish an adversarial majority. To counter this attack, we offer sharing mining to diffuse the honest hashing power to the entire network, thereby ensuring each shard attains the same level of security as an unsharded blockchain. This design positions Manifoldchain as the most robust sharding protocol, achieving optimal fault tolerance within each shard: security is upheld as long as there is at least one miner in each shard. 
Besides, we provide a tight security analysis that guides us to set the optimal mining rates across shards to maximize throughput. Furthermore, we implement Manifoldchain, 
and evaluate its performance on both realistic and simulated testbeds. The experimental results not only validate our theoretical assertions but also illustrate that Manifoldchain delivers a notable improvement in throughput over SOTA full sharding protocols.}

\noindent {\bf Adapting to non-PoW consensus}. Manifoldchain's core design principle is versatile, enabling straightforward extension to non-PoW sharding protocols, notably Proof-of-Stake (PoS) and permissioned settings. In PoS systems, uneven stake distribution poses a risk of adversarial dominance within shards. Sharing mining can mitigate this by allowing miners to use a single coin to mine across shards, thus dispersing their power to secure potentially vulnerable shards. On the other hand, incorporating sharing mining into permissioned sharding protocols resemble the design of \textit{shared security} in the Cosmos ecosystem~\cite{InterchainSecurity}, where different Cosmos chains share a common set of validators. We defer a detailed comparison and analysis to future work.

\noindent {\bf Achieving optimal vertical scalability}. Manifoldchain improves horizontal and vertical scalability by clustering nodes with similar bandwidth within the same shard. This idea could be synergistically integrated with protocols optimizing vertical scalability, such as Prism~\cite{DBLP:conf/ccs/BagariaKTFV19} and recent DAG-based BFT protocols~\cite{danezis2022narwhal,keidar2021all,spiegelman2022bullshark,spiegelman2023shoal}. These protocols achieve high throughput, approaching the network capacity, by decoupling block proposal and transaction validation. Incorporating these protocols as the intra-shard consensus, Manifoldchain stands to optimize both vertical and horizontal scalability. Specifically, integrating Prism into Manifoldchain is straightforward: the proposer blocks in Prism can be substituted with exclusive/inclusive blocks, following the same mining and verification processes as Manifoldchain. A detailed analysis of this integration is earmarked for future exploration.

\section{Acknowledgement}

This work is supported in part by the National Nature Science Foundation of China (NSFC) Grant 62106057, the Fundamental Research Funds for the Central Universities (Grant No. 2242024k30059), a gift from Stellar Development Foundation, and the Guangzhou-HKUST(GZ) Joint Funding Program (No. 2024A03J0630). We are also grateful to Lei Yang (MIT) for his valuable suggestions on our evaluation. 



%



\bibliographystyle{unsrt}
\bibliography{ref}

\newpage

\appendices

\section{Pseudocode of Proof-of-Work}\label{pseudocode_pow}

\begin{algorithm}
\caption{Proof of Work}
\label{alg:pow}
\begin{algorithmic}[1]
\Function{$\textbf{PoW}^{\sigma}$}{$\mathtt{hash}$, $\mathtt{info}$, $\mathtt{nonce}$}
    \State $\mathcal{B} = \text{none}$
    \State $h = \mathcal{G}(\mathtt{hash}, \mathtt{info})$
    \If{$\mathcal{H}(h, \mathtt{nonce}) < \sigma$}
        \State $\mathcal{B}  = \left< \mathtt{hash}, \mathtt{info}, \mathtt{nonce}\right>$
    \EndIf
    \State \textbf{Return} $\mathcal{B}$
\EndFunction
\\
\Function{$\mathbf{PoW}^{\sigma}$.verify}{$\mathcal{B}_{old}, \mathcal{B}_{new}$}
    \State $\left<\mathtt{hash}, \mathtt{info}, \mathtt{nonce}\right> = \mathcal{B}_{new}$
    \State $\left<\mathtt{hash}', \mathtt{info}', \mathtt{nonce}'\right> = \mathcal{B}_{old}$
    \State $\mathtt{hash}'' = \mathcal{H}(\mathcal{G}(\mathtt{hash}', \mathtt{info}'), \mathtt{nonce}')$
    \If{$\mathtt{hash}'' \neq \mathtt{hash}$ or $\mathtt{hash}'' > \sigma$}
        \State \textbf{Return} $\mathtt{false}$
    \EndIf
    \State \textbf{Return} $\mathtt{true}$
\EndFunction
\end{algorithmic}
\end{algorithm}

{\revise
\section{Adaptive Adversary}\label{adaptive_adversary_appendix}

Adaptive adversaries are further categorized into three types: 
\begin{enumerate}
    \item \textit{Mildly} adaptive~\cite{kokoris2018omniledger}: When the adversary requests to corrupt a miner, the corruption occurs after a certain delay.
    \item \textit{Weakly} adaptive~\cite{DBLP:conf/tcc/WanXDS20, DBLP:journals/dc/KlonowskiKM19}: When the adversary requests to corrupt a miner, the corruption occurs after the miner sends all outgoing messages. Additionally, any messages already sent by the miner cannot be erased by the adversary.
    \item \textit{Strongly} adaptive~\cite{DBLP:conf/tcc/WanXDS20, DBLP:journals/dc/AbrahamCDNPRS23}: When the adversary requests to corrupt a miner, the corruption occurs immediately. Furthermore, any messages sent by the miner before the corruption that have not yet been arrived can be erased by the adversary.
\end{enumerate}
}

\section{Hashing Power Splitting Attack}\label{HPSA}

We define the hashing power splitting attack (HPSA) as follows:

\begin{definition}
(Hashing Power Splitting Attack.) Adversary mines an exclusive block with an invalid transaction block and send the exclusive block to other shards, splitting the hashing power of honest miners. 
\end{definition}

As depicted in Fig \ref{fig:split_attack}, a corrupted miner within shard $1$ mines an exclusive block $3$, which contains an invalid transaction block. In the context of initial Sharing mining, only the block headers are broadcast among all shards. Consequently, a miner can only verify the validity and availability of transactions that belong to their respective shard. Within shard $1$, this exclusive block is considered invalid. Nevertheless, the adversary transmits this exclusive block to shard $2$, where it is considered valid, as miners in shard $2$ do not verify the validity of the corresponding transaction block. When a inclusive block $5$ is subsequently mined atop block $3$, miners in shard $1$ refuse to accept this chain, opting instead to continue mining an alternative chain consisting of blocks $2$ and $4$. Under this attack scenario, the hashing power of honest miners is split since honest miners within shard $1$ and honest miners outside of shard $1$ are working on different chains.

\begin{figure}
  \centering
  \includegraphics[width=8cm]{./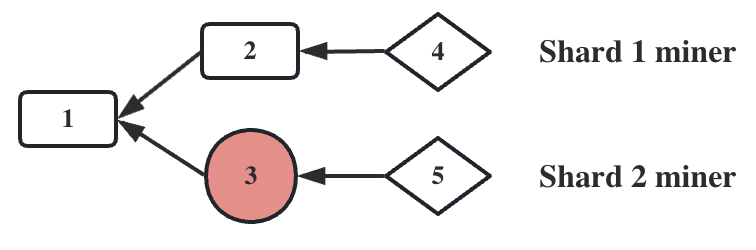}
  \caption{Hashing power splitting attack. The adversary mines an exclusive block 3 with an associated invalid transaction block in shard 1. The exclusive block is deemed invalid in shard 1 but regarded as valid in other shards.}
  \label{fig:split_attack}
\end{figure}

Under the worst-case scenario, HPSA could lead to a consensus failure in a shard with an adversarial majority. This attack entails an adversary with overwhelming hashing power mining a longer chain that consists of all invalid transaction blocks and making it public to the global view. In this scenario, honest miners would consider the adversarial chain as valid because they are unable to verify the availability and validity of transactions since only the block headers are accessible. Therefore, other miners across shards cannot access and collaborate on the susceptible honest chain in the source shard, resulting in their inability to contribute their hashing power to the honest miners in that shard. Consequently, without honest majority and validity/availability verification, the adversary could arbitrarily cause double-spending failures.


\section{Transaction  Validity \& Block Availability Verification}\label{validity_availability_verification}

\textbf{Transaction validity verification}. In the context of a target shard where adversary wants to split the hashing power, the miners within the target shard are referred as ``in-shard miner'', while miners outside the target shard are referred as ``out-shard miners''. Out-shard miners rely on at least one honest in-shard miner to verify the data validity of a transaction block. In particular, an honest in-shard miner sends a fraud proof to other out-shard miners, facilitating their verification of the transaction block for formal correctness and absence of double spending issues. The fraud proof, denoted as $\mathtt{proof\_of\_invalidity}$, is formatted as follows:

\begin{equation}
\begin{split}\label{pow_eq}
\mathtt{proof\_of\_invalidity} = &\\
\{(\mathtt{tx}_0, \mathtt{block\_hash}_0,& \mathtt{tx\_merkle\_proof}_0), \\
(\mathtt{tx}_1, \mathtt{block\_hash}_1,& \mathtt{tx\_merkle\_proof}_1), \\
\mathtt{fault\_type}&\}
\end{split}
\end{equation}

With respect to double spending fault, the fraud proof reveals the invalidity of a transaction block by proving the existence of two conflicting transactions that use the same UTXOs as inputs. $\mathtt{tx}_0$ represents a transaction within the invalid transaction block, with $\mathtt{block\_hash}_0$ denoting the hash value of the corresponding consensus block. The inclusion of $\mathtt{tx}_0$ in the transaction block can be verified through $\mathtt{tx\_merkle\_proof}_0$. Moreover, $\mathtt{tx}_0$ is characterized as a transaction conflicting with $\mathtt{tx}_1$, and correspondingly, $\mathtt{block\_hash}_0$ stands as the hash value associated with its corresponding consensus block. The presence of $\mathtt{tx}_1$ within the transaction block can be substantiated by employing $\mathtt{tx\_merkle\_proof}_1$. In the context of incorrect formatting fault, i.e., certain signatures are improperly generated, the second component of the fraud proof remains void, and only the proof of inclusion of $\mathtt{tx}_0$ inclusion is only requisite. Upon receiving a fraud proof, out-shard miners assess its legitimacy, and mark the designated block as verified if the fraud proof is correct, otherwise mark the block as invalid.

\textbf{Data availability verification}. We utilize Coded Merkle Tree (CMT) proposed by Yu et al. \cite{DBLP:conf/fc/YuSLAKV20} to facilitate the data availability verification across shards. This data structure enables any miner to verify the full availability of a transaction block through the mere download of a block hash commitment with a size of $\mathcal{O}(1)$ byte, combined with a random sampling of $\mathcal{O}(\log(B))$ bytes. Therefore, out-shard miners can validate the data availability of an in-shard transaction block by requesting just $\mathcal{O}(\log(B))$ samples. Specifically, different miners' behaviours are summarized as follows:

\begin{enumerate}
\item Block producer (in-shard miner): (a) Generates a consensus block (which may be either an exclusive or inclusive block) containing CMT and broadcasts this block to all miners, while broadcasting the associated transaction block exclusively to in-shard miners. (b) Responds to sample requests received from out-shard miners.
\item Out-shard miner: (a) Upon receiving a new consensus block from other shards, initiates separate, anonymous, and intermittent sampling requests with replacement directed at in-shard miners who claim the block is available. (b) Broadcasts received samples to all connected in-shard miners. (c) Assumes block availability if all requested samples are received. (d) Rejects the consensus block if any requested samples are not received within the bounded network delay $\Delta$. (e) Rejects the block if any error proofs, such as $\mathtt{incorrect\_coding\_proof}$ or $\mathtt{bad\_code\_proof}$, are received.
\item Other in-shard miner: (a) Upon receiving valid samples, attempts to recover the data block through both downloading the associated transaction block from other in-shard miners and collecting samples forwarded by out-shard miners. (b) Rebroadcasts received valid samples. (c) Sends corresponding proof and rejects the block if adversarial behavior, such as incorrect coding or bad code, is detected. (d) Declares the availability of the block to all other miners and responds to sample requests from out-shard miners upon successful receipt or full decoding and verification of a data block.
\end{enumerate}

{\revise
\section{Cross-shard Transaction Atomicity}\label{cross_tx_atomicity}

In this section,we prove the cross-shard transaction atomicity of Manifoldchain. To facilitate the proof, we present the following lemma:

\begin{lemma}\label{eventual_confirmation}
In Manifoldchain, all input-txs, output-txs, refund-txs created by honest users will eventually be confirmed.
\end{lemma}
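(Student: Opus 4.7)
The plan is to reduce the eventual confirmation of cross-shard transactions to the intra-shard liveness guarantees established by Theorem~\ref{informal_cp_property}, and then to propagate this liveness through the three stages (expenditure, receipt, refund) of the asynchronous commitment protocol described in Section~\ref{cross-tx method}. The core tool is the fact that, in every shard $i$, the chain-growth rate $\mathtt{g}_i$ is strictly positive and the chain-quality rate $\mathtt{q}_i$ guarantees a positive fraction of honest blocks, so any transaction that is held in every honest miner's mempool is, except with negligible probability, included and buried $\kappa$-deep within a finite expected time $T_i$. Lemma~\ref{honest_presence_proof} further guarantees at least one honest miner in each shard, so gossiped transactions and testimonies do reach an honest block producer.

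First I would handle input-txs. When an honest user issues a cross-tx, it is gossiped to every input shard, so each honest miner in those shards receives it within $\Delta_i$ and, by the protocol, places it in the mempool labelled either accepted or rejected (labelling does not block inclusion). Applying CG and CQ within each input shard then gives that some honest miner eventually mines a block containing this input-tx, and CG again implies that block receives $\kappa$-confirmation in bounded expected time, so the input-tx is confirmed with its label.

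Second I would handle output-txs. Confirmation of an input-tx triggers, by the protocol, the generation of a testimony unit by any honest miner whose block contains that input-tx; this testimony is gossiped to every output shard. Once testimonies for all inputs arrive, the output-tx enters the mempool of honest miners in each output shard, and the same CG/CQ argument shows it is eventually packaged and $\kappa$-confirmed with a definite accept/reject label. The only subtlety here is the waiting time for all per-input testimonies to converge in the output shard, which is bounded by the maximum of the input-shard confirmation times plus network delay, hence finite.

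Finally, refund-txs are handled by the same template: if any output-tx is confirmed as rejected (or any input-tx's testimony is rejected/outdated), an honest miner in the corresponding output shards constructs the refund-tx together with a valid proof-of-rejection assembled from confirmed chain data, and gossips it to the input shards, where CG/CQ again yields $\kappa$-confirmation. The main obstacle I anticipate is ruling out adversarial stalling across the stage boundaries: the adversary could try to censor testimonies, flood shards with competing forks, or selectively withhold input-tx blocks. Each of these must be neutralised by invoking honest presence (so testimonies are always produced and re-broadcast by at least one honest miner), the CQ bound (so adversarial censorship within a shard cannot persist beyond a $\kappa$-deep window), and the fact that the proof-of-rejection verifies against confirmed chain data, preventing the adversary from suppressing a legitimate refund. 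Aggregating the finite expected confirmation delays across the three stages then yields the claim.
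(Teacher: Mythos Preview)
Your proposal is correct and follows essentially the same approach as the paper: both reduce eventual confirmation to the liveness property of Manifoldchain (which in turn follows from CG and CQ as established in Theorem~\ref{informal_cp_property}). The paper's own proof is far terser---it dispatches the lemma in two sentences by observing that honest users broadcast all three transaction types to the involved shards and that liveness then guarantees confirmation---whereas you unpack this into a stage-by-stage analysis and explicitly invoke honest presence, chain growth, and chain quality to rule out adversarial stalling at each boundary. Your more detailed treatment is sound (one minor imprecision: testimonies are generated upon \emph{mining} the block containing the input-tx, not upon its confirmation, though labelling of the output-tx does wait for input confirmation), but it proves more than the paper bothers to spell out; the paper is content to cite liveness as a black box.
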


\begin{proof}
Honest users always faithfully follow the protocol to broadcast input-txs, output-txs, and refund-txs in all involved shards. Due to the liveness of Manifoldchain, these transactions will eventually be confirmed.
\end{proof}

Formally, we have the following theorem:

\begin{theorem}[Cross-shard transaction atomicity]
In Manifoldchain, for a cross-tx created by honest users, (1) if all input-txs are confirmed as accepted, then all output-txs will eventually be confirmed as accepted; (2) if any input-tx is confirmed as rejected, then all output-txs will eventually be confirmed as rejected, and all refund-txs will eventually be confirmed as accepted.
\end{theorem}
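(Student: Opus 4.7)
The plan is to establish cross-shard atomicity by combining the Common Prefix (CP) guarantee of the sharing mining protocol in Theorem~\ref{informal_cp_property} with the eventual-confirmation guarantee of Lemma~\ref{eventual_confirmation} for input-txs, output-txs, and refund-txs created by honest users. The pivotal observation is that once an input-tx reaches $\kappa$-confirmation in its input shard, the CP property forces all honest miners across every shard to agree on both its inclusion and its accept/reject label, except with the negligible error probability $\varepsilon(\kappa)$ bounded in Theorem~\ref{informal_cp_property}. Thus, the testimonies broadcast by honest in-shard miners faithfully encode a decision that no honest miner will ever revise.

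For case (1), suppose every input-tx is confirmed as accepted. By Lemma~\ref{eventual_confirmation}, each input-tx is eventually $\kappa$-confirmed, and honest miners in the corresponding input shard therefore generate accepting testimonies with valid Merkle proofs into the confirmed blocks. Once all such testimonies arrive at an output shard, honest miners pair them with the output-tx and place it into their mempool. By the chain-growth and chain-quality bounds in Theorem~\ref{informal_cp_property} (and hence Lemma~\ref{eventual_confirmation} applied to the output-tx), the output-tx is eventually included in a $\kappa$-confirmed block. Since every testimony's Merkle proof verifies against a $\kappa$-confirmed input block, the output-tx is labeled accepted, and this label cannot later be overturned without contradicting CP in either the input or output shard.

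For case (2), suppose some input-tx $\mathtt{tx}^\ast$ is confirmed as rejected. Honest miners in its input shard emit a testimony carrying a reject vote, which propagates to every output shard. Honest output-shard miners then label the output-tx as rejected and, by Lemma~\ref{eventual_confirmation}, this labeled output-tx is eventually $\kappa$-confirmed in each output shard. At that point an honest user can assemble a valid $\mathtt{proof\text{-}of\text{-}rejection} = (\mathtt{tes}_I, \mathtt{tes}_O)$, where $\mathtt{tes}_O$ certifies the rejected output-tx's inclusion in the confirmed output chain and $\mathtt{tes}_I$ certifies either the confirmed rejection of $\mathtt{tx}^\ast$ or its inclusion in a deconfirmed sibling block. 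The resulting refund-tx is broadcast to every input shard and, again by Lemma~\ref{eventual_confirmation}, is eventually $\kappa$-confirmed; its label is accepted because the attached proof-of-rejection verifies correctly under the CP-consistent views of honest miners in every input shard.

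The main obstacle is ruling out adversarial strategies that attempt to create two conflicting confirmed records for the same input — for instance, a corrupt in-shard miner broadcasting an accept-testimony derived from one fork while honest miners later confirm a rejection on another fork. The plan is to reduce any such scenario to two distinct blocks being simultaneously $\kappa$-confirmed at the same height in a single shard's longest chain, which directly violates the CP property and therefore occurs only with probability at most $\varepsilon(\kappa)$. All other deviations — stale or missing testimonies, malformed Merkle proofs, or mismatched vote bits — are locally detectable during verification and simply cause honest miners to discard the offending messages, leaving the labels of honestly generated transactions unaffected.
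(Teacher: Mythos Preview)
Your proposal is correct, but it takes a more constructive route than the paper. The paper argues each half by contradiction: for (1), if some output-tx were confirmed as rejected, the verification rule in Section~\ref{cross-tx method} would force the existence of a confirmed \emph{rejected} input-tx, contradicting the hypothesis that all input-txs are confirmed as accepted (an input-tx cannot be confirmed with two different labels); for (2), the same style of argument shows that a confirmed accepted output-tx or a confirmed rejected refund-tx would force all input-txs to be confirmed as accepted. The paper never explicitly invokes CP inside this proof; the impossibility of two conflicting confirmed labels is treated as a consequence of the protocol's confirmation rule, with Lemma~\ref{eventual_confirmation} supplying liveness.

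Your approach instead traces the protocol execution forward: testimonies are generated, propagated, matched, and labeled, with CP, CG, and CQ from Theorem~\ref{informal_cp_property} invoked explicitly at each step, and the adversarial case reduced to a CP violation. This buys you a more self-contained argument that makes precise \emph{why} conflicting labels are impossible (namely, two $\kappa$-confirmed blocks at the same height), whereas the paper's contradiction argument is shorter but leaves that reduction implicit. Both are sound; your version is closer to what a reader unfamiliar with the protocol internals would need, while the paper's version is the terse one that someone who has internalized the verification rules of Section~\ref{cross-tx method} would write.
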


\begin{proof}
We will prove statement (1) by contradiction. Suppose that all input-txs are confirmed as accepted and Lemma~\ref{eventual_confirmation} holds. For (1) to be false, there must exist an output-tx that is confirmed as rejected. However, according to the verification mechanism described in Section~\ref{cross-tx method}, a confirmed rejected input-tx must exist in at least one input shard for an output-tx to be confirmed as rejected. This directly contradicts the initial condition that all input-txs are confirmed as accepted, as an input-tx can only be confirmed as either accepted or rejected, not both. 

We will also prove statement (2) by contradiction. Suppose that there exists an input-tx that is confirmed as rejected and Lemma~\ref{eventual_confirmation} holds. For (2) to be false, at least one of the following events must occur: (i) at least one output-tx is confirmed as accepted, or (ii) at least one refund-tx is confirmed as rejected. Event (i) necessitates that all input-txs are confirmed as accepted, which directly contradicts the condition that one of the input-txs is confirmed as rejected, as an input-tx can only be confirmed as either accepted or rejected. Therefore, event (ii) must also be false, as it requires at least one output-tx to be confirmed as accepted, which has already been proven false.
\end{proof}
}

\section{Definition \& Theorem}\label{theoretical_anslysis}

In this section, we present (1) the formal definitions of security properties; (2) the main theorems regarding the security and throughput. Prior to that, we introduce some basic notations. Formally, the initial block of a chain $\mathcal{C}$ is referred to as its \textit{genesis block} and denoted by $gen(\mathcal{C})$. The final block of $\mathcal{C}$ is represented by $head(\mathcal{C})$. A sub-chain of $\mathcal{C}$, in which the last $\kappa$ blocks have been removed, is denoted by $\mathcal{C}^{\lceil k}$. The extension of a chain $\mathcal{C}$ by incorporating a block $\mathcal{B}$ is represented as $\mathcal{C}' = \mathcal{C} || \mathcal{B}$. For simplicity, we express $\mathcal{B} \in \mathcal{C}$ if and only if $\mathcal{C}$ contains the block $\mathcal{B}$. The number of blocks in $\mathcal{C}$ is denoted by $|\mathcal{C}|$. We denote by $\mathcal{C}_i^t$ the longest chain in miner $i$'s view at time $t$. We write $\mathcal{C}_1 \preceq \mathcal{C}_2$ if $\mathcal{C}_1$ is a prefix of $\mathcal{C}_2$. 

\noindent\textbf{Execution Environment.} We denote by $\mathcal{E}nv(\Delta, \rho) \in \mathbb{E}$ the execution environment with bounded network delay $\Delta$ and honest ratio $\rho$, where $\mathbb{E}$ represents the set of execution environments with all possible $\Delta$ and $\rho$. Besides, we define a shard formation mechanism as $f: \mathbb{E}\rightarrow\{\mathbb{E}\}^{m}$ which splits a global execution environment into $m$ sub-environments $\{\mathcal{E}nv(\Delta_i, \rho_i)\}^m$.

\noindent\textbf{Execution Model.} We employ a natural continuous-time model and represent PoW mining as a homogeneous Poisson point process. Let $\lambda$ be the total mining rate of the network, specifically denoting the number of blocks generated within one second. Honest block arrivals and adversarial block arrivals as two independent Poisson processes with rates $\rho \lambda$ and $(1-\rho)\lambda$, respectively. Owing to the Poisson merging and splitting properties, it is equivalent to regard all block arrivals as a single Poisson process with a rate of $\lambda$, where each block is honest with probability $\rho$ and adversarial with probability $1-\rho$.



\subsection{Security Property}\label{security_property}

In the context of a PoW-based protocol, it implements a robust transaction ledger if it satisfies the following three properties: Common Prefix(CP), Chain Quality(CQ), and Chan Growth(CG). These properties constitute the necessary requirement to achieve security, where the CP property implies consistency while the combined CQ and CG properties imply liveness.

\subsubsection{Common Prefix}
Prior to the definition of CP, we introduce the following bad event $E_v$:

\begin{definition}
    (\textbf{Consistency violation of a target transaction}). The consistency of a target transaction is violated when a block containing the target transaction is confirmed, while simultaneously, another distinct block (possibly containing or not containing the target transaction) is confirmed at the same height within the blockchain.
\end{definition}

We define the CP property $\mathcal{CP}(\kappa, \varepsilon(\kappa))$ held by protocol $\prod$ parameterized with confirmation-depth $\kappa$ and negligible function $\varepsilon(\kappa)$ as follows:

\begin{definition}
A protocol $\prod$ holds common prefix property $\mathcal{CP}(\kappa, \varepsilon(\kappa))$ in environment $\mathcal{E}nv(\Delta, \rho)$ if consistency violation events happen with an negligible probability $\varepsilon(\kappa)$ when employing $\kappa$-confirmation rule in case that any message is delayed for $\Delta$ seconds and the ratio of the adversarial hashing power is $\rho$.
\end{definition}

CP implies the consistency of a blockchain, which we denote as $\kappa$-consistency. This property stipulates that honest miners should agree on the current chain, with the exception of a small number, $\kappa$, of unconfirmed blocks at the end of the chain. Likewise, we introduce the concept of an \textit{anticipated growth} event initially and subsequently leverage it to formulate the definition of CG.


\subsubsection{Chain Growth}
Prior to the definition of CG, we introduce the following bad event $E_g$:
\begin{definition}
We say the anticipated growth event $E_g(t, \Delta, \mathcal{T})$ occurs, denoted by $E_g(t, \Delta, \mathcal{T}) = 1$, if the following two events hold true:
\begin{itemize}
    \item (\textbf{Consistent Length.}) Given the current time $\mathtt{t}$, For any time $r \leq \mathtt{t} - \Delta$, $r + \Delta \leq r' \leq \mathtt{t} $, for every two miners $i, j$ such that $i$ is hones at $r$ and $j$ is honest at $r'$, $len(\mathcal{C}_j^{r'} \geq \mathcal{C}_i^{r})$ holds.
    \item (\textbf{Chain Growth.}) For any time $r\leq \mathtt{t} - t$, it holds that $\min\limits_{i,j}(len(\mathcal{C}_j^{r + t}) - len(\mathcal{C}_i^r)) \geq \mathcal{T}$. 
\end{itemize}
\end{definition}

We define the CG property held by protocol $\prod$ parameterized with the chain growth rate $\mathtt{g}$ and a negligible function $\varepsilon(\cdot)$ as follows:

\begin{definition}
A blockchain protocol $\prod$ holds chain growth property $\mathcal{CG}(\mathtt{g}, \varepsilon(\cdot))$ in environment $\mathcal{E}nv(\Delta, \rho)$ if anticipated growth event $E_g$ happens except with a negligible probability $\varepsilon(\mathcal{T})$. Formally, there exists a constant $c$ such that for ever $\mathcal{T} \geq c$ and $t \geq \frac{\mathcal{T}}{\mathtt{g}}$, the following holds:
\begin{equation}
\begin{split}
\mathtt{Pr}[E_g(t, \Delta, \mathcal{T}) = 1] \geq 1 - \varepsilon(\mathcal{T}).
\end{split}
\end{equation}
\end{definition}

CG as a primary element contributing to a blockchain's liveness, refers to the concept that the chain's length expands proportionally with its runtime. This property stipulates that in a honest miner's view, the length of the longest chain should grow at an average rate $g$ over a continuous time span of $\mathcal{T}$ seconds, except with an exponentially small probability $\varepsilon(\mathcal{T})$. In a similar vein, we establish the definition of the \textit{Anticipated Quality} event, denoted as $E_q$ herein, and subsequently construct the definition of the CQ based on this event.


\subsubsection{Chain Quality}
Prior to the definition of CQ, we introduce the following bad event $E_q$:
\begin{definition}
We say the anticipated quality event $E_q(\mathtt{q}, \mathcal{T})$ occurs, denoted by $E_q(\mathtt{q}, \mathcal{T}) = 1$, provided that for each moment $t$ and every honest miner $i$, within any continuous sequence of $\mathcal{T}$ blocks in $\mathcal{C}_i^t$, the proportion of honest blocks is at a minimum of $\mathtt{q}$.
\end{definition}

We define the CQ property held by protocol $\prod$ parameterized with the chain quality rate $\mathtt{q}$ and a negligible function $\varepsilon(\mathcal{T})$ as follows:

\begin{definition}
A blockchain protocol $\prod$ holds chain quality property $\mathcal{CQ}(\mathtt{q}, \varepsilon(\cdot))$ in environment $\mathcal{E}nv(\Delta, \rho)$ if anticipated quality event $E_q$ occurs except with the negligible probability $\varepsilon(\mathcal{T})$. Formally, there exists a constant $c$ such that for every $\mathcal{T} \geq c$, the following holds:
\begin{equation}
\begin{split}
\mathtt{Pr}[E_q(\mathtt{q}, \mathcal{T}) = 1] \geq 1 - \varepsilon(\mathcal{T}).
\end{split}
\end{equation}
\end{definition}

The CQ property is another pivotal metric essential for liveness. It quantifies the number of honest blocks mined within a significantly long and continuous segment of the longest chain in an honest miner's view. In essence, the CQ property necessitates that any longest chain, as adopted by honest miners, must contain a certain proportion of honest blocks, otherwise the liveness of a blockchain is violated even in the presence of a positive chain growth rate.

\subsection{Security Guarantee}\label{security_analysis}



We denote by $\prod_{mfd}(f, \lambda_s, \{\lambda_i\}^m)$ the Manifoldchain protocol, $\lambda_s$ and $\lambda_i$ represents the mining rates of inclusive block and exclusive block in shard $i$ respectively. We present the constraint that Manifoldchain must satisfy in order to uphold the CP as follows:

\begin{theorem}\label{cp_property}
The Manifoldchain protocol $\prod_{mfd}(f, \lambda_s, \{\lambda_i\}^m)$ holds the CP property $\mathcal{CP}(\kappa, \varepsilon(\kappa))$ in environment $\mathcal{E}nv(\Delta, \rho)$ as long as 

\begin{equation}
\begin{split}\label{eq_pi}
p_i = \frac{\lambda_i \rho_i + m \lambda_s \rho}{\lambda_i + m\lambda_s} e^{-(\lambda_i + \lambda_s)\Delta_i} > \frac{1}{2},
\end{split}
\end{equation}

where $\{\mathcal{E}nv(\Delta_i, \rho_i)\} = f(\mathcal{E}nv(\Delta, \rho))$ and $\varepsilon(\kappa)$ satisfies

\begin{equation}
\begin{split}
\varepsilon(\kappa) \leq (2 + 2\sqrt{\frac{p_i}{1-p_i}})(4p_i(1-p_i))^{\kappa},
\end{split}
\end{equation}

regardless of the adversary's attack strategy.
\end{theorem}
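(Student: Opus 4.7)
The plan is to follow the roadmap sketched in Figure~\ref{fig:roadmap}. First, I would pass to the \emph{severe execution environment} referenced earlier, in which every delivered honest block suffers the full delay $\Delta_i$ and every block mined inside such a delay window is credited to the adversary. A monotone-coupling argument shows that any consistency violation in the standard environment induces a consistency violation in the severe environment, so an upper bound on the failure probability under the severe environment upper-bounds the failure probability under the standard one.

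Within the severe environment I would then invoke the earlier result that the \emph{private-mining} attack dominates every other adversarial strategy in shard $i$. Intuitively, withholding each adversarial block maximises the length advantage that can be revealed later, and the severe environment already attributes to the adversary every honest block that falls inside a delay window; a step-by-step coupling then couples an arbitrary adversarial transcript to a private-mining transcript in which the adversary is at least as far ahead at every time. This reduces the CP analysis to a single public-versus-private race inside shard $i$.

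The heart of the argument is the derivation of $p_i$. Block arrivals extending shard~$i$'s ledger form a Poisson process in which exclusive blocks arrive at rate $\lambda_i$ with honest fraction $\rho_i$, while inclusive blocks must be weighted by $m\lambda_s$ to reflect the fact that one inclusive mining success extends every one of the $m$ shards simultaneously; the honest fraction among these weighted arrivals is $\rho$. Adopting the \emph{loner} analysis of Dembo \etal~\cite{dembo2020everything}, an honest block only contributes to the public chain if no other block is mined within its $\Delta_i$-neighbourhood, and by Poisson thinning this silencing event has probability $e^{-(\lambda_i+\lambda_s)\Delta_i}$ (only blocks competing for the same shard-tip collide, so the exponent uses $\lambda_i+\lambda_s$ rather than $\lambda_i+m\lambda_s$). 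Putting these pieces together gives exactly the stated $p_i$.

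Finally, I would encode the public-versus-private race as a biased random walk that moves up with probability $p_i$ and down with probability $1-p_i$. Assuming $p_i>1/2$, a gambler's-ruin calculation as in~\cite{DBLP:conf/eurocrypt/PassSS17,dembo2020everything} bounds the probability that a hidden chain ever catches up after a $\kappa$-deep confirmation by $(2+2\sqrt{p_i/(1-p_i)})(4p_i(1-p_i))^\kappa$, and a union bound across racing horizons preserves this exponential decay with the same constant prefactor. The main obstacle I anticipate is justifying the asymmetric weighting in $p_i$: carefully explaining why inclusive blocks contribute with multiplicity $m$ in the arrival rate that determines the honest-fraction numerator, but with multiplicity $1$ in the collision rate inside the exponent. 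Making this distinction precise, and verifying that predictive mining together with fork pruning does not open new adversarial strategies beyond private mining, will be the most delicate technical step.
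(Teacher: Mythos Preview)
Your outline follows the paper's roadmap exactly: reduce to the severe execution environment, show private mining dominates there (the paper's Theorem~\ref{private_mining_worse}), derive $p_i$, then bound the race. Two refinements are worth noting.

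First, your explanation of why the silencing exponent is $\lambda_i+\lambda_s$ rather than $\lambda_i+m\lambda_s$ is not quite right. Light inclusive blocks from other shards \emph{do} compete for shard~$i$'s tip; the reason they do not appear in the exponent is that they are headers of negligible size and hence negligible delivery time, so they never open a $\Delta_i$-long collision window. Only the shard's own full blocks (exclusive plus locally-originated inclusive, combined rate $\lambda_i+\lambda_s$) must be delivered in full and trigger the freeze in the severe environment. The paper makes this explicit by distinguishing full blocks from light blocks: the ``severely invalid'' probability ${\rm Pr}_s=e^{-(\lambda_i+\lambda_s)\Delta_i}$ is computed from the full-block rate alone, while the honest fraction ${\rm Pr}_h=\frac{\lambda_i\rho_i+m\lambda_s\rho}{\lambda_i+m\lambda_s}$ is computed over the total arrival rate $\lambda_i+m\lambda_s$; the product gives $p_i$.

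Second, a plain gambler's-ruin bound does not produce the stated constant $(2+2\sqrt{p_i/(1-p_i)})(4p_i(1-p_i))^\kappa$. The paper decomposes the private-mining violation into three pieces: the pre-attack advantage $\mathcal{A}dv$ (geometric with parameter $p_i$), the number $B$ of adversarial blocks among the first $\max(2\kappa-\mathcal{A}dv,0)$ post-attack arrivals (binomial), and the eventual maximum reach $M$ of the post-race walk (also geometric). It then shows the violation event is contained in $\{2\mathcal{A}dv+2B+M\ge 2\kappa-1\}$ and applies a Chernoff/MGF bound optimized at $e^v=p_i/(1-p_i)$. You will need this finer decomposition, not just a single catch-up probability, to recover the exact prefactor and base.
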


Notice that $\varepsilon(\kappa)$ represents the security level of CP. 
In a practical scenario with expected system fault probability $\rm \overline{\mathcal{P}}$, we say $\prod_{mfd}$ holding $\mathcal{CP}(\kappa, \varepsilon(\kappa))$ satisfies the security requirement as long as 

\begin{equation}
\begin{split}
\varepsilon(\kappa) \leq \overline{\mathcal{P}}.
\end{split}
\end{equation}

To prove Theorem \ref{cp_property}, we shift from a standard to a defined \textit{severe execution environment}, where any delivered full block experiences the maximal network delay and during the delivering any mining operation is considered invalid. Achieving security in this severe environment is more challenging than in standard environment. We then establish that the private-mining attack is the optimal strategy in this severe environment, simplifying our analysis. Finally, we conclude by calculating the upper bound on mining rates to prevent the private-mining attack, which also acts as a constraint against all possible attacks. The detailed proof is presented in Appendix \ref{CP_proof}.

As Manifoldchain implements independent growth of distinct chains with various mining rates across shards, different shards hold CG property with different chain growth rates. Consequently, we conduct our analysis by considering the individual CG property held by each shard respectively. We demonstrate that Manifoldchain holds CG property and present our main result as follows:
\begin{theorem}\label{cg_property}
In an environment $\mathcal{E}nv(\Delta, \rho)$, the Manifoldchain protocol $\prod_{mfd}(f, \lambda_s, \{\lambda_i\}^m)$ holds the CG property $\mathcal{CG}_i(\mathtt{g_i}, \varepsilon(\cdot))$ in shard $i$ respectively, and for any $\delta_i > 0$, the chain growth rate satisfies
\begin{equation}
\begin{split}
g_i =& (1 - \delta_i)\frac{p_i(\lambda_s + \lambda_i)}{1 + \Delta_i(\lambda_s + \lambda_i)},
\end{split}
\end{equation}
where
\begin{equation}
\begin{split}
\{\mathcal{E}nv(\Delta_i, \rho_i)\} = f(\mathcal{E}nv(\Delta, \rho)) \\
p_i = \frac{\lambda_i \rho_i + m \lambda_s \rho}{\lambda_i + m\lambda_s} e^{-(\lambda_i + \lambda_s)\Delta_i}.
\end{split}
\end{equation}
\end{theorem}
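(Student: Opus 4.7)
The plan is to adapt the continuous-time chain-growth arguments of Dembo et al. and Pass--Seeman--Shelat to Manifoldchain's two-tier block model (exclusive plus inclusive), reusing the severe-execution-environment reduction developed in the proof of Theorem~\ref{cp_property}.

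First, I would reduce the analysis to the severe execution environment: every delivered block incurs the maximum shard delay $\Delta_i$ and any mining during a delivery window is treated as adversarial. Any chain-growth lower bound proved in this environment transfers to the standard one, because relaxing the worst-case delivery and corruption conditions can only accelerate honest chain extension. Next, I would model the block arrivals that affect shard $i$'s longest chain as a merged Poisson process of rate $\lambda_i+\lambda_s$. Exclusive blocks arrive at rate $\lambda_i$ with honest fraction $\rho_i$, while inclusive blocks arrive at rate $\lambda_s$ globally, with their underlying hashing power amplified across all $m$ shards and honest fraction $\rho$. Poisson thinning yields an effective per-shard honest fraction of $\frac{\lambda_i\rho_i+m\lambda_s\rho}{\lambda_i+m\lambda_s}$, matching the non-exponential factor of $p_i$.

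Third, I would define a convergence opportunity (COP) for shard $i$ as an honest arrival at time $\tau$ with no other block arrival in $(\tau-\Delta_i,\tau+\Delta_i)$ in shard $i$'s view. By the Poisson void property, the probability that a given honest arrival is a COP is $e^{-(\lambda_i+\lambda_s)\Delta_i}$, so COPs occur at rate $p_i(\lambda_i+\lambda_s)$. I would then argue that every COP produces at least one chain-extending block: by time $\tau+\Delta_i$ all honest miners in shard $i$ have received the COP block, no competing block was mined during its window, and by the longest-chain rule they adopt it as the unique new tip.

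Fourth, I would apply a standard Chernoff-type tail bound for Poisson counts to conclude that, for any $\delta_i>0$ and $t$ of order $\mathcal{T}/\mathtt{g}_i$, the number of COPs in a window of length $t$ is at least $(1-\delta_i)\,p_i(\lambda_i+\lambda_s)\,t$ except with probability $\varepsilon(\mathcal{T})$ exponentially small in $\mathcal{T}$. The denominator $1+\Delta_i(\lambda_i+\lambda_s)$ arises from a renewal-style correction: each chain extension consumes at least one inter-arrival interval plus one propagation delay $\Delta_i$, capping the rate of chain-extending events at $(\lambda_i+\lambda_s)/(1+\Delta_i(\lambda_i+\lambda_s))$; multiplying by the honest-COP fraction contained in $p_i$ yields the claimed $\mathtt{g}_i$. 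The ``Consistent Length'' requirement of $E_g$ then follows from the severe-environment delivery rule and the CP property of Theorem~\ref{cp_property}, which together prevent honest miners from disagreeing on length by more than a chain-growth slot.

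I expect the main obstacle to be the Step~2 bookkeeping: establishing that, against every adversarial allocation of hashing power across shards, the per-shard chain sees honest arrivals with exactly the weighted fraction $\frac{\lambda_i\rho_i+m\lambda_s\rho}{\lambda_i+m\lambda_s}$. This reflects the $m$-for-one nature of inclusive mining, where one PoW solution extends all $m$ shard chains simultaneously; rigorously justifying the Poisson thinning and the associated worst-case adversary reduction mirrors the structural argument used in Theorem~\ref{cp_property} but must be redone with the correct shard-local rates. Once this calibration is secured, Steps~3 and~4 are routine applications of Poisson-process concentration.
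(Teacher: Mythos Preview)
Your severe-environment reduction and the identification of the effective honest fraction $\frac{\lambda_i\rho_i+m\lambda_s\rho}{\lambda_i+m\lambda_s}$ match the paper, but from there the paper does not use convergence opportunities. Instead, after an induction lemma showing that chain length in the standard environment dominates that in the severe one, it argues directly by a frozen-seconds accounting: if the chain grows by fewer than $c$ blocks over $t$ seconds, then at most $c\Delta_i$ seconds were frozen, leaving at least $t-c\Delta_i$ unfrozen seconds during which effectively-honest blocks arrive at rate $p_i(\lambda_s+\lambda_i)$; a Chernoff bound on the Bernoulli sum $W^{t-c\Delta_i}$ then forces $c\ge\frac{p_i(\lambda_s+\lambda_i)t}{1+\Delta_i p_i(\lambda_s+\lambda_i)}$ except with negligible probability. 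The denominator falls out of solving the balance equation $E[W^{t-c\Delta_i}]=c$, not from any separate renewal step.

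Your Steps~3 and~4 do not fit together. If COPs occur at rate $p_i(\lambda_s+\lambda_i)$ and each one extends the chain (Step~3), that already lower-bounds the growth rate by $p_i(\lambda_s+\lambda_i)$, which exceeds the theorem's bound; the Step-4 ``renewal-style correction'' is then superfluous. Conversely, if the factor $1/(1+\Delta_i(\lambda_s+\lambda_i))$ is meant to be the delay penalty, then the $e^{-(\lambda_s+\lambda_i)\Delta_i}$ already inside $p_i$ is double-counting the same $\Delta_i$-window. (There is also a mismatch between your two-sided COP window $(\tau-\Delta_i,\tau+\Delta_i)$ and the one-sided void probability $e^{-(\lambda_s+\lambda_i)\Delta_i}$ you quote.) More fundamentally, in the severe environment honest arrivals are suppressed during frozen periods and hence do not form a homogeneous Poisson process, so the standard Poisson-void COP calculation does not apply cleanly; the paper's frozen-seconds argument is designed precisely to sidestep this non-homogeneity.
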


In our proof, we first establish the lower bound for Manifoldchain's chain growth rate in the severe environment, where the rate of chain growth is notably slower in comparison to the standard environment. We consider this lower bound as a baseline for the standard environment, subsequently enabling us to establish the validity of Theorem \ref{cg_property}. The proof of Theorem \ref{cg_property} can be found in Appendix \ref{cg_property_proof}, to which the reader is referred to further details. Next, we show that Manifoldchain holds CQ property in Theorem~\ref{cq_property}.

\begin{theorem}\label{cq_property}
In an environment $\mathcal{E}nv(\Delta, \rho)$, the Manifoldchain protocol $\prod_{mfd}(f, \lambda_s, \{\lambda_i\}^m)$ holds the CQ property $\mathcal{CQ}(\mathtt{q}, \varepsilon(\cdot))$ in shard $i$ respectively, and for any $\delta_i > 0$, the chain quality rate $\mathtt{q}_i$ satisfies
\begin{equation}
\begin{split}
\mathtt{q_i} =& 1 - (1+\delta_i)\frac{1 + \Delta_i p_i\rho_i(\lambda_s + \lambda_i)}{p_i},
\end{split}
\end{equation}
where
\begin{equation}
\begin{split}
\{\mathcal{E}nv(\Delta_i, \rho_i)\} = f(\mathcal{E}nv(\Delta, \rho)) \\
p_i = \frac{\lambda_i \rho_i + m \lambda_s \rho}{\lambda_i + m\lambda_s} e^{-(\lambda_i + \lambda_s)\Delta_i}.
\end{split}
\end{equation}
\end{theorem}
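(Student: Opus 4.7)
The plan is to follow the standard chain-quality argument of Garay--Kiayias--Leonardos, adapted to the continuous-time Poisson framework used in Theorems~\ref{cp_property} and~\ref{cg_property}, and specialized to the two-layer mining of Manifoldchain (exclusive blocks in shard $i$ at rate $\lambda_i$ plus inclusive blocks at rate $\lambda_s$ contributed by the whole network). As in the CP and CG proofs, I would first reduce to the severe execution environment of Definition~\ref{severe_execution_environment}, so that every full-block delivery pays the maximum delay $\Delta_i$ and any mining during a delivery window is pessimistically charged to the adversary; a CQ bound obtained there transfers immediately to the standard environment.

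I would then fix an honest miner in shard $i$, its longest verified chain $\mathcal{C}_i^t$ at some time $t$, and any window of $\mathcal{T}$ consecutive blocks $B_1,\ldots,B_\mathcal{T}$ in that chain, with mining times $\tau_1\le\tau_2$. Invoking CG (Theorem~\ref{cg_property}) at a block just before $B_1$ and at one just after $B_\mathcal{T}$ gives, except with probability $\varepsilon(\mathcal{T})$, the bound $\tau_2-\tau_1\le(1+o(1))\mathcal{T}/\mathtt{g}_i$. Next I would count adversarial blocks that can appear in the segment: shard $i$'s chain absorbs exclusive adversarial blocks at intra-shard rate $(1-\rho_i)\lambda_i$ and inclusive adversarial blocks mined against the global honest ratio $\rho$, mirroring the $\lambda_i\rho_i+m\lambda_s\rho$ honest-contribution bookkeeping inside $p_i$. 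To accommodate privately withheld blocks and any pre-reserve accumulated before $\tau_1$, I would inflate the window to $[\tau_1-w,\tau_2]$ with $w=\Theta(\Delta_i)$ calibrated exactly as in the private-mining analysis behind Theorem~\ref{cp_property}, then apply a Poisson--Chernoff tail bound on this enlarged window. Dividing the resulting adversarial-block count by $\mathcal{T}$ and substituting the closed form of $\mathtt{g}_i$ yields an upper bound on the adversarial fraction of the segment; after plugging in the definition of $p_i$ from Theorem~\ref{informal_cp_property} and simplifying, the expression rearranges to $\mathtt{q}_i\ge 1-(1+\delta_i)\,\frac{1+\Delta_i p_i\rho_i(\lambda_s+\lambda_i)}{p_i}$. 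Since the bound is taken over an arbitrary segment and miner, a union bound over the polynomially many such segments finishes the argument.

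The main obstacle I anticipate lies in the adversarial-count step. Under sharing mining each adversarial inclusive block is simultaneously usable against every shard while each adversarial exclusive block is shard-local, so the two layers contribute to the chain at different marginal cost; at the same time, the delay $\Delta_i$ lets the adversary trade honest blocks ``in flight'' for private-chain substitutions, which is precisely the mechanism producing the extra $\Delta_i p_i\rho_i(\lambda_s+\lambda_i)$ term in the numerator. Calibrating the pre-window length $w$ to match the private-mining bookkeeping behind Theorem~\ref{cp_property}, aligning the Chernoff slack with the $(1-\delta)$ slack inherited from Theorem~\ref{cg_property}, and avoiding double-counting the delay penalty that is already folded into both $p_i$ and $\mathtt{g}_i$, is the delicate part of the proof.
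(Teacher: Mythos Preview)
Your plan is correct and follows the same Garay--Kiayias--Leonardos template the paper uses: bound the real time spanned by a $\mathcal{T}$-block segment via the chain-growth rate of Theorem~\ref{cg_property}, bound the adversarial blocks mined in that time via a Poisson--Chernoff tail, divide, and simplify using the closed form of $\mathtt{g}_i$. The paper's execution is simpler than what you outline in two places. First, rather than inflating the window by some $w=\Theta(\Delta_i)$ to absorb pre-mined reserves, the paper brackets the segment on both sides by honest blocks (passing from $\mathcal{B}_j,\ldots,\mathcal{B}_{j+T}$ to the honest $\mathcal{B}_{j-1}$ and $\mathcal{B}_{j+T+1}$), which automatically anchors the time interval and only increases the honest fraction; no calibration of a pre-window is needed. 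Second, the paper does not split the adversarial count into exclusive and inclusive layers: it simply upper-bounds all adversarial arrivals in shard $i$ by a Poisson process of rate $(1-\rho_i)(\lambda_s+\lambda_i)$ (Lemma~\ref{upper_adv_block_num}) and applies Chernoff once. The extra $\Delta_i p_i\rho_i(\lambda_s+\lambda_i)$ term you are worried about double-counting drops out mechanically when you substitute $t\le \mathcal{T}/((1-\delta')\mathtt{g}_i)$ with $\mathtt{g}_i=\frac{p_i(\lambda_s+\lambda_i)}{1+\Delta_i p_i(\lambda_s+\lambda_i)}$; there is no separate private-mining bookkeeping to align.
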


To establish Theorem \ref{cq_property}, we compute, within a fixed time window, the minimum number of blocks generated by honest miners using Theorem \ref{cg_property}, and the maximum number of blocks generated by all miners. This calculation allows us to derive the lower bound for the chain quality rate. The proof of Theorem \ref{cq_property} is provided in Appendix \ref{cq_property_proof}.

\subsection{Throughput Guarantee}\label{scalability_analysis}

It is universally recognized that there is a trade-off between security and throughput in Blockchain. In this section we explicitly characterized the trade-off between security and throughput for Manifoldchain and show how it achieves a better trade-off compared to the typical sharding protocol employing uniform distribution mechanism by well-designed parameters. Due to this optimized trade-off, Manifoldchain enables fast shards to significantly improve throughput and the slow shards to preserve the same throughput as Bitcoin. Formally, we define throughput, denoted by $\mathcal{T}PS$, as the number of transactions confirmed per second. The throughput achieved by Bitcoin serves as our benchmark for the theoretical comparison. 

Notice that any protocol with zero mining rate potentially holds CP property but dose not possess CG property. As discussing such a protocol without liveness is meaningless and unrealistic,  we consider an ``efficient common prefix property'' which requires holder's mining rate to be greater than a threshold:

\begin{definition}
we say that a protocol $\prod(f, \lambda)$ holds efficient common prefix (ECP) property $\mathcal{ECP}(\kappa, \varepsilon(\kappa), \underline{\lambda})$ in $\mathcal{E}nv(\Delta, \rho)$ as long as 
\begin{itemize}
    \item $\prod$ holds $\mathcal{CP}(\kappa, \varepsilon(\kappa))$;
    \item $\lambda \geq \underline{\lambda}$;
\end{itemize}
\end{definition}

We initially present Bitcoin's throughput as follows:

\begin{theorem}\label{btc_tps}
Executing in environment $\mathcal{E}nv(\Delta, \rho)$, holding the ECP property $\mathcal{ECP}(\kappa, \varepsilon(\kappa), \underline{\lambda})$, satisfying the system security requirement $\overline{\mathcal{P}}$, the Bitcoin $\prod_{btc}(\lambda)$ achieves a $\mathcal{T}PS$ bounded by

\begin{equation}
\begin{split}
\mathcal{T}PS(\mathcal{E}nv_{btc}(\Delta, \rho), \mathcal{E}nv, \mathcal{ECP}, \overline{\mathcal{P}}) \leq \frac{\rho}{\Delta}\log \frac{\rho}{\delta(\overline{\mathcal{P}}, \kappa)},
\end{split} 
\end{equation}

where

\begin{equation}
\begin{split}
&\delta(\overline{\mathcal{P}}, \kappa) = \frac{1}{2}(1 + \sqrt{1 - \sqrt[\kappa]{\frac{\overline{\mathcal{P}}}{2 + 2\sqrt{\frac{{\rm exp}(-\underline{\lambda}\Delta)}{1-{\rm exp}(-\underline{\lambda}\Delta)}}}}}) \in (\frac{1}{2}, 1).
\end{split}
\end{equation}
\end{theorem}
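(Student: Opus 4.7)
The plan is to derive Bitcoin's maximum throughput by specializing the general common-prefix bound from Theorem~\ref{cp_property} to the single-shard, no-sharing-mining case and then solving backwards for the largest admissible mining rate. I view Bitcoin as the instance of $\prod_{mfd}$ with $m=1$, $\lambda_s = 0$, $\lambda_1 = \lambda$, $\rho_1 = \rho$, and $\Delta_1 = \Delta$, under which equation~(\ref{eq_pi}) degenerates to $p = \rho e^{-\lambda \Delta}$, and the CP error bound becomes $\varepsilon(\kappa) \leq (2 + 2\sqrt{p/(1-p)})(4p(1-p))^{\kappa}$.

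Next I would impose the system security requirement $\varepsilon(\kappa) \leq \overline{\mathcal{P}}$. The technical difficulty here is that the prefactor $2 + 2\sqrt{p/(1-p)}$ depends on $p$ itself, so inverting the inequality directly yields a transcendental constraint on $\lambda$. My plan is to exploit the ECP condition $\lambda \geq \underline{\lambda}$ to replace this prefactor by the $p$-free upper bound obtained from $p \leq e^{-\underline{\lambda}\Delta}$; this substitution produces exactly the constant $2 + 2\sqrt{\exp(-\underline{\lambda}\Delta)/(1-\exp(-\underline{\lambda}\Delta))}$ that appears in the denominator of $\delta(\overline{\mathcal{P}}, \kappa)$. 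With the prefactor decoupled from $p$, taking $\kappa$-th roots reduces the security constraint to $4p(1-p) \leq \gamma$, where $\gamma := \sqrt[\kappa]{\overline{\mathcal{P}}/(2 + 2\sqrt{\exp(-\underline{\lambda}\Delta)/(1-\exp(-\underline{\lambda}\Delta))})}$.

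Solving the quadratic $4p^2 - 4p + \gamma \geq 0$ and selecting the branch compatible with the security requirement $p > 1/2$ from Theorem~\ref{cp_property} gives $p \geq (1 + \sqrt{1-\gamma})/2$, which is exactly $\delta(\overline{\mathcal{P}}, \kappa)$; the claim $\delta \in (1/2, 1)$ follows from $\gamma \in (0, 1)$ under any nontrivial security budget. Inverting $\rho e^{-\lambda \Delta} \geq \delta$ then yields the mining-rate ceiling $\lambda \leq \frac{1}{\Delta}\log(\rho/\delta)$. To turn this into a throughput bound, I would argue that only honest blocks contribute confirmed transactions to the ledger (adversarial blocks either get orphaned or carry invalid content), so $\mathcal{T}PS$ is upper bounded by the honest block arrival rate $\rho \lambda$, giving $\mathcal{T}PS \leq \rho \lambda \leq \frac{\rho}{\Delta}\log(\rho/\delta)$.

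The main obstacle I anticipate is handling the self-referential prefactor cleanly: relaxing it too loosely weakens the bound, while carrying its $p$-dependence through the inversion destroys the closed-form logarithm. Leveraging the ECP lower bound $\underline{\lambda}$ to obtain a $p$-free upper bound on the prefactor is the pivotal step, and I expect the bulk of the write-up to be a short monotonicity argument that any $\lambda$ satisfying the ECP constraint admits $p \leq e^{-\underline{\lambda}\Delta}$, so that the relaxation is valid at every admissible mining rate, including the optimum.
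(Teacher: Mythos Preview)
Your proposal is correct and follows essentially the same route as the paper. The paper's Appendix~\ref{mfd_tps_proof} derives the bound for the general Manifoldchain case and recovers Bitcoin by setting $\gamma_i=0$ (equivalently your $m=1$, $\lambda_s=0$ specialization); both reductions yield $p=\rho e^{-\lambda\Delta}$, the same prefactor relaxation via $p\le e^{-\underline{\lambda}\Delta}$, the same quadratic inversion to $p\ge\delta(\overline{\mathcal{P}},\kappa)$, and the same final step $\mathcal{T}PS\le\rho\lambda\le\frac{\rho}{\Delta}\log(\rho/\delta)$.
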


Then we present Manifoldchain's throughput as follows:

\begin{theorem}\label{mfd_tps}
Executing in environment $\mathcal{E}nv(\Delta, \rho)$, holding the ECP property $\mathcal{ECP}(\kappa, \varepsilon(\kappa), \underline{\lambda})$, satisfying the system security requirement $\overline{\mathcal{P}}$, the Manifoldchain $\prod_{mfd}(f, \lambda_s, \{\lambda_i\}^m)$ achieves a $\mathcal{T}PS$ bounded by 

\begin{equation}\label{tps_manifold}
\begin{split}
\mathcal{T}PS(\prod_{mfd}, \mathcal{E}nv, \mathcal{ECP}) \leq \frac{1}{\Delta}\log \frac{\rho}{\delta(\overline{\mathcal{P}}, \kappa)}(m\rho + \sum_{i}^m \gamma_i\rho_i),
\end{split}
\end{equation}

as long as 

\begin{equation}\label{constrain}
\begin{split}
&\frac{1}{\gamma_i + 1} \frac{\log \frac{m\rho}{\delta(\overline{\mathcal{P}}, \kappa)(\gamma_i + m)}}{\log \frac{\rho}{\delta(\overline{\mathcal{P}}, \kappa)}} \geq \frac{\Delta_i}{\Delta},\\
&\underline{\lambda} \geq \frac{1}{(\gamma_i + 1)\Delta_i}\log(\frac{\gamma_i}{\gamma_i + m}\cdot\frac{\rho_i}{\rho} + 1),\\
&\gamma_i < \frac{\rho(\frac{\rho}{\delta(\overline{\mathcal{P}}, \kappa)}-1)m}{\rho + 1}.
\end{split}
\end{equation}

where

\begin{equation}
\begin{split}
& \gamma_i = \frac{\lambda_i}{\lambda_s},\\
&\{\mathcal{E}nv(\Delta_i, \rho_i)\}^m = f(\mathcal{E}nv(\Delta, \rho)),\\
&\underline{\Delta} = \min\{\Delta_i\}^m,\\ 
&\delta(\overline{\mathcal{P}}, \kappa) = \frac{1}{2}(1 + \sqrt{1 - \sqrt[\kappa]{\frac{\overline{\mathcal{P}}}{2 + 2\sqrt{\frac{{\rm exp}(-\underline{\lambda}\underline{\Delta})}{1-{\rm exp}(-\underline{\lambda}\underline{\Delta})}}}}}) \in (\frac{1}{2}, 1).
\end{split}    
\end{equation}

Ineq. \ref{tps_manifold} holds an equality when 

\begin{equation}
\begin{split}
\lambda_s = \frac{1}{\overline{\Delta}}\log \frac{\rho}{\delta(\overline{\mathcal{P}}, \kappa)}.
\end{split}
\end{equation}

\end{theorem}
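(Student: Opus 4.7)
} The plan is to start from the security constraint supplied by Theorem \ref{cp_property} and convert it into an explicit upper bound on the inclusive mining rate $\lambda_s$, then sum the resulting per-shard block rates and translate them into a transaction throughput. First I would invert the error-probability inequality $(2 + 2\sqrt{p_i/(1-p_i)})(4p_i(1-p_i))^{\kappa} \leq \overline{\mathcal{P}}$: because $4p(1-p)$ is strictly decreasing on $(1/2, 1)$, this gives a lower bound $p_i \geq \delta(\overline{\mathcal{P}}, \kappa)$, with the exact value of $\delta$ read off by solving the associated polynomial equation and keeping the root in $(1/2, 1)$. The function $\delta(\overline{\mathcal{P}}, \kappa)$ stated in the theorem is exactly this root (with the lower bound $\underline{\lambda}$ from the ECP definition used to bound the $\sqrt{p_i/(1-p_i)}$ factor).

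Next I would rewrite the expression for $p_i$ from (\ref{eq_pi}) using $\gamma_i = \lambda_i/\lambda_s$, giving $p_i = \frac{\gamma_i \rho_i + m \rho}{\gamma_i + m}\,e^{-(\gamma_i+1)\lambda_s \Delta_i}$. Imposing $p_i \geq \delta$ and taking logarithms yields
\begin{equation*}
\lambda_s \leq \frac{1}{(\gamma_i+1)\Delta_i}\log \frac{\gamma_i \rho_i + m\rho}{(\gamma_i + m)\delta(\overline{\mathcal{P}},\kappa)}
\end{equation*}
for every shard $i$. Dropping the (non-negative) $\gamma_i \rho_i$ summand inside the logarithm gives the slightly weaker but uniform sufficient bound used in the first line of (\ref{constrain}); the third line of (\ref{constrain}) is simply the requirement that the argument of this logarithm exceed one so that the bound is nontrivial, while the second line ensures $\lambda_s \geq \underline{\lambda}$ so that the ECP liveness requirement is met.

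To obtain the throughput I would observe that, by the same Poisson-process accounting used for Bitcoin in Theorem \ref{btc_tps}, the transaction rate confirmed in shard $i$ is proportional to $(\lambda_i + \lambda_s)\rho_i \cdot B$ where $B$ is the shared block capacity, so summing over shards gives a global throughput proportional to $\lambda_s(m\rho + \sum_i \gamma_i \rho_i)$, up to the same block-capacity prefactor $\log(\rho/\delta)/\Delta$ as in the Bitcoin benchmark. Substituting the tightest bound on $\lambda_s$, namely the one coming from the slowest shard, directly produces the stated bound (\ref{tps_manifold}). The equality case follows because taking $\lambda_s = \frac{1}{\overline{\Delta}}\log(\rho/\delta)$ saturates the constraint in the slowest shard (where $\Delta_i = \overline{\Delta}$ and the honest ratio degenerates so that $\gamma_i \rho_i$ is absorbed into the prefactor), while every faster shard still has slack in its per-shard constraint and can be tuned via $\gamma_i$ to meet the common $\lambda_s$ without violating security.

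The main obstacle I expect is book-keeping of the coupling between $\lambda_s$ (a single global rate) and the per-shard rates $\lambda_i$: the security constraint must simultaneously hold in every shard, but raising $\lambda_s$ to improve total throughput tightens the constraint most in the slowest shard. I would carefully argue that the slowest shard is the binding one, that the definition of $\delta(\overline{\mathcal{P}}, \kappa)$ must use $\underline{\Delta} = \min_i \Delta_i$ inside the $\sqrt{\cdot}$ factor (since that is where the worst-case $p_i$ bound on $p_i/(1-p_i)$ is achieved under $\lambda \geq \underline{\lambda}$), and that the three inequalities in (\ref{constrain}) together precisely guarantee the existence of a common $\lambda_s$ satisfying every shard's CP and ECP requirements. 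The remaining steps are routine algebraic manipulations of the exponential-logarithm identities produced by Theorem \ref{cp_property}.
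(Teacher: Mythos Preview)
Your overall approach matches the paper's: invert the CP error bound from Theorem~\ref{cp_property} to get $p_i \geq \delta(\overline{\mathcal{P}},\kappa)$, rewrite $p_i$ in terms of $\gamma_i$ and $\lambda_s$, drop the $\gamma_i\rho_i$ term in the numerator (the paper justifies this as the worst case where honest miners cannot know the exact in-shard honest fraction), deduce the per-shard upper bound on $\lambda_s$, and then sum $\rho_i(\lambda_i+\lambda_s)$ over shards.

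Two points need correction. First, your reading of the second and third constraints in~(\ref{constrain}) is off. The paper does not use them for the purposes you describe; instead it proves a dedicated lemma showing that those two inequalities \emph{together} force $p_i \leq \rho$, which is what makes the derived throughput expression non-negative and the chain $\delta(\overline{\mathcal{P}},\kappa) \leq p_i \leq \rho$ consistent. The third constraint is strictly stronger than ``argument of the log exceeds one'' (that would only give $\gamma_i < m(\rho/\delta - 1)$), and the second constraint is used as a \emph{lower} bound on $\lambda_s$ via ECP's $\lambda_s \geq \underline{\lambda}$, not the other way around.

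Second, the factor $\frac{1}{\Delta}\log(\rho/\delta)$ is not a ``block-capacity prefactor'': it \emph{is} the optimal value $\lambda_s^*$, obtained by saturating the slowest-shard constraint (where $\Delta_i=\overline{\Delta}$ forces $\gamma_i=0$). The paper proves a separate optimality lemma by contradiction to show that no other feasible choice of $(\lambda_s,\{\lambda_i\})$ does better: any $\lambda_s' < \lambda_s^*$ forces $\gamma_i' > \gamma_i^*$, but then the per-shard throughput $\frac{1}{\Delta_i}\log\frac{m\rho}{\delta(\gamma_i+m)}$ strictly decreases. Your sketch of the equality case has the right intuition but would need this argument to be complete, since the theorem asserts an upper bound over all admissible parameters, not just the bound at one particular $\lambda_s$.
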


It should be noted that Manifoldchain attains the throughput equivalent to that of Bitcoin, as presented in Theorem \ref{btc_tps}, for each individual shard when $\gamma_i$ is set to zero. The proofs for Theorems \ref{btc_tps} and \ref{mfd_tps} are deferred to Appendix \ref{mfd_tps_proof}.

We denote by $f_{usf}$ the USF mechanism and by $\prod_{mfd}(f_{usf}, \cdot, \cdot)$ the Manifoldchain protocol which deploys USF mechanism. Under this mechanism, each shard serves as a microcosm of the global network, equipped with the global honesty ratio $\rho$ and the maximum network delay $\Delta$, as stragglers are uniformly distributed among the shards. Formally, $\Delta_i = \Delta$ and $\rho_i = \rho$ are held for all $i$. The constraint, represented by Ineq. \ref{constrain}, necessitates that $\gamma_i$ equals zero to uphold the ECP property $\mathcal{ECP}(\kappa, \varepsilon(\kappa), \underline{\lambda})$ and the system security requirement $\overline{\mathcal{P}}$. In this scenario, $\lambda_i = \gamma_i \lambda_s = 0$ implies that only inclusive blocks are generated within any given shard. Furthermore, given that each inclusive block is broadcast across all shards and contributes to consensus, each shard eventually maintains chains with the same common prefix, causing Manifoldchain to regress to a state analogous to Bitcoin. Distinct from the original Bitcoin protocol where miners process all transactions, miners in Manifoldchain only process transactions within their respective shards. In the instance of $\gamma_i = 0$, Manifoldchain achieves a bounded throughput of $\frac{m\rho}{\Delta}\log \frac{\rho}{\delta(\overline{\mathcal{P}},\kappa)}$ according to Ineq.\ref{tps_manifold}, $\frac{\rho}{\Delta}\log \frac{\rho}{\delta(\overline{\mathcal{P}},\kappa)}$ for each shard. It implies that Manifoldchain equipped with the USF mechanism enables each shard to achieve a throughput equivalent to Bitcoin's, and to linearly scale the total throughput with the number of shards $m$.

We next consider the Manifoldchain protocol, $\prod_{mfd}(f_{bcsf}, \cdot, \cdot)$, which employs the BCSF protocol $f_{bcsf}$. This protocol allows for stragglers to be gathered into a single shard, thereby satisfying $\Delta_i < \Delta$ for some $i$. For a target shard $j$ where $\Delta_j < \Delta$, we find that $\gamma_j > 0$, as per Ineq.\ref{constrain}. Consequently, $\prod_{mfd}(f_{bcsf}, \cdot, \cdot)$ achieves a bounded throughput of $\frac{\rho}{\Delta}\log\frac{\rho}{\delta(\overline{\mathcal{P}}, \kappa)} + \frac{\gamma_j\rho_j}{\Delta}\log\frac{\rho}{\delta(\overline{\mathcal{P}}, \kappa)}$ in shard $j$. This represents an additional throughput contribution of $\frac{\gamma_j\rho_j}{\Delta}\log\frac{\rho}{\delta(\overline{\mathcal{P}}, \kappa)}$ compared with $\prod_{mfd}(f_{usf}, \cdot, \cdot)$. To quantify the magnitude of this throughput improvement in the Manifoldchain protocol within shard $j$, we define a metric $\mathcal{I}np(\prod_{mfd}, j)$:

\begin{equation}
\begin{split}
\mathcal{I}np(\prod_{mfd}, j) =& \frac{\gamma_j\rho_j}{\Delta}\log\frac{\rho}{\delta(\overline{\mathcal{P}}, \kappa)} / \frac{\rho}{\Delta}\log\frac{\rho}{\delta(\overline{\mathcal{P}}, \kappa)} \\
=& \frac{\gamma_j\rho_j}{\rho} > 0, \\
where&\quad \sum_j^m{\rho_j} = m\rho,\\
&\frac{1}{\gamma_j + 1} \frac{\log \frac{m\rho}{\delta(\overline{\mathcal{P}}, \kappa)(\gamma_j + m)}}{\log \frac{\rho}{\delta(\overline{\mathcal{P}}, \kappa)}} \geq \frac{\Delta_j}{\Delta}
\end{split}
\end{equation}

Hence, we denote by $\overline{\mathcal{I}np}(\prod_{mfd}) = \sum_{i}^m \mathcal{I}np(\prod_{mfd}, i)$ the aggregated improvement across all shards. Note that the honest ratio $\rho_j$ has a positive correlation with the improvement in shard $j$. Adversarial miners may concentrate their hashing power in a single shard to diminish the honest ratio $\rho_j$ of that shard, subsequently undermine the throughput improvement. However, this strategy does not efficiently reduce the total throughput improvement across all shards, given that the ratio of the total adversarial hashing power is constrained by $1-\rho$. We then estimate a readily attainable throughput improvement, $\mathcal{I}np'(\prod_{mfd}, j)$, by simply setting $\rho_j = \rho$. In a large-scale blockchain system with an overwhelmingly large $m \gg \gamma_j$, we estimate the maximum available $\gamma_i$ via the following equation.

\begin{equation}
\begin{split}
\frac{1}{\gamma_j + 1}\frac{\log \frac{m\rho}{\delta(\overline{\mathcal{P}}, \kappa)(\gamma_j + m)}}{\log\frac{\rho}{\delta(\overline{\mathcal{P}},\kappa)}} \approx \frac{1}{\gamma_j + 1} \geq \frac{\Delta_j}{\Delta}.  
\end{split}
\end{equation}

Consequently, we present a readily available improvement as follows:

\begin{equation}
\begin{split}
\mathcal{I}np'(\prod_{mfd}, j) \leq& \frac{\gamma_j\rho_j}{\rho} \\
\approx& \gamma_j \\
\approx& \frac{\Delta}{\Delta_j} - 1.
\end{split}
\end{equation}

The improvement increases almost linearly with the network delay gap $\frac{\Delta}{\Delta_j}$. In a shard where $\Delta_j = \frac{1}{5}\Delta$, the protocol $\prod_{mfd}(f_{bcsf}, \cdot, \cdot)$ achieves a throughput improvement approaching $400\%$ when compared with $\prod_{mfd}(f_{usf}, \cdot, \cdot)$.



\section{Honest Presence}\label{honest_presence_section}

We first present a lemma utilized for calculating the error probability of honest presence.

\begin{lemma}\label{probability_of_honest_existence_lemma}
    Given $n$ miners miners randomly distributed into $m$ shards, the probability $\varepsilon(n)$ there is no honest miners in some shards is bounded by $(m-1)(\frac{m-1}{m})^{n-1}$.
\end{lemma}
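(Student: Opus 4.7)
The plan is to use a union bound, refined by conditioning on the location of one distinguished miner so that we save a factor of $m/(m-1)$ relative to the naive bound. Let $A_j$ denote the event that shard $j$ contains none of the $n$ honest miners. The event whose probability we wish to bound is $\bigcup_{j=1}^{m} A_j$.

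First I would observe that since each miner is independently and uniformly assigned to one of the $m$ shards, $\Pr[A_j] = \left(\frac{m-1}{m}\right)^n$ for every $j$. A direct union bound then yields $m\left(\frac{m-1}{m}\right)^n$, which is already algebraically equal to the target bound $(m-1)\left(\frac{m-1}{m}\right)^{n-1}$ via the identity $m\left(\frac{m-1}{m}\right)^n = (m-1)\left(\frac{m-1}{m}\right)^{n-1}$. So the cleanest route is simply to verify this identity and we are done.

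If one prefers a probabilistically tighter derivation that matches the claimed form directly, I would instead proceed by fixing the first miner: whichever shard $k$ it is assigned to, shard $k$ is automatically nonempty, so $A_k$ cannot occur. Conditioning on this, only the remaining $m-1$ shards could possibly be empty, and each is empty precisely when none of the remaining $n-1$ miners lands in it — an event of probability $\left(\frac{m-1}{m}\right)^{n-1}$. A union bound over these $m-1$ shards then yields exactly $(m-1)\left(\frac{m-1}{m}\right)^{n-1}$, independently of which $k$ was realized, so the same bound holds unconditionally.

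There is no real obstacle here: the lemma is a standard balls-in-bins calculation, and the only subtlety is recognizing that the form of the bound stated in the lemma is the union bound with one miner's shard deterministically removed from the collection of possibly empty shards, equivalently a rewriting of the crude $m\left(\frac{m-1}{m}\right)^n$ bound. I would present the second derivation since it is the most natural way to see why the prefactor is $m-1$ rather than $m$.
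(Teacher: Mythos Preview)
Your proposal is correct and takes essentially the same approach as the paper: both are union bounds over the events $A_j$ that shard $j$ is empty, followed by the algebraic identity $m\left(\frac{m-1}{m}\right)^n = (m-1)\left(\frac{m-1}{m}\right)^{n-1}$. The only difference is cosmetic: the paper computes $\Pr[A_j]$ by writing out the multinomial mass function and renormalizing to an $(m-1)$-category multinomial that sums to $1$, whereas you compute $\Pr[A_j] = \left(\frac{m-1}{m}\right)^n$ directly; your route is shorter and your alternative conditioning argument gives a cleaner explanation of the $(m-1)$ prefactor.
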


\begin{proof}
    The distribution of $n$ miners into $m$ shards satisfies a Multinomial Distribution\cite{multinomial_distribution}. Specifically, it models the probability of counts for each side of a $m$-sided dice rolled $n$ times. For $n$ independent trials each of which leads to a success for exactly one of $m$ categories, with each category having a given fixed success probability, the multinomial distribution gives the probability of any particular combination of numbers of successes for the various categories. Mathematically, for each independent trial, we have $m$ possible mutually exclusive outcomes, with corresponding probabilities $p_0, ..., p_{m-1}$. Given a random distribution, $p_0=p_1=...=p_{m-1}=\frac{1}{m}=p$. The probability mass function of this multinomial distribution is:
    \begin{equation}
        \begin{split}
            f(x_0,...,x_{m-1};n,p) &= Pr[X_0=x_0, ..., X_m=x_m] \\
            &=\frac{n!}{x_0!\cdot ...\cdot x_{m-1}!}p^{x_0}\times ... \times p^{x_{m-1}},\\
            &=\frac{n!}{x_0!\cdot ...\cdot x_{m-1}!}p^{n},
        \end{split}
    \end{equation}
    where $\sum_{i=0}^{m-1}x_i = n$. 

    The probability that there is at least one honest miner in each shard is denoted as $Pr[x_0 \geq 1, x_1 \geq 1, ..., x_{n-1} \geq 1]$. Hence, 

    \begin{equation}
        \begin{split}
            Pr[x_0 \geq 1, &x_1 \geq 1, ..., x_{m-1} \geq 1] \\
            \geq& 1 - \sum_{x_1, x_2, ..., x_{m-1}} Pr[x_0=0, x_1 \geq 0, ..., x_{m-1} \geq 0] \\
            &- \sum_{x_0, x_2, ..., x_{m-1}} Pr[x_0\geq 0, x_1 = 0, ..., x_{m-1} \geq 0] \\
            &- ... \\
            &- \sum_{x_0, x_1, ..., x_{m-2}} Pr[x_0\geq 0, ..., x_{m-2} \geq 0, x_{m-1}=0] \\
            =& 1 - \sum_{x_1, x_2, ..., x_{m-1}} \frac{n!}{x_1!\cdot ... \cdot x_{m-1}}p^{n} \\
            &- ...\\
            &- \sum_{x_0, x_1, ..., x_{m-2}} \frac{n!}{x_0!\cdot ... \cdot x_{m-2}}p^{n} \\
            =& 1 - \frac{p^n}{{p'}^n}\sum_{x_1, x_2, ..., x_{m-1}} \frac{n!}{x_1!\cdot ... \cdot x_{m-1}}{p'}^{n} \\
            &- ...\\
            &- \frac{p^n}{{p'}^n}\sum_{x_0, x_1, ..., x_{m-2}} \frac{n!}{x_0!\cdot ... \cdot x_{m-2}}{p'}^{n}\\
            =& 1 - n \cdot \frac{p^n}{{p'}^n}\sum_{x_0, x_1, ..., x_{m-2}} \frac{n!}{x_0!\cdot ... \cdot x_{m-2}}{p'}^{n},
        \end{split}
    \end{equation}
    where $p' = \frac{1}{m-1}$, $\frac{n!}{x_0!\cdot ... \cdot x_{m-2}}{p'}^{n}$ represents the probability mass function of another multinomial distribution where $n$ miners are randomly distributed into $m-1$ shards. Given 
    \begin{equation}
        \begin{split}
            \sum_{x_0, x_1, ..., x_{m-2}} \frac{n!}{x_0!\cdot ... \cdot x_{m-2}}{p'}^{n} = 1,
        \end{split}
    \end{equation}we have
    \begin{equation}
        \begin{split}
            Pr[x_0 \geq 1, &x_1 \geq 1, ..., x_{m-1} \geq 1] \\
            \geq& 1-m\cdot \frac{p^n}{{p'}^n} \\
            =& 1 - m\cdot \frac{(\frac{1}{m})^n}{(\frac{1}{m-1})^n} \\
            =& 1 - (m-1)(\frac{m-1}{m})^{n-1}.
        \end{split}
    \end{equation}
    Subsequently, $\varepsilon(n) \leq (m-1)(\frac{m-1}{m})^{n-1}$ get proved.
    
\end{proof}

Now we can utilize this lemma to prove Lemma \ref{honest_presence_proof}.

\begin{proof}
In the static setting, adopting the shard partition mechanism in Section \ref{formation}, each miner has an equal probability of being assigned to any Y-region. Once assigned to a specific Y-region, it also has an equal probability of being assigned to each shard. We model this process by uniformly distributing $\rho N$ honest miners across $m = S_X \cdot S_Y$ shards. According to Lemma~\ref{probability_of_honest_existence_lemma} in Appendix \ref{honest_presence_section}, the error probability $\varepsilon'(\rho N)$ of an absence of honest miners in certain shards adheres to an upper bound $(m-1)(\frac{m-1}{m})^{\rho N-1}$. Additionally, without preemptive bandwidth information, the adversary cannot selectively corrupt miners based on specific bandwidths. Randomly selecting miners for corruption makes it nearly impossible for the adversary to corrupt all miners within a shard.

In the dynamic setting, the protocol implements periodic rotations to generate a new credential-chain for shard re-allocation. 
There are at least $\underline{\alpha} \rho N$ new honest miners in each shard formation phase. Similar to the static setting, these new miners are uniformly distributed into $m$ shards. The error probability $\varepsilon''(\underline{\alpha} \rho N)$ of an absence of honest miners in certain shards adheres to an upper bound $(m-1)(\frac{m-1}{m})^{\underline{\alpha} \rho N-1}$. The mildly adaptive adversary cannot transfer corrupted miners within a single rotation round. Moreover, during the shard formation phase, when new miners join, the adversary cannot corrupt miners with specific bandwidths due to a lack of preemptive bandwidth information about these miners. Therefore, despite the adversary's attempt to corrupt an entire shard through the transfer of corrupted miners, it ultimately fails since new miners include honest miners. 

Combined $\varepsilon'(\rho N)$ and $\varepsilon''(\underline{\alpha} \rho N)$, the error probability is bounded by $\varepsilon(\underline{\alpha} \rho N)$.
\end{proof}

\noindent {\bf Imperfect bandwidth estimation}. Given slight deviations in bandwidth estimation, the $n$ miners may not be perfectly evenly distributed across the $m$ shards. In shard $i$, it's possible that $p_i < p = \frac{1}{m}$. To compute the upper bound of the error probability in such scenarios, we consider a new case where $p_i'$ represents the minimum $p_i$ from the original distribution, i.e.,$p_i' = \min_i p_i < \frac{1}{m}$. Consequently, $m' = \frac{1}{p'} > m$, which can replace $m$ in Lemma \ref{probability_of_honest_existence_lemma} to calculate the new error probability. This error probability is still negligible with an overwhelmingly large $n$.

\section{Proof of Security Properties}\label{proof_of_security_properties}

In this section, we prove that Manifoldchain hold the three aforementioned security properties. Recalling the categorization of blocks into three types, namely exclusive block, inclusive block, and transaction block. We note that the former two only contain compressed information and hence maintain a constant block size of $\mathcal{O}(1)$, while the transaction block encompasses $\mathtt{b}$ specific transactions, resulting in a size of $\mathcal{O}(B)$. In most scenarios, the size of a transaction block is significantly larger compared to that of exclusive and inclusive blocks. Consequently, the broadcasting time for exclusive block and inclusive block is deemed negligible in comparison to that of transaction blocks. Note that a transaction block is generated concomitantly with its corresponding exclusive/inclusive block. We define a block as a full block when it is composed a matching pair of exclusive/inclusive block and transaction block. Also, we refer to the inclusive blocks from other shards as light blocks, and assume that the delivery time for a light block is negligible or zero. We use the term abstract blocks to encompass both full blocks and light blocks.

\subsection{Severe Execution Environment}\label{severe_execution_environment_section}

We consider a severe execution environment $\tilde{\mathcal{E}nv}(\Delta, \rho)$, where any delivered full block is delayed for $\Delta$ seconds and during the delivering any mining operation is considered invalid.

\begin{definition}\label{severe_execution_environment}
Given a severe execution environment $\tilde{\mathcal{E}nv}(\Delta, \rho)$ and the corresponding original environment $\mathcal{E}nv(\Delta, \rho)$, the execution of $\prod_{mfd}(f, \lambda_s, \{\lambda_i\}^m)$ in $\tilde{\mathcal{E}nv}(\Delta, \rho)$ remains the same as the execution in $\mathcal{E}nv(\Delta, \rho)$ except the following hypothetical operations:
\begin{enumerate}
    \item All honest full blocks in shard $i$ are virtually delayed for $\Delta_i$ seconds;
    \item Whenever a honest full block is virtually delayed, any PoW trials by honest miners within the same shard are considered invalid and return $\mathtt{false}$;
    \item If two or more abstract blocks are mined at the same time, ignore one of them and randomly discard the others.
\end{enumerate}
Through sharding, it satisfies that $\{\tilde{\mathcal{E}nv}(\Delta_i, \rho_i)\}^m = f(\tilde{\mathcal{E}nv}(\Delta, \rho))$ and $\{\mathcal{E}nv(\Delta_i, \rho_i)\}^m = f(\mathcal{E}nv(\Delta, \rho))$.
\end{definition}

The severe execution environment eliminates all forking situations in the original execution environment, ensuring that each honest abstract block is positioned at distinct heights. However, the power of the adversary is amplified in the severe execution environment. Because the delivery time of full block is fixed to the maximum value and honest miners freeze during the delivery time, fewer honest blocks are generated within a given time interval. Consequently, it becomes relatively easier for the adversary to generate a longer chain and violate the consistency.

We present our first lemma as follows:

\begin{lemma}\label{distinct_height}
If $\prod_{mfd}$ is executed in $\tilde{\mathcal{E}nv}(\Delta, \rho)$, then within any given shard, the following statements hold:
\begin{enumerate}
    \item Any honest abstract blocks mined after time $r$ are higher than honest blocks mined prior to time $r$.
    \item All honest abstract blocks are uniquely located at different heights.
\end{enumerate}
\end{lemma}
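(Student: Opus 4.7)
The plan is to establish Statement~1 directly by a short case analysis and then obtain Statement~2 as an immediate corollary using Property~3 of the severe environment. Fix a shard $i$ and any two honest abstract blocks $B_1$, $B_2$ both contributing to shard $i$'s chain, mined at times $r_1 < r_2$; the goal is $\mathrm{height}_i(B_2) > \mathrm{height}_i(B_1)$, where $\mathrm{height}_i$ denotes depth in shard $i$'s chain (for an inclusive block, one plus the maximum height of the parents listed in its $\mathtt{inter\_parents}$ for shard $i$).

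The case split is on whether $B_1$ is a full block or a light block in shard $i$. In the full-block case, i.e., $B_1$ is an exclusive block in shard $i$ or an inclusive block mined by a shard-$i$ miner, Property~2 of the severe environment invalidates every honest PoW trial in shard $i$ throughout the delivery window $[r_1, r_1 + \Delta_i]$. If $B_2$ is also mined in shard $i$, then $r_2 \geq r_1 + \Delta_i$, so $B_1$ has been delivered and $B_2$ extends a chain that contains $B_1$. If instead $B_2$ is an honest inclusive block mined in some shard $k \neq i$, then the consensus-block header of $B_1$ is a light block in shard $k$, delivered with negligible delay; the shard-$k$ miner thus sees $B_1$ when assembling $\mathtt{global\_parents}$ and sets the shard-$i$ entry of $\mathtt{inter\_parents}$ to $B_1$ or a descendant, again yielding $\mathrm{height}_i(B_2) \geq \mathrm{height}_i(B_1) + 1$.

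In the light-block case, $B_1$ is an inclusive block mined by a miner in shard $j \neq i$, and Property~2 only freezes shard $j$. The argument here relies purely on header propagation: since the light block of $B_1$ reaches every shard with negligible delay, any honest miner active at time $r_2 > r_1$---whether a shard-$i$ miner extending shard $i$'s chain directly, or a shard-$k$ miner mining an inclusive block---has already registered $B_1$. Consequently $B_2$'s shard-$i$ parent sits at height at least $\mathrm{height}_i(B_1)$, so $\mathrm{height}_i(B_2) > \mathrm{height}_i(B_1)$, completing Statement~1. Statement~2 then follows: distinct mining times yield distinct heights by Statement~1, while simultaneous mining is resolved by Property~3, which discards all but one of the colliding honest abstract blocks.

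The main obstacle will be the cross-shard subcase in which $B_2$ is an inclusive block mined outside shard $i$. I would need to justify carefully that the $\mathtt{global\_parents}$ construction, coupled with instantaneous propagation of all consensus-block headers across shards, compels the shard-$k$ miner's shard-$i$ $\mathtt{inter\_parent}$ entry to sit at height at least $\mathrm{height}_i(B_1)$---even when $B_1$ is itself unverified or lives on a fork, leveraging that $\mathtt{inter\_parents}$ explicitly aggregates the tips of all such forks rather than only the verified longest chain.
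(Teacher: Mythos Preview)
Your proposal is correct and follows essentially the same approach as the paper: a case split on whether the earlier honest block $B_1$ is a full block or a light block in the given shard, combined with the severe-environment properties (freezing during full-block delivery, instantaneous header propagation for light blocks, and tie-breaking via Property~3) to force the later honest block strictly higher. Your treatment is in fact more explicit than the paper's about the cross-shard subcase where $B_2$ is an inclusive block mined outside shard~$i$, which the paper's contradiction argument handles only implicitly.
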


\begin{proof}
Assuming for any honest miners $i$ and $j$, there exist $t_1$ and $t_2$ such that $t_1 < t_2$, miner $i$ mines an abstract block at time $t_1$, and miner $j$ mines an abstract block at time $t_2$. It holds that $len(\mathcal{C}_i^{t_1}) \geq len(\mathcal{C}_j^{t_2})$. Then we consider the following cases

\begin{itemize}
\item If $\mathtt{head}(\mathcal{C}_{i}^{t_1})$ is a full block:
\begin{enumerate}
    \item If $t_1 > t_2 - \Delta_i$, it implies that $\mathtt{head}(\mathcal{C}_{j}^{t_2})$ is mined during the delivery of $\mathtt{head}(\mathcal{C}{i}^{t_1})$. This contradicts the hypothetical operations (1) and (2) in $\tilde{\mathcal{E}nv}(\Delta, \rho)$.
    \item If $t_1 = t_2 - \Delta_i$, it means that $\mathtt{head}(\mathcal{C}_{i}^{t_1})$ and $\mathtt{head}(\mathcal{C}_{i}^{t_1})$ are mined at the same time. This contradicts the hypothetical operation (3) in $\tilde{\mathcal{E}nv}(\Delta, \rho)$.
    \item If $t_1 < t_2 - \Delta_i$, $\mathcal{C}_{i}^{t_1}$ appears in the view of each honest miner since time $t_2$. However, $\mathtt{head}(\mathcal{C}_{j}^{t_2})$ is adversarial because it does not extend the longest chain. This contradicts the assumption.
\end{enumerate}
\item If $\mathtt{head}(\mathcal{C}_{i}^{t_1})$ is a light block, $\mathcal{C}_{i}^{t_1}$ appears in the view of each honest miner after time $t_1$. $\mathtt{head}(\mathcal{C}_{j}^{t_2})$ is adversarial because it does not extend the longest chain. This contradicts the assumption.
\end{itemize}

Therefore, Lemma \ref{distinct_height} is proven.
\end{proof}


It is challenging to comprehensively discuss all types of attack strategies. However, we demonstrate that the private-mining attack stands out as the most efficient strategy within a severe environment. Specifically, when targeting a transaction $\mathtt{tx}$ that emerges in a newly minted block at time $r$, the private-mining attack includes two primary phases that undermine the consistency of $\mathtt{tx}$. During the initial phase before time $r$, the adversary attempts to establish an advantage over honest miners. Subsequently, in the subsequent phase commencing at time $r$, the adversary engages in a race against honest miners, striving to mine a private chain with a larger length of at least $len(\mathcal{C}^r) + \kappa + 1$ than that of public longest chain. Here, $\mathcal{C}^r$ represents the longest publicly available chain at time $r$. Formally, we define the ``advantage'' of the adversary as follows:

\begin{definition}
The adversary's advantage $\mathcal{A}dv_t$ at time $t$ is defined as the gap between the length of the longest chain mined by time $t$ and the length of the longest chain mined by honest miners up to time $t$.
\end{definition}

$\mathcal{A}dv$ is non-negative due to its definition. For convenience, we define a chain as being \textit{public} at time $t$ if it is present in the view of all honest miners at that time. Additionally, we define a chain as \textit{trustworthy} at time $t$ if its validated sub-chain is at least as long as the validated sub-chain of any public chain present at time $t$. Specifically, the private-mining attack executes as follows:
\begin{itemize}
    \item (\textbf{Cheating Phase}.) Up to time $r$, the adversary mines on the longest chain to try to increase its advantage. 
    \item (\textbf{Race Phase}.) From time $r$ onwards, the adversary tries to mine a private chain that deliberately excludes the target transaction $\mathtt{tx}$. If $\mathtt{tx}$ becomes confirmed and the private chain is longer than a trustworthy chain, the adversary publishes the private chain, thereby successfully violating the consistency of $\mathtt{tx}$. If this event never occur, the adversary fail to violate the consistency.
\end{itemize}

It can be demonstrated straightforwardly that the private-mining attack maximizes the advantage until time $r$. Based on the given definition, the advantage increases by a maximum of 1 when an adversarial block is mined, and decreases by 1 if it is positive when a honest block is mined. The private mining ensures that the advantage increase exactly by 1 whenever an adversarial block is mined, and decrease exactly by 1 if it is positive upon the mining of a honest block. Therefore, the private-mining attack maximizes the advantage. 

We now present our theorem, which demonstrates that the private-mining attack is the most efficient strategy for compromising the consistency of a target transaction within a severe execution environment.

\begin{theorem}\label{private_mining_worse}
Consider the Manifoldchain protocol $\prod_{mfd}$ and a severe execution environment $\tilde{\mathcal{E}nv}(\Delta,\rho)$. Within any shard $i$, if there exists any attack that successfully violates the consistency of a target transaction $\mathtt{tx}$, then the private-mining attack also achieves the violation of $\mathtt{tx}$'s consistency.
\end{theorem}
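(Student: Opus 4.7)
The plan is to prove the theorem by a coupling argument. I would fix a single realization $\omega$ of the block-arrival Poisson process in shard $i$, including the honest/adversarial label of every arrival, and run both an arbitrary successful attack $\mathcal{A}$ and the private-mining attack $\mathcal{A}_{priv}$ on the same $\omega$. The goal is to show that whenever $\mathcal{A}$ violates consistency of $\mathtt{tx}$ under $\omega$, so does $\mathcal{A}_{priv}$.

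First I would pin down the role of the severe environment. By Lemma~\ref{distinct_height}, all honest abstract blocks in shard $i$ occupy distinct, strictly increasing heights, so the length of the public (honest) longest chain at any time $t$ is exactly $H_t$, the number of honest arrivals in $\omega$ up to time $t$. Crucially, $H_t$ depends only on $\omega$ and not on the adversarial strategy, so any variation in the public chain length across strategies is ruled out; only the adversarial chain length varies.

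Next I would establish the key domination lemma: under the shared $\omega$, $\mathcal{A}dv_t^{\mathcal{A}_{priv}} \geq \mathcal{A}dv_t^{\mathcal{A}}$ for every $t$. This is proved by induction over the arrival events of $\omega$. At an adversarial arrival any strategy can increase the advantage by at most $1$, which $\mathcal{A}_{priv}$ achieves by always extending the tip of its withheld private chain; at an honest arrival $H_t$ grows by $1$, so any strictly positive advantage decreases by exactly $1$ under $\mathcal{A}_{priv}$, while under $\mathcal{A}$ it decreases by at least as much. Combined with the observation that the adversarial chain length under $\mathcal{A}_{priv}$ is $H_t + \mathcal{A}dv_t^{\mathcal{A}_{priv}}$, this gives that $\mathcal{A}_{priv}$ holds the longest adversarial chain among all strategies at every $t$.

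Finally I would translate domination into success. If $\mathcal{A}$ violates consistency at some time $s$, then $\mathtt{tx}$ is $\kappa$-confirmed on the honest chain and $\mathcal{A}$ possesses a fork of length at least $H_s$ that does not contain $\mathtt{tx}$; hence $\mathcal{A}dv_s^{\mathcal{A}} \geq 0$. Under $\mathcal{A}_{priv}$ on the same $\omega$, the private chain at time $s$ has length $H_s + \mathcal{A}dv_s^{\mathcal{A}_{priv}} \geq H_s$, and by construction it omits $\mathtt{tx}$ (which arrives at $r \leq s$ and is excluded throughout the race phase). Publishing this chain at time $s$ therefore overrides the honest chain containing $\mathtt{tx}$, violating consistency. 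The main obstacle I expect is the careful handling of the cheating-phase behavior before $r$: $\mathcal{A}$ may have built ``useless'' forks that do not contribute to the final attacking chain, whereas $\mathcal{A}_{priv}$ must maintain a single coherent private chain that can be continued after $r$ while still excluding $\mathtt{tx}$. Reconciling these two pictures requires tracking advantage rather than any particular fork, which is precisely why the distinct-heights guarantee from Lemma~\ref{distinct_height} is indispensable for reducing chain-length comparisons to counts of honest versus adversarial arrivals.
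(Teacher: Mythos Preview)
The proposal has a genuine gap. The claim that ``the length of the public (honest) longest chain at any time $t$ is exactly $H_t$'' and is strategy-independent is false: Lemma~\ref{distinct_height} guarantees distinct heights for honest blocks, but honest miners extend the longest \emph{visible} chain, which can include adversarial blocks that $\mathcal{A}$ chose to publish. Consequently the height of the highest honest block under an arbitrary $\mathcal{A}$ can exceed $H_t$; it equals $H_t$ only under private mining, where nothing is published before the end.

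This undercuts both your domination lemma and your translation step. The inequality $\mathcal{A}dv_t^{\mathcal{A}_{priv}} \ge \mathcal{A}dv_t^{\mathcal{A}}$ does hold during the cheating phase (and the paper uses exactly this), but it fails for $t > r$: once the race starts, the private-mining adversary is committed to its withheld fork, so if that fork falls behind the honest chain an adversarial arrival no longer raises the advantage (the adversary cannot switch to the honest tip without re-including $\mathtt{tx}$), whereas an attacker who publishes on the honest chain still gains $+1$ per adversarial arrival. More fatally, the identity ``private-chain length $= H_s + \mathcal{A}dv_s^{\mathcal{A}_{priv}}$'' is wrong precisely when the private chain has fallen behind: then $\mathcal{A}dv_s^{\mathcal{A}_{priv}} = 0$ but the private chain is strictly shorter than $H_s$, so you cannot conclude it matches the honest chain. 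You also never verify that $\mathtt{tx}$ is $\kappa$-confirmed at time $s$ under $\mathcal{A}_{priv}$; confirmation depth depends on the strategy-dependent public chain, not on $H_s$ alone.

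The paper closes these gaps with a block-counting argument rather than advantage domination past $r$: from the two conflicting chains $\mathtt{c},\mathtt{d}$ produced by the arbitrary attack it extracts a lower bound $A_{\tau,\tau_+}\ge h_+ - h_{\mathtt{a}} + \max(h_{\mathtt{u}}-h_{\mathtt{b}}+1,0)$ on adversarial arrivals (Lemma~\ref{adv_count}) and an upper bound $H_{\tau,\tau_+-\Delta_i}\le h_+ - h_{\mathtt{u}}$ on honest arrivals in the relevant window, and then verifies directly that these counts, combined with the maximized cheating-phase advantage $\mathcal{A}dv_\tau'\ge\mathcal{A}dv_\tau$, force both $|\mathcal{C}_A|\ge|\mathcal{C}_H|$ and $|\mathcal{C}_A|-h_{\mathtt{u}'}\ge\kappa$ for the private-mining run. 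Your advantage-domination idea is the right tool for the cheating phase, but after $r$ you need these counting lemmas, not an induction on arrivals.
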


\begin{proof}
Let $h_{\mathcal{B}}$ denote the height of an abstract block $\mathcal{B}$. For 
convenience, we refer to the abstract block as the block in this proof segment. Suppose there is a hypothetical attack that compromises the consistency of $\mathtt{tx}$ included in block $\mathtt{b}$. Under the hypothetical attack, we consider two \textbf{minimum} chains $\mathtt{c}$ and $\mathtt{d}$, which leads to the consistency violation event. These two chains satisfies the following conditions:

\begin{itemize}
\item \textbf{Condition 1:} chain $\mathtt{c}$ contains block $b$ that includes $\mathtt{tx}$ and has a length of $len(\mathtt{c}) \geq h_{\mathtt{b}} + \kappa - 1$. Chain $\mathtt{c}$ is trustworthy at time $t_{\mathtt{c}}$.
\item \textbf{Condition 2:} chain $\mathtt{d}$ does not contain block $\mathtt{b}$ and does not include $\mathtt{tx}$ in any block at heights up to $h_{\mathtt{b}} - 1$, while maintaining a length of $len(\mathtt{d}) \geq h_{\mathtt{b}} + \kappa - 1$. Chain $\mathtt{d}$ is trustworthy at time $t_{\mathtt{d}}$.
\end{itemize}

We define the larger length between the two as $len_+$, and the later time as $\tau_+$, which satisfies:

\begin{equation}
\begin{split}
&len_+ = \min\limits_{\mathtt{c}, \mathtt{d}}\max(len(\mathtt{c}), len(\mathtt{d})) \geq h_{\mathtt{b}} + \kappa - 1, \\
&\tau_+ = \max(t_{\mathtt{c}}, t_{\mathtt{d}}).
\end{split}
\end{equation}

Supposed that $\mathtt{b}$ is mined after $\tau$ (mined during $(\tau, \tau_+]$). Notice that $b$ may not be the highest honest block after time $\tau$. Let block $\mathtt{u}$ be the highest honest block mined by time $\tau$, and block $\mathtt{a}$ be the highest block on chain $\mathtt{d}$ at time $\tau$. By definition we have $\mathcal{A}dv_{\tau} + h_{\mathtt{u}} \geq h_{\mathtt{a}}$. We consider the following two situations, illustrated by Fig. \ref{fig:case_1}:

\begin{itemize}
    \item $h_{\mathtt{u}} \leq h_{\mathtt{b}} - 1$;
    \item $h_{\mathtt{u}} \geq h_{\mathtt{b}}$.
\end{itemize}

\begin{figure*}
  \centering
  \includegraphics[width=14cm]{./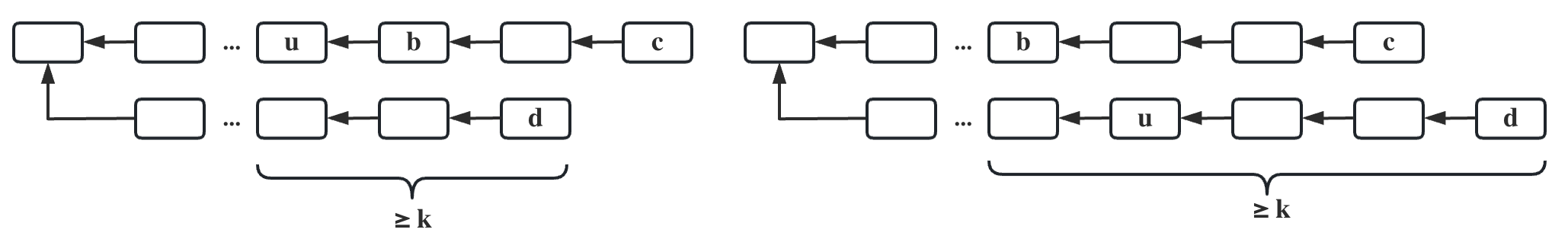}
  \caption{$h_\mathtt{u} \leq h_\mathtt{b} + 1$ or $h_\mathtt{u} \geq h_\mathtt{b}$}
  \label{fig:case_1}
\end{figure*}

Let $H_{j,k}$ be the number of honest blocks mined during $(j, k]$, and $A_{j,K}$ be the number of adversarial blocks during $(j, k]$, we now present our second lemma:

\begin{lemma}
$H_{\tau, \tau_+ - \Delta_i} \leq h_+ - h_{\mathtt{u}}$.
\end{lemma}

\begin{proof}
Assuming $H_{\tau, \tau_+ - \Delta_i} \geq h_+ - h_{\mathtt{u}} + 1$. Since $\mathtt{u}$ is the highest honest block mined by time $\tau$. According to Lemma \ref{distinct_height}, there must exist a public chain with a length of at least $h_{\mathtt{u}} + h_+ - h_{\mathtt{u}} + 1 = h_+ + 1$ at time $\tau_+$. In this case, either chain $\mathtt{c}$ or chain $\mathtt{d}$ is not trustworthy, which contradicts its definition.
\end{proof}

Then we present next lemma as follows:

\begin{lemma}\label{adversarial_base}
There exist an adversarial block mined during $(\tau, \tau_+]$ on each height from $h_{\mathtt{a}} + 1$ to $h_+$.
\end{lemma}

\begin{figure}
  \centering
  \includegraphics[width=8cm]{./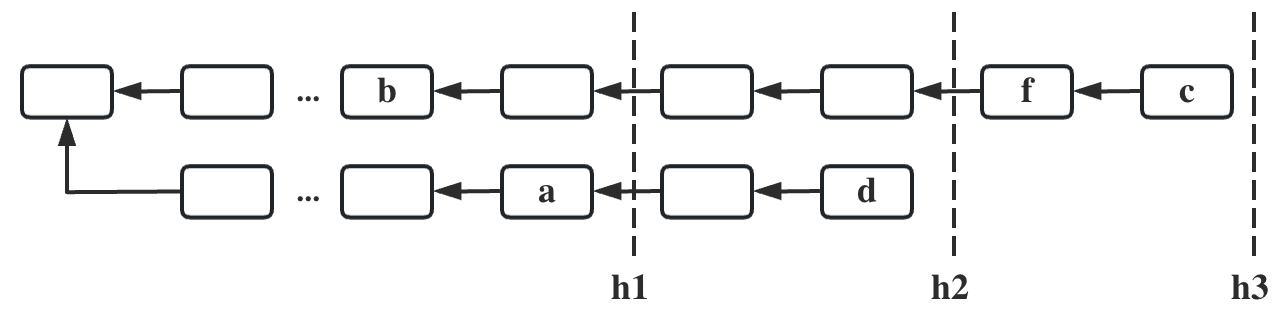}
  \caption{$h_\mathtt{a} \geq h_{\mathtt{b}} - 1, h_{\mathtt{c}} \geq h_{\mathtt{d}}$}
  \label{fig:case_2}
\end{figure}

\begin{figure}
  \centering
  \includegraphics[width=8cm]{./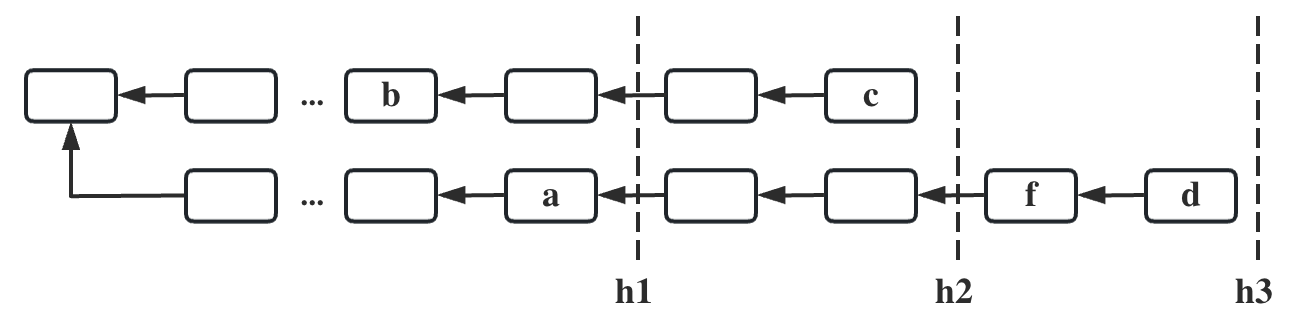}
  \caption{$h_\mathtt{a} \geq h_{\mathtt{b}} - 1, h_{\mathtt{c}} < h_{\mathtt{d}}$}
  \label{fig:case_3}
\end{figure}

\begin{figure}
  \centering
  \includegraphics[width=8cm]{./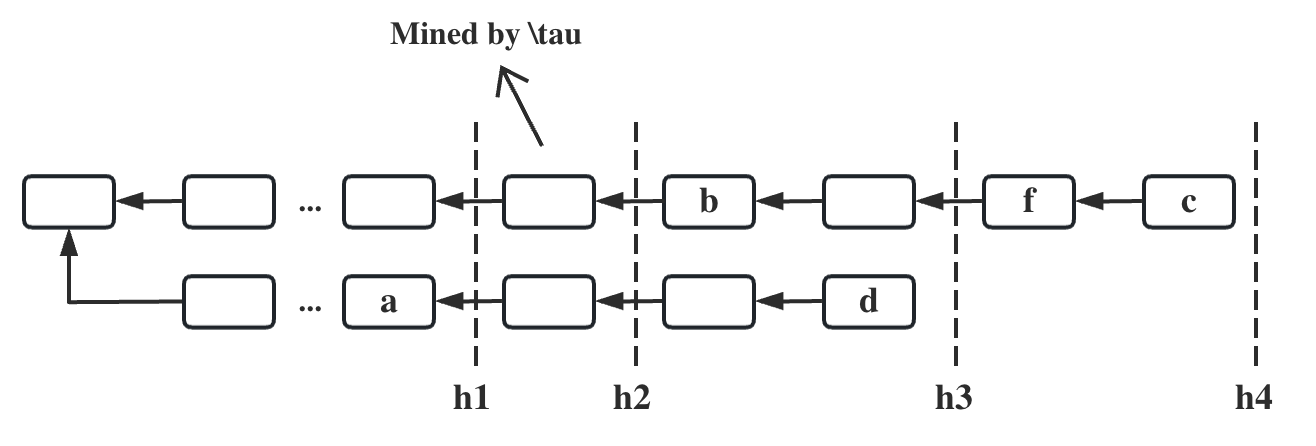}
  \caption{$h_{\mathtt{a}} < h_{\mathtt{b}} - 1$}
  \label{fig:case_4}
\end{figure}

\begin{proof}
We prove this lemma case by case. Supposed $\mathtt{e}$ represents the longer chain among $\mathtt{c}$ and $\mathtt{d}$.

\textbf{Case 1:} $h_{\mathtt{a}} \geq h_{\mathtt{b}} - 1$. The proof methodology is shown in Fig. \ref{fig:case_2} and Fig. \ref{fig:case_3}.

\begin{itemize}
    \item From $h_1$ to $h_2$, every height contains an adversarial block mined during $(\tau, \tau_+]$ because honest blocks locate at different heights.
    \item From $h_2$ to $h_3$, we assume that there is a honest block $\mathtt{f}$ locating at these heights in $\mathtt{e}$. Supposed $\mathcal{C}' = \mathcal{C}^{\rceil 1}$, where $head(\mathcal{C}) = \mathtt{f}$. As $\mathcal{C}'$ is always trustworthy before $\mathcal{C}$, $\mathcal{C}'$ becomes a shorter chain than $\mathtt{e}$ which satisfies Condition 1 and Condition 2 before $\mathtt{e}$ is trustworthy. Thus, $\mathtt{e}$ contradicts the definition of being one of the two minimum chains satisfying Condition 1 and Condition 2.
\end{itemize} 

\textbf{Case 2:} $h_{\mathtt{a}} < h_{\mathtt{b}} - 1$. Fig. \ref{fig:case_4} demonstrates the proof methodology. Through Case 1, we have proved that every height from $h_2$ to $h_4$ contains an adversarial block mined during $(\tau, \tau_+]$. From $h_1$ to $h_2$, only blocks in $\mathtt{d}$ are mined during $(\tau,\tau_+]$. These blocks are adversarial because they does not include $\mathtt{tx}$ according to the definition, while honest blocks would have included $\mathtt{tx}$.

Thus we complete the proof of Lemma \ref{adversarial_base}. 
\end{proof}

Furthermore, we present another significant lemma as follows:

\begin{lemma}\label{adv_count}
$A_{\tau, \tau_+} \geq h_+ - h_{\mathtt{a}} + \max(h_{\mathtt{u}} - h_{\mathtt{b}} + 1, 0)$.
\end{lemma}

\begin{figure}
  \centering
  \includegraphics[width=8cm]{./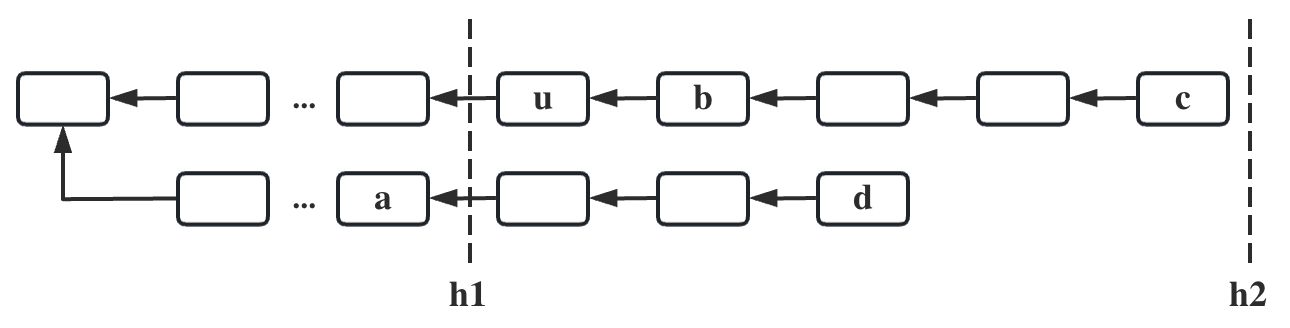}
  \caption{$h_{\mathtt{u}} \leq h_{\mathtt{b}}-1$}
  \label{fig:case_5}
\end{figure}

\begin{figure}
  \centering
  \includegraphics[width=8cm]{./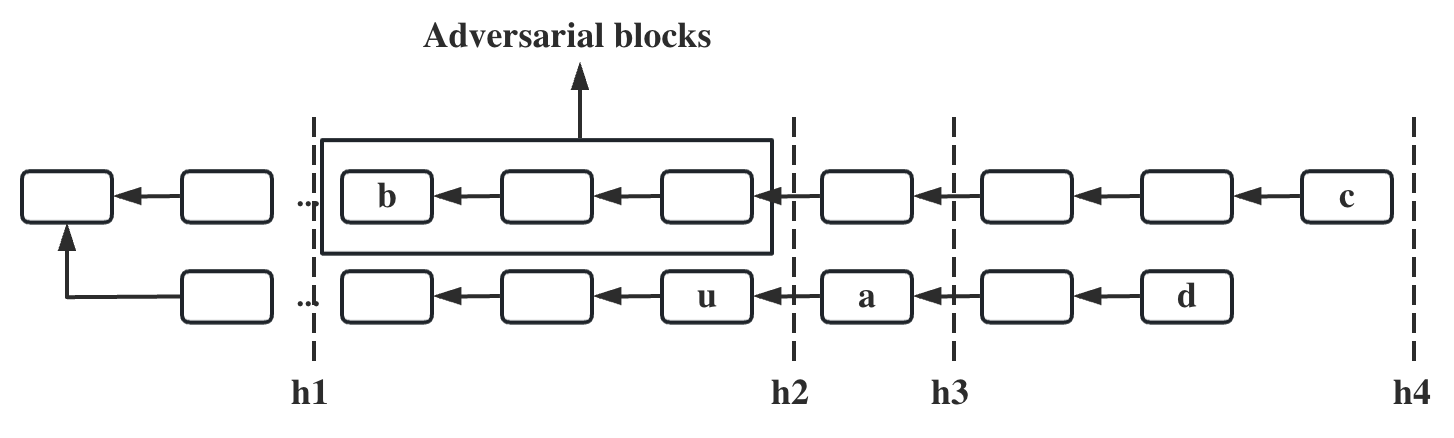}
  \caption{$h_{\mathtt{u}} \geq h_{\mathtt{b}}$ and $h_{\mathtt{u}} \leq h_{\mathtt{a}}$}
  \label{fig:case_6}
\end{figure}

\begin{figure}
  \centering
  \includegraphics[width=8cm]{./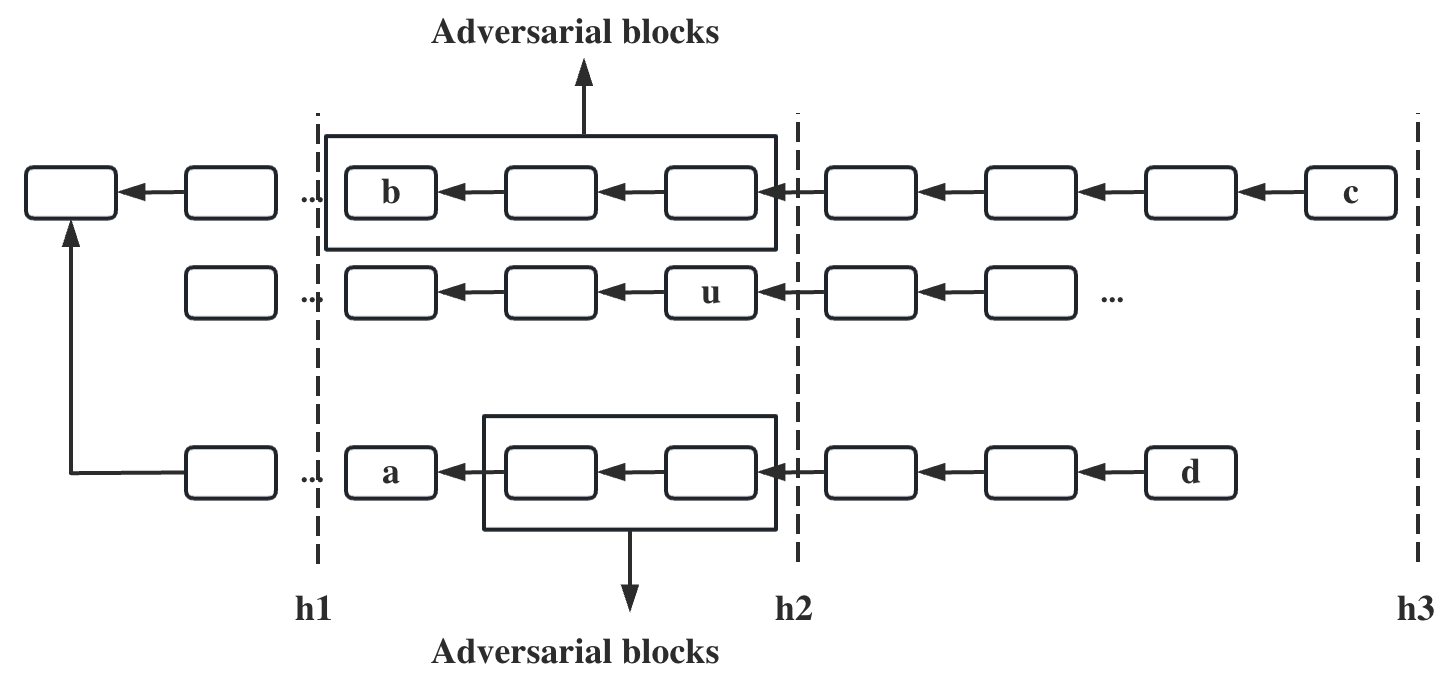}
  \caption{$h_{\mathtt{u}} \geq h_{\mathtt{b}}$ and $h_{\mathtt{u}} > h_{\mathtt{a}}$}
  \label{fig:case_7}
\end{figure}

\begin{proof}
We prove this lemma case by case.

\item \textbf{Case 1:} $h_{\mathtt{u}} \leq h_{\mathtt{b}} - 1$, which is demonstrated in Fig. \ref{fig:case_5}. According to Lemma \ref{distinct_height}, there exists an adversarial block at every height from $h_{\mathtt{a}}$ to $h_+$. There are a total of $h_+ - h_{\mathtt{a}}$ adversarial blocks mined during the interval $(\tau, \tau_+]$ from $h_1$ to $h_2$.

\item \textbf{Case 2:} $h_{\mathtt{u}} \geq h_{\mathtt{b}}$ and $h_{\mathtt{u}} \leq h_{\mathtt{a}}$, illustrated in Fig. \ref{fig:case_6}. From $h_3$ to $h_4$, there are at least $h_+ - h_{\mathtt{a}}$ adversarial blocks due to Lemma \ref{distinct_height}. Because block $\mathtt{u}$ is the highest honest block by time $\tau$, any honest block mined after $\tau$ must higher than $h_{\mathtt{u}}$. Thus the $h_{\mathtt{u}} - h_{\mathtt{b}} + 1$ blocks in $\mathtt{c}$ of height between $h_1$ and $h_2$ are all adversarial. There are totally $h_+ - h_{\mathtt{a}} + (h_{\mathtt{u}} - h_{\mathtt{b}} + 1)$ adversarial blocks mined during $(\tau, \tau_+]$.

\item \textbf{Case 3:} $h_{\mathtt{u}} \geq h_{\mathtt{b}}$ and $h_{\mathtt{u}} > h_{\mathtt{a}}$, demonstrated in Fig. \ref{fig:case_7}. From $h_2$ to $h_3$, there are at least $h_+ - h_{\mathtt{u}}$ adversarial blocks, as stated in Lemma \ref{distinct_height}. Similarly to Case 2, any block mined after $\tau$ with a height not greater than $h_{\mathtt{u}}$ is adversarial. Consequently, there are at least $h_{\mathtt{u}} - h_{\mathtt{a}} + (h_{\mathtt{u}} - h_{\mathtt{b} + 1})$ adversarial blocks from $h_1$ to $h_2$. In total, there are $h_+ - h_{\mathtt{a}} + (h_{\mathtt{u}} - h_{\mathtt{b}} + 1)$ adversarial blocks mined during the interval $(\tau, \tau_+]$.

Therefore, we have done the proof of Lemma \ref{adv_count}.
\end{proof}
We are now prepared to establish the success of the private-mining attack in this scenario. Let $\mathtt{u}'$ be the highest block by time $\tau$, and let $\mathcal{A}dv_{\tau}'$ denote the advantage obtained by time $\tau$ through the private-mining attack. The target transaction $\mathtt{tx}$ is included in the first honest block with a height of $h_{\mathtt{u}}' + 1$ after $\tau$. Let $\mathcal{C}_H$ represent the public honest chain after $\tau+$, and $\mathcal{C}_A$ represent the adversarial private chain after $\tau$. It is evident that $|\mathcal{C}_H| \leq h{\mathtt{u}}' + H{\tau, \tau_+-\Delta}$ and $|\mathcal{C}_A| \geq h{\mathtt{u}}' + \mathcal{A}dv_{\tau}' + A_{\tau, \tau_+}$. Therefore,

\begin{equation}\label{AdvIsLonger}
\begin{split}
|\mathcal{C}_A| - |\mathcal{C}_H| \geq&  h_{\mathtt{u}'} + \mathcal{A}dv_{\tau}' + A_{\tau, \tau_+} - (h_{\mathtt{u}'} + H_{\tau, \tau_+-\Delta}) \\
\geq& \mathcal{A}dv_{\tau}' + (h_+ - h_{\mathtt{a}} + \max(h_{\mathtt{u}} - h_{\mathtt{b}} + 1, 0))\\
&- (h_+ - h_{\mathtt{u}}) \\
\geq&  \mathcal{A}dv_{\tau} + h_{\mathtt{u}} - h_{\mathtt{a}}\\
\geq& 0.
\end{split}
\end{equation}
\end{proof}

Besides,

\begin{equation}\label{len_k}
\begin{split}
|\mathcal{C}_A| - h_{\mathtt{u}'} =& \mathcal{A}dv' + h_+ - h_{\mathtt{a}} + \max(h_{\mathtt{u}} - h_{\mathtt{b}} + 1, 0) \\
\geq& \mathcal{A}dv + h_+ - h_{\mathtt{a}} + h_{\mathtt{u}} - h_{\mathtt{b}} + 1 \\
\geq& (\mathcal{A}dv + h_{\mathtt{u}} - h_{\mathtt{a}}) + (h_+ - h_{\mathtt{b}} + 1) \\
\geq& \kappa.
\end{split}
\end{equation}

Ineq. \ref{AdvIsLonger} ensures that the length of the private chain is always greater than or equal to any public honest chain after $\tau_+$, while Ineq. \ref{len_k} guarantees that the private chain contains a fork that excludes $\mathtt{tx}$ with a length of at least $\kappa$. By definition the private-mining attack always succeed under these two guarantees. Hence, we have completed the proof of Theorem \ref{private_mining_worse}.

\subsection{Proof of Common Prefix Property ( Theorem \ref{cp_property})}\label{CP_proof}

Since it is more difficult to achieve consistency in a severe execution environment than in a typical execution environment, we proved that Theorem \ref{cp_property} holds true in the severe execution environment, thereby complete the proof.

Consider the Manifoldchain protocol $\prod_{mfd}(f, \lambda_s, \{\lambda_i\}^m)$ and a severe execution environment $\tilde{\mathcal{E}nv}(\Delta, \rho)$. Given a shard $i$, we consider a hypothetical block solving the PoW puzzle, which includes the following three types of labels:

\begin{itemize}
    \item \textbf{Label 1}: It is mined by honest miner or adversarial miner.
    \item \textbf{Label 2}: It is a full block or a light block.
    \item \textbf{Label 3}: It is considered invalid due to the severe environment's rule or not. For convenience, we say the block is severely invalid.
\end{itemize}

A block is considered exactly honest if it is mined by honest miner while being considering valid in the severe environment. Given the consensus mining rate $\lambda_i$ and the sharing mining rate $\lambda_s$, a total of $\lambda_s + \lambda_i$ full blocks are generated per second within shard $i$, while $(\mathtt{m}-1)$ light blocks are received from the other $\mathtt{m}-1$ shards. Therefore, the probability that a block is a full block, denoted as ${\rm Pr}_{full}^i$, is calculated as ${\rm Pr}_{full}^i = \frac{\lambda_s + \lambda_i}{\mathtt{m}\lambda_s + \lambda_i}$. Similarly, the probability that a block is a light block, denoted as ${\rm Pr}_{light}^i$, is given by ${\rm Pr}_{light}^i = \frac{(\mathtt{m}- 1)\lambda_s}{\mathtt{m}\lambda_s + \lambda_i}$. Among the full blocks, the ratio of the blocks mined by honest miners is $\rho_i$. The same honest ratio among the light blocks is $\frac{\mathtt{m}\rho - \rho_i}{\mathtt{m} -1}$. Let ${\rm Pr}_h$ represent the probability that a hypothetical block is mined by honest miner, it is calculated as 

\begin{equation}
\begin{split}
{\rm Pr}_h = {\rm Pr}_{full}^i \cdot \rho_i + {\rm Pr}_{light}^i\cdot \frac{\mathtt{m}\rho - \rho_i}{\mathtt{m} - 1}.
\end{split}
\end{equation}

Recalling that the generation of hypothetical blocks follows a Poisson process, it is important to note that each Poisson process within one second is independent and identically distributed, sharing the same rate parameter. Supposed $\rm X$ and $\rm Y$ represents the number of full blocks and light blocks generated within one second respectively, the corresponding probability mass function (PMF) are as follows:

\begin{equation}
\begin{split}
&{\rm Pr}({\rm X} = k) = \frac{e^{-(\lambda_s + \lambda_i)}(\lambda_s + \lambda_i)^k}{k!} \\
&{\rm Pr}({\rm Y} = k) = \frac{e^{-(\mathtt{m} - 1)\lambda_s}[(\mathtt{m} - 1)\lambda_s]^k}{k!}
\end{split}
\end{equation}

Whether a hypothetical block is considered severely invalid relies on the presence of at least one full block mined within the previous $\Delta_i$ seconds. We denote the probability of a hypothetical block being severely invalid as ${\rm Pr}_s$, which can be computed as follows:

\begin{equation}
\begin{split}
{\rm Pr}_s =& [{\rm Pr}({\rm X} = 0)]^{\Delta_i}\\
=& e^{-(\lambda_s + \lambda_i)\Delta_i}.
\end{split}
\end{equation}

Consequently, the probability that a hypothetical block in shard $i$ is honest, denoted by $p_i$, is calculated as follows:

\begin{equation}
\begin{split}
p_i =& {\rm Pr}_h \cdot {\rm Pr}_s \\
=& e^{-(\lambda_s + \lambda_i)\Delta_i}\frac{\lambda_i\rho_i + \mathtt{m}\lambda_s\rho}{\lambda_i + \mathtt{m}\lambda_s}.
\end{split}
\end{equation}

Hence, the probability that a hypothetical block in shard $i$ is adversarial is bounded by $1 - p_i$.

We introduce several notations in this context. Let $E_v'$ denote the event that the private-mining attack violates the consistency of transaction $\mathtt{tx}$ in the severe execution environment $\tilde{\mathcal{E}nv}(\Delta, \rho)$. Assuming that the target transaction $\mathtt{tx}$ is included in a block at time $\tau$, let $h$ represent the height of the highest block mined by time $\tau$, and $\mathcal{A}dv$ represent the adversary's advantage by time $\tau$. Let $B$ denote the number of adversarial blocks out of the first $\max(2\kappa - \mathcal{A}dv, 0)$ blocks mined after $\tau$. Let $M$ be the maximum value obtained by subtracting the number of honest blocks from the number of adversarial blocks ever since time $\tau$. The violation event occurs if one of the following events occurs:

\begin{enumerate}
    \item $E_1$: $\mathcal{A}dv \geq \kappa$. The private chain is already sufficiently long to violate $\mathtt{tx}$'s consistency. The adversary publishes it as long as the longest public chain's length exceeds $h + \kappa$.
    \item $E_2$: $\mathcal{A}dv < \kappa$ and $\mathcal{A}dv + B \geq \kappa$. Prior to the longest length of the public chain reaching $h + \kappa$, the length of the private chain increases by at least $\kappa - \mathcal{A}dv$ and reaches $h + \kappa$. Consequently, the private chain is of sufficient length to violate the consistency of $\mathtt{tx}$.
    \item $E_3$: $\mathcal{A}dv + B < k$ and $M \geq 2(\kappa - \mathcal{A}dv - B) - 1$. In this case, the private chain is not sufficiently long for violating the consistency after the first $\max(2\kappa - \mathcal{A}dv, 0)$ are mined. However, there exits a time $\tau'$ when the gap between the length of private chain and that of the longest public chain reaches 0, i.e. $(h + (2\kappa -\mathcal{A}dv - B) - 1) - (h + \mathcal{A}dv + B + M) \geq 0$. The adversary publish the private chain at time $\tau$ and thereby violates the consistency.
\end{enumerate}

The advantage $\mathcal{A}dv$ can be viewed as the state of a continuous-time birth-death process. It begins at state 0, and each mining of an adversarial block corresponds to a birth, while each mining of an honest block corresponds to a death. The advantage follows a geometric distribution with a parameter of $p_i$. 

Suppose ${\rm gr}(i;p)$ and ${\rm Gr}(i;p)$ represent the PMF and complementary Cumulative Distribution Function (CDF) respectively of a geometric distribution with parameter $p$. Similarly, we denote by ${\rm br}(j;n, p)$ and ${\rm Br}(j;n, p)$ the PMF and complementary CDF of a binomial distribution with parameter $\left<n, p\right>$. Formally,

\begin{equation}
\begin{split}
&{\rm gr}(i;p) = (\frac{1-p}{p})^{i-1}(1 - \frac{1-p}{p}),\\
&{\rm Gr}(i;p) = (\frac{1-p}{p})^i,\\
&{\rm br}(j;n, p) = \tbinom{n}{j}p^j(1-p)^{n-j},\\
&{\rm Br}(j;n, p) = \sum_{k = j+1}^n {\rm br}(k;n, p).
\end{split}
\end{equation}

Therefore, the probability of event $E_1$ can be calculated as follows:

\begin{equation}
\begin{split}
{\rm Pr}[E_1] =& {\rm Gr}(\kappa;p_i),
\end{split}
\end{equation}

Hence $B$ is a binomial random variable with parameters $1-p$, thus

\begin{equation}
\begin{split}
{\rm Pr}[E_2] =& \sum_{l=0}^{\kappa-1}{\rm Pr}(\mathcal{A}dv = l)\cdot {\rm Pr}(B >= \kappa - l | \mathcal{A}dv = l) \\
=& \sum_{l=0}^{\kappa-1}{\rm gr}(l+1;p_i)\cdot {\rm Br}(\kappa - l - 1;2\kappa - l, 1-p_i).
\end{split}
\end{equation}

The $M$ implies the maximum reach of a simple random walk defined as follows: The walk starts at 0, and at each step, it increments by one with probability $p$ (when an adversarial block is generated), and decrements by one with probability $1-p$ (when an honest block is generated). $M$ follows a geometric distribution, specifically,

\begin{equation}
\begin{split}
&{\rm Pr}(M = l) = {\rm gr}(l+1;p_i),\\
&{\rm Pr}(M \geq l) = {\rm Gr}(l;p_i).
\end{split}
\end{equation}

Therefore, we calculate the probability of $E_3$ as follows:

\begin{equation}
\begin{split}
{\rm Pr}(E_3) =& \sum_{l=0}^{\kappa-1}{\rm Pr}(\mathcal{A}dv = l)\cdot \sum_{j=0}^{\kappa-l-1}{\rm Pr}(B = j|\mathcal{A}dv = l)\cdot\\
&{\rm Pr}(M \geq 2\kappa-1-2l-2j)\\
=& \sum_{l=0}^{\kappa-1}{\rm gr}(l+1;p_i)\cdot\sum_{j=0}^{\kappa-l-1}{\rm br}(j;2\kappa-l, 1-p_i)\cdot\\
&{\rm Gr}(2\kappa-1-2l-2j;p_i).
\end{split}
\end{equation}

In summary, the probability of violation event is 

\begin{equation}
\begin{split}
{\rm Pr}[E_v'] = {\rm Pr}[E_1] + {\rm Pr}[E_2] + {\rm Pr}[E_3].
\end{split}
\end{equation}

As $\mathcal{A}dv$ and $M$ follows the same geometric distribution, the Moment Generating Functions (MGF) of $\mathcal{A}dv$ and $M$ are expressed as 

\begin{equation}
\begin{split}
{\rm E}\{e^{v\mathcal{A}dv}\} = {\rm E}\{e^{vM}\} = \frac{p_i-(1-p_i)}{p_i - (1-p_i)e^v},
\end{split} 
\end{equation}

where $v < \log\frac{p_i}{1-p_i}$. Besides, $B$ follows a binomial distribution with parameter $\left<2\kappa-l,1-p_i\right>$ with a condition of $\mathcal{A}dv = l$. The MGF of $B$ is expressed as 

\begin{equation}
\begin{split}
{\rm E}\{e^{vB} | \mathcal{A}dv = l\} = (p_i + (1-p_i)e^v)^{2\kappa-l}.
\end{split}
\end{equation}

Because $\mathcal{A}dv, B, M$ are all non-negative, by combining the three events $E_1$, $E_2$, and $E_3$, we can concisely write $E_v'$ as: $2\mathcal{A}dv + 2B + M \geq 2\kappa - 1$. Utilizing the Chernoff bound, $\forall v > 0$, the following holds:

\begin{equation}
\begin{split}
&{\rm Pr}(2\mathcal{A}dv + 2B + M\geq 2\kappa - 1) \\
=& {\rm Pr}(\mathcal{A}dv + B + \frac{M}{2} \geq \kappa - \frac{1}{2}) \\
\leq& {\rm E}\{e^v(\mathcal{A}dv + B + \frac{M}{2} - \kappa + \frac{1}{2})\}\\
=& {\rm E}\{{\rm E}\{e^{v(\mathcal{A}dv + B)} | \mathcal{A}dv\}\}{\rm E}\{e^{v\frac{M}{2}}\}e^{-v(\kappa - \frac{1}{2})}\\
=& {\rm E}\{e^{v\mathcal{A}dv}(p_i + (1-p_i)e^v)^{2\kappa-\mathcal{A}dv}\}\frac{p_i-(1-p_i)}{p_i-(1-p_i)e^{\frac{v}{2}}}e^{-v(\kappa-\frac{1}{2})}\\
=& \frac{(p_i + (1-p_i)e^v)^{2\kappa}(p_i-(1-p_i))}{p_i-(1-p_i)\frac{e^v}{p_i+(1-p_i)e^v}}\frac{p_i-(1-p_i)}{p_i-(1-p_i)e^{\frac{v}{2}}}e^{-v(\kappa-\frac{1}{2})}\\
=& (p_i+(1-p_i)e^v)^{2\kappa}e^{-v\kappa}\frac{p_i-(1-p_i)^2(p_i+(1-p_i)e^v)e^{\frac{v}{2}}}{(p_i^2-(1-p_i)^2e^v)(p_i-(1-p_i)e^{\frac{v}{2}})}.
\end{split}
\end{equation}

Notice that the exponential coefficient for $\kappa$ is $2\log(p_i+(1-p_i)e^v) - v$, hence the tightest exponent for $\kappa$ is obtained with $e^v = \frac{p_i}{1-p_i}$. Therefore it holds that $p_i + (1-p_i)e^v = 2p_i$. Finally, we complete the proof of Theorem \ref{cp_property} by 

\begin{equation}
\begin{split}
{\rm Pr}[E_v] \leq& {\rm Pr}[E_v'] \\
\leq& {\rm Pr}(2\mathcal{A}dv + 2B + M\geq 2\kappa - 1)\\
\leq& (2 + 2\sqrt{\frac{p_i}{1-p_i}})(4p_i(1-p_i))^{\kappa}.
\end{split}
\end{equation}

\subsection{Proof of Chain Growth Property (Theorem \ref{cg_property})} \label{cg_property_proof}

We first demonstrate that the chain growth speed in a typical execution environment is not lower than that in a severe execution environment when both environments share the same randomness $\sigma$. By sharing the same randomness, we mean that the generation of hypothetical blocks is identical in both environments. 

\begin{lemma}\label{typical_faster_severe}
Given a typical execution environment $\mathcal{E}nv(\Delta, \rho)$ and a severe execution environment $\tilde{\mathcal{E}nv}(\Delta, \rho)$ sharing the randomness $\sigma$, for all time $t$ and miner $j$ that is honest at time $t$ in $\mathcal{E}nv(\Delta, \rho)$, $|\mathcal{C}_j^t(\mathcal{E}nv)| \geq |\mathcal{C}_j^t(\tilde{\mathcal{E}nv})|$.
\end{lemma}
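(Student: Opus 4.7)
The plan is to establish the inequality via a pathwise coupling over the shared randomness $\sigma$. The coupling fixes the mining-attempt times of every miner and the outcomes of all PoW queries, so every block scheduled to be mined by miner $i$ at time $\tau$ is mined in both environments, modulo rule (2) of Definition~\ref{severe_execution_environment}, which can convert some honest mining successes in the severe environment into no-ops. Absorbing the adversary's mining and message-release decisions into $\sigma$ as well, the adversarial blockset and its release schedule are literally identical in both executions; the comparison then reduces to a purely combinatorial statement about honest miners' views.

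First I would strengthen the claim to an invariant better suited to induction: at every event time $\tau \leq t$ and every miner $j$ honest at $\tau$, the set of blocks in $j$'s view in $\mathcal{E}nv$ at time $\tau$ contains the set of blocks in $j$'s view in $\tilde{\mathcal{E}nv}$ at time $\tau$. Since chain selection is local, this superset property immediately implies $|\mathcal{C}_j^\tau(\mathcal{E}nv)| \geq |\mathcal{C}_j^\tau(\tilde{\mathcal{E}nv})|$. I would then induct over the time-ordered sequence of events, distinguishing honest mining, adversarial mining, and delivery events. For a delivery event in shard $i$, severe rule (1) enforces an exact delay of $\Delta_i$, whereas the typical delay is at most $\Delta_i$, so the block already sits in the typical view when it arrives in the severe view. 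For an honest mining event: if rule (2) invalidates the severe attempt, the severe view is not augmented and the invariant is trivially preserved; otherwise the block is mined in both environments, and by the inductive hypothesis applied to its parent chain the typical view gains at least the matching block. Adversarial mining adds the same block in both environments, and the tie-discarding rule (3) can only remove blocks from the severe view.

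The hard part will be treating the adversary robustly. Because the two environments differ only in how honest behavior is scored, a reactive adversary could in principle choose different release times in the two executions, which would break the naive superset invariant. The cleanest resolution is to fold the adversary's entire strategy---block release timing and network scheduling included---into the shared randomness $\sigma$; since the lemma is a pathwise statement, this is without loss of generality. With that resolved, the residual work is bookkeeping around rule (2): one must verify that invalidating an honest mining attempt during a delivery in the severe environment never causes the typical view to fall behind, which follows because the only way the typical environment fails to reproduce a severe honest block is if the severe environment itself already failed to produce one.
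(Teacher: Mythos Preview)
Your coupling-and-induction framework mirrors the paper's approach; the paper likewise inducts over time and argues that any chain growth realised in the severe environment is matched in the typical one. However, your chosen invariant---that $j$'s typical view is a \emph{superset of blocks} of $j$'s severe view---has a subtle flaw under the natural reading of Definition~\ref{severe_execution_environment} as a re-execution with modified rules rather than a post-hoc filtering of the typical run. In the severe environment, an honest miner who mines successfully at time $\tau$ extends the tip of its \emph{severe} longest chain, which is in general a different block from the tip it would extend in the typical environment (the two views have already diverged). The resulting blocks carry different parent hashes and are therefore distinct objects; a literal set-containment of blocks fails at the first such divergence. If instead you identify blocks with mining events, containment may survive, but then the step ``superset property immediately implies $|\mathcal{C}_j^\tau(\mathcal{E}nv)| \geq |\mathcal{C}_j^\tau(\tilde{\mathcal{E}nv})|$'' no longer follows, since the same mining event can sit at different heights in the two executions.

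The repair is to carry the chain-length inequality itself as the inductive invariant, which is precisely what the paper does: assume $|\mathcal{C}_j^i(\mathcal{E}nv)| \geq |\mathcal{C}_j^i(\tilde{\mathcal{E}nv})|$ at every earlier instant $i$, and at the next instant $k$ observe that any increment in $|\mathcal{C}_j^k(\tilde{\mathcal{E}nv})|$ is produced by mining events whose miners, by the inductive hypothesis, held typical-environment chains at least as long at the moment of mining---so the corresponding typical-environment blocks land at least as high and are delivered no later (rule (1)). Your case analysis on rules (1)--(3) and your handling of the adversary transplant directly onto this corrected invariant; only the invariant itself needs to change.
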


\begin{proof}
We proof Lemma \ref{typical_faster_severe} by induction.

$t=0$: All honest miners in both $\mathcal{E}nv(\Delta, \rho)$ and $\tilde{\mathcal{E}nv}(\Delta, \rho)$ only contain the genesis block, $|\mathcal{C}_j^t(\mathcal{E}nv)| = |\mathcal{C}_j^t(\tilde{\mathcal{E}nv})| = 1$.

$t = i$: Assuming $|\mathcal{C}_j^i(\mathcal{E}nv)| \geq |\mathcal{C}_j^i(\tilde{\mathcal{E}nv})|$ holds.

$t = k$, such that $k$ is the minimum time when miner $j$ is honest after $i$: If $|\mathcal{C}_j^k(\tilde{\mathcal{E}nv})| = |\mathcal{C}_j^i(\tilde{\mathcal{E}nv})|$, then
\begin{equation}
\begin{split}
|\mathcal{C}_j^k(\mathcal{E}nv)| \geq& |\mathcal{C}_j^i(\mathcal{E}nv)| \geq |\mathcal{C}_j^i(\tilde{\mathcal{E}nv})| = |\mathcal{C}_j^k(\tilde{\mathcal{E}nv})|
\end{split}
\end{equation}

holds. If $|\mathcal{C}_j^k(\tilde{\mathcal{E}nv})| > |\mathcal{C}_j^i(\tilde{\mathcal{E}nv})|$, during $(i-\Delta, k-\Delta]$ in $\tilde{\mathcal{E}nv}$, there must exists $n$ blocks (not the hypothetical blocks) mined on top of $\mathcal{C}_j^i$ and received by miner $j$ by time $k$. Because $\mathcal{E}nv$ shares the same randomness with $\tilde{\mathcal{E}nv}$, the same blocks are also mined during the same period in $\mathcal{E}nv$ and received by miner $j$ by time $k$. Conditioned the same length increment, the following holds: 
\begin{equation}
\begin{split}
|\mathcal{C}_j^k(\mathcal{E}nv)| = |\mathcal{C}_j^i(\mathcal{E}nv)| + n \geq |\mathcal{C}_j^i(\tilde{\mathcal{E}nv)}| + n = |\mathcal{C}_j^k(\tilde{\mathcal{E}nv)}|.
\end{split}
\end{equation}
Therefore, $|\mathcal{C}_j^i(\mathcal{E}nv)| \geq |\mathcal{C}_j^i(\tilde{\mathcal{E}nv})|$ also holds for $t = k > i$, thereby complete the proof of Lemma \ref{typical_faster_severe}.
\end{proof}

We proceed to establish a lower bound on the growth of the longest public chain in $\tilde{\mathcal{E}nv}$ over a period of $t$ seconds. Let $\mathtt{L}^t(\mathcal{E}nv)$ represent the length of the longest chain maintained by any honest miner at time $t$ in the typical environment $\mathcal{E}nv$. To analyze the protocol execution in $\tilde{\mathcal{E}nv}$ starting from time $r$, we define $\mathtt{L}^t(\tilde{\mathcal{E}nv})$ as the corresponding chain length in $\tilde{\mathcal{E}nv}$. Since $\mathtt{L}^r(\tilde{\mathcal{E}nv}) = \mathtt{L}^r(\mathcal{E}nv)$ by definition, we can treat the longest chain mined by $r$ as the genesis block in $\tilde{\mathcal{E}nv}$, where $\mathtt{L}^r(\tilde{\mathcal{E}nv}) = \mathtt{L}^r(\mathcal{E}nv) = 1$.

\begin{lemma}
For any $r, t \geq 0$ and any $\delta > 0$, within any shard $i$,
\begin{equation}
\begin{split}
{\rm Pr}[\mathtt{L}^{r+t}(\tilde{\mathcal{E}nv}_i) < \mathtt{L}^{r}(\tilde{\mathcal{E}nv}_i) + (1 -\delta)\mathtt{g}_it] < e^{-\Omega(\delta^2\mathtt{g}_it)},
\end{split}
\end{equation}
where $\mathtt{g}_i = \frac{p_i(\lambda_s + \lambda_i)}{1 + \Delta_ip_i(\lambda_s + \lambda_i)}$.
\end{lemma}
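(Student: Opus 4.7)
My plan is to split the lemma into a deterministic structural reduction followed by a concentration estimate for a renewal process.

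First, I would invoke the distinct-height property established in Lemma~\ref{distinct_height}: any two exactly honest abstract blocks in shard $i$ lie at strictly different heights, and in particular every exactly honest block mined after time $r$ has height greater than every honest block already present at time $r$. Letting $N(t)$ denote the number of exactly honest abstract blocks that arrive in shard $i$ during $(r, r+t]$, this immediately gives the deterministic lower bound $\mathtt{L}^{r+t}(\tilde{\mathcal{E}nv}_i) - \mathtt{L}^{r}(\tilde{\mathcal{E}nv}_i) \geq N(t)$, so it suffices to establish $\Pr[N(t) < (1-\delta)\mathtt{g}_i t] < e^{-\Omega(\delta^2 \mathtt{g}_i t)}$.

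Next I would model $N(t)$ as a renewal process. Reusing the Poisson-thinning picture developed in the proof of Theorem~\ref{cp_property}, hypothetical blocks arrive in shard $i$ as a Poisson process at rate $\lambda_i + m\lambda_s$ and each one is exactly honest with marginal probability $p_i$. Crucially, the severe environment's freezing rule implies that whenever an exactly honest block occurs at some time $T$, no further exactly honest block can be registered during $(T, T+\Delta_i]$. After the freeze expires, the memoryless property of the Poisson arrivals makes the waiting time until the next exactly honest block an exponential with rate $p_i(\lambda_s + \lambda_i)$. Consecutive exactly honest blocks therefore form a renewal process with inter-arrival times distributed as $\Delta_i + X_k$, where $X_k$ is an independent Exp$(p_i(\lambda_s+\lambda_i))$ random variable, and the per-second rate of this process is exactly $\mathtt{g}_i$.

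For the tail bound, let $S_k$ denote the arrival time of the $k$-th exactly honest block after $r$ and observe $\{N(t) < k\} = \{S_k > t\}$. Choosing $k = \lceil (1-\delta)\mathtt{g}_i t \rceil$ gives $\mathbb{E}[S_k] = (1-\delta)t$, so $\{S_k > t\}$ is a moderate-deviation event for an i.i.d.\ sum of shifted exponentials with bounded moment generating function in a neighborhood of the origin. A standard exponential-tilt Chernoff bound then yields $\Pr[S_k > t] \leq e^{-\Omega(\delta^2 k)} = e^{-\Omega(\delta^2 \mathtt{g}_i t)}$, which matches the stated tail.

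The main obstacle I expect is formalizing the renewal description cleanly: the ``exactly honest'' indicator depends on the severity condition, which in turn involves all full-block arrivals rather than only honest ones, thereby introducing positive dependence between neighboring candidate arrivals. I would handle this by coupling to a conservative lower-bound process that counts only those honest arrivals preceded by a fully clean $\Delta_i$-window free of any full block, restoring exact independence via Poisson thinning at the cost of a multiplicative rate factor that can be absorbed into the Chernoff slack $\delta$. A secondary and minor technicality is the boundary behavior of the freezing clock at the left endpoint $r$ and the truncation at the right endpoint $r+t$, which together contribute only an $O(1)$ additive loss dominated by the exponential tail.
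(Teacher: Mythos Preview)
Your proposal is correct in outline, but it takes a genuinely different route from the paper's own argument.

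The paper does not set up a renewal process or apply Chernoff to a sum of shifted-exponential inter-arrival times. Instead it runs a deterministic ``unfrozen-time'' counting argument in the style of Pass--Seeman--shelat: if the chain grows by fewer than $c$ blocks over $t$ seconds, then fewer than $c$ exactly honest blocks were mined, hence fewer than $c\Delta_i$ seconds are frozen, hence at least $t-c\Delta_i$ seconds are unfrozen; treating each unfrozen second as an independent Bernoulli trial with success probability $p_i(\lambda_s+\lambda_i)$ and choosing $c$ to solve the fixed-point equation $p_i(\lambda_s+\lambda_i)(t-c\Delta_i)=c$ yields exactly $c=\mathtt{g}_i t$, and a single Chernoff bound on that Bernoulli sum gives the stated tail. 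Your approach instead identifies the counting process $N(t)$ of exactly honest arrivals, models it as a renewal process with inter-arrival law $\Delta_i+\mathrm{Exp}(p_i(\lambda_s+\lambda_i))$ (whose rate is precisely $\mathtt{g}_i$), and dualizes to the event $\{S_k>t\}$ before applying Chernoff to the i.i.d.\ sum $S_k$. Both routes land on the same exponent; the paper's version is shorter because the fixed-point trick absorbs the freeze time algebraically without ever needing to justify a renewal description, whereas your version is more transparent about the underlying stochastic structure but, as you correctly flag, must contend with the fact that the ``severely valid'' indicator (clean $\Delta_i$-window) is not independent across nearby arrivals. Your proposed coupling fix---thin down to honest arrivals preceded by a window free of any full block---is essentially what the paper's definition of $p_i$ already encodes via the factor $e^{-(\lambda_s+\lambda_i)\Delta_i}$, so the two arguments reconcile at the level of the rate constant.
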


\begin{proof}
In $\tilde{\mathcal{E}nv}_i$, every block is delayed for $\Delta_i$ seconds and during the delivery period no other blocks are generated. As a result, given the fixed $r$ and after that all honest miners would agree on the same chain. 

Within each second, the probability of successfully generating a honest block is given by $p_i(\lambda_s + \lambda_i)$, where $(\lambda_s + \lambda_i)$ represents the probability of generating a hypothetical block, and $p_i$ represents the probability that this hypothetical block is honest. Given that each generation of a honest block corresponds to a frozen period of at most $\Delta_i$ seconds ($\Delta_i$ seconds for a full block and $1$ second for a light block), if the increase in the length of the longest chain is less than $c$ after $t$ seconds, there must be at least $t-c\Delta_i$ unfrozen seconds available. Moreover, during these $t-c\Delta_i$ unfrozen seconds, the number of honest blocks is limited to be less than $c$. Let $W^k$ represent the sum of $k$ independent binary random variables $w_i$, where each $w_i$ takes the value 1 with probability $p_i(\lambda_s + \lambda_i)$. For each possible $c$ the following holds

\begin{equation}
\begin{split}
E[W^{t-c\Delta}]  =  p_i(\lambda_s + \lambda_i)(t-c\Delta_i) = c,
\end{split}
\end{equation}

which implies that $c = \frac{p_i(\lambda_s + \lambda_i)t}{1 + \Delta_ip_i(\lambda_s + \lambda_i)} = \mathtt{g}_it$. Thus we have

\begin{equation}
\begin{split}
E[W^{t-c\Delta}] = \mathtt{g}_it.
\end{split}
\end{equation}

By the Chernoff bound, we have 

\begin{equation}
\begin{split}
{\rm Pr}[W^{t-c\Delta_i} < (1-\delta)\mathtt{g}_it] \leq e^{-\Omega(\delta^2\mathtt{g}_it)}.
\end{split}
\end{equation}

As the increase of the longest public chain is bounded by the number of generated honest blocks, we complete the proof as follows:

\begin{equation}
\begin{split}
{\rm Pr}[\mathtt{L}^{r+t}(\tilde{\mathcal{E}nv}_i)& < \mathtt{L}^{r}(\tilde{\mathcal{E}nv}_i) + (1 -\delta)\mathtt{g}_it] \\
=& {\rm Pr}[\mathtt{L}^{r+t}(\tilde{\mathcal{E}nv}_i) - \mathtt{L}^{r}(\tilde{\mathcal{E}nv}_i) < (1 -\delta)\mathtt{g}_it] \\
\leq& {\rm Pr}[W^{t-c\Delta_i} < (1-\delta)\mathtt{g}_it] \\
\leq& e^{-\Omega(\epsilon^2\mathtt{g}_it)}.
\end{split}
\end{equation}

\end{proof}

Now we extend the chain growth lower bound to $\mathcal{E}nv$.

\begin{lemma}
For any $r, t \geq 0$ and any $\epsilon > 0$, within any shard $i$,
\begin{equation}
\begin{split}
{\rm Pr}[\mathtt{L}^{r+t}(\mathcal{E}nv_i) < \mathtt{L}^{r}(\mathcal{E}nv_i) + (1 -\delta)\mathtt{g}_it] < e^{-\Omega(\epsilon^2\mathtt{g}_it)},
\end{split}
\end{equation}
where $\mathtt{g}_i = \frac{p_i(\lambda_s + \lambda_i)}{1 + \Delta_ip_i(\lambda_s + \lambda_i)}$.
\end{lemma}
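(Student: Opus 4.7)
The plan is to transfer the chain growth lower bound already proved for the severe execution environment $\tilde{\mathcal{E}nv}_i$ to the typical environment $\mathcal{E}nv_i$ via a coupling argument, exploiting the monotonicity established in Lemma~\ref{typical_faster_severe}. The main idea is that the severe environment is a strictly harsher world (every honest full block delivered pays the maximum delay $\Delta_i$, and all honest mining trials during delivery are discarded), so any lower bound on chain growth that holds there must also hold in the typical environment when the two are run on the same randomness.

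First, I fix an arbitrary starting time $r$ and construct a coupling between $\mathcal{E}nv_i$ and a hypothetical copy of $\tilde{\mathcal{E}nv}_i$ that shares its randomness $\sigma$ from time $r$ onward: the same Poisson process of block arrivals, the same miner identities on each arrival, and the same honest/adversarial labels. I treat the longest chain in $\mathcal{E}nv_i$ at time $r$ as the ``genesis'' for the severe run, so that $\mathtt{L}^{r}(\tilde{\mathcal{E}nv}_i) = \mathtt{L}^{r}(\mathcal{E}nv_i)$ by construction. Then I apply Lemma~\ref{typical_faster_severe} with this shifted starting time: for every honest miner $j$ at time $r+t$,
\begin{equation}
|\mathcal{C}_j^{r+t}(\mathcal{E}nv_i)| \geq |\mathcal{C}_j^{r+t}(\tilde{\mathcal{E}nv}_i)|,
\end{equation}
and taking the maximum over honest miners yields $\mathtt{L}^{r+t}(\mathcal{E}nv_i) \geq \mathtt{L}^{r+t}(\tilde{\mathcal{E}nv}_i)$.

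Subtracting the common starting value $\mathtt{L}^{r}(\mathcal{E}nv_i) = \mathtt{L}^{r}(\tilde{\mathcal{E}nv}_i)$ gives, under the coupling,
\begin{equation}
\mathtt{L}^{r+t}(\mathcal{E}nv_i) - \mathtt{L}^{r}(\mathcal{E}nv_i) \;\geq\; \mathtt{L}^{r+t}(\tilde{\mathcal{E}nv}_i) - \mathtt{L}^{r}(\tilde{\mathcal{E}nv}_i).
\end{equation}
The preceding lemma bounds the right-hand side below by $(1-\delta)\mathtt{g}_i t$ except with probability $e^{-\Omega(\delta^2 \mathtt{g}_i t)}$, and since the coupling preserves the joint distribution of the block-arrival process, the same tail bound transfers directly to the left-hand side, which is exactly the claim.

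The main obstacle will be arguing rigorously that Lemma~\ref{typical_faster_severe} continues to apply when we ``reset'' the severe environment's genesis to the longest chain present in the typical environment at time $r$, since the original lemma was stated from time $0$. I expect to handle this by observing that the induction in the proof of Lemma~\ref{typical_faster_severe} was purely local in time and only used the shared randomness on the arrivals; thus the same inductive step carries through verbatim when the base case is the shared configuration at time $r$ rather than the genesis block. Once this shifting is justified, the rest of the argument is just a direct substitution, and no new concentration estimates or protocol-level reasoning are required.
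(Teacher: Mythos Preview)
Your proposal is correct and follows essentially the same argument as the paper: couple $\mathcal{E}nv_i$ and $\tilde{\mathcal{E}nv}_i$ on shared randomness with the longest chain at time $r$ serving as the common genesis so that $\mathtt{L}^{r}(\mathcal{E}nv_i) = \mathtt{L}^{r}(\tilde{\mathcal{E}nv}_i)$, invoke Lemma~\ref{typical_faster_severe} to obtain $\mathtt{L}^{r+t}(\mathcal{E}nv_i) \geq \mathtt{L}^{r+t}(\tilde{\mathcal{E}nv}_i)$, and then transfer the tail bound from the preceding lemma. Your discussion of the ``genesis reset'' subtlety is more explicit than the paper's, which simply asserts the equality at time $r$ by definition, but the underlying reasoning is the same.
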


\begin{proof}

By definition, $\mathtt{L}^{r}(\mathcal{E}nv_i) = \mathtt{L}^{r}(\tilde{\mathcal{E}nv}_i)$. According to Lemma \ref{typical_faster_severe}, $\mathtt{L}^{r+t}(\mathcal{E}nv_i) \geq \mathtt{L}^{r+t}(\tilde{\mathcal{E}nv}_i)$. Thus, 

\begin{equation}
\begin{split}
&{\rm Pr}[\mathtt{L}^{r+t}(\mathcal{E}nv_i) < \mathtt{L}^{r}(\mathcal{E}nv_i) + (1 -\delta)\mathtt{g}_it]\\ 
\leq& {\rm Pr}[\mathtt{L}^{r+t}(\tilde{\mathcal{E}nv}_i) < \mathtt{L}^{r}(\tilde{\mathcal{E}nv}_i) + (1 -\delta)\mathtt{g}_it] \\
<&  e^{-\Omega(\epsilon^2\mathtt{g}_it)}.
\end{split}
\end{equation}
\end{proof}

As every block mined before $r-\Delta_i$ will appear in the view of every honest miner after $r$, we can establish the following lemma:

\begin{lemma}
For any time $r$ and honest miners $i,j$ in shard $i$, $|\mathcal{C}_i^{r}| \geq \mathcal{C}_i^{r-\Delta_i}$,
\end{lemma}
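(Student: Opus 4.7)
First, I read the statement as containing a minor typo: the intended inequality is $|\mathcal{C}_j^{r}| \geq |\mathcal{C}_i^{r-\Delta_i}|$ for any two miners $i,j$ that are honest at times $r-\Delta_i$ and $r$ respectively, within the same shard. With this reading, the lemma is the standard ``$\Delta$-synchrony implies common view lag'' fact adapted to Manifoldchain's mixed block types, and the proof is a direct invocation of the synchronous network assumption.

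The approach is as follows. Since miner $i$ is honest at time $r-\Delta_i$, it faithfully broadcasts (and has already received) every block appearing in $\mathcal{C}_i^{r-\Delta_i}$ by time $r-\Delta_i$. By the synchronous network model in Section~\ref{background}, any message broadcast by an honest party at time $t$ reaches every other honest miner by $t+\Delta$, and within shard $i$ this in-shard maximum delay is $\Delta_i$ (which upper-bounds both full-block transmission and the negligible light-block/consensus-block transmission times defined in Section~\ref{blk_arch}). Consequently, each block of $\mathcal{C}_i^{r-\Delta_i}$ is in miner $j$'s local view by time $r$, so the entire chain $\mathcal{C}_i^{r-\Delta_i}$ is a valid chain available to $j$ at time $r$.

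Since miner $j$ is honest at time $r$, it follows the longest-chain rule and therefore adopts a chain $\mathcal{C}_j^r$ of length at least that of any valid chain it has seen. In particular, $|\mathcal{C}_j^r| \geq |\mathcal{C}_i^{r-\Delta_i}|$, which is the desired conclusion.

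\textbf{Main obstacle.} The argument itself is essentially immediate from synchrony; the only subtle point is justifying that $\Delta_i$ really upper-bounds delivery for \emph{every} block type in shard $i$'s view that can affect $\mathcal{C}_i^{r-\Delta_i}$. Full consensus/transaction block pairs take up to $\Delta_i$ seconds by the definition of shard-$i$ delay, while light blocks from foreign shards incur negligible delay, so the $\Delta_i$ bound is the worst case and suffices. I would also briefly remark that the statement is used later together with the chain-growth lemmas to translate individual-miner chain length into a globally consistent lower bound, which is the reason the lemma is phrased across two honest miners rather than one.
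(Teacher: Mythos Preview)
Your reading of the typo and your proof are correct, and they match the paper's approach: the paper justifies this lemma with the single sentence ``As every block mined before $r-\Delta_i$ will appear in the view of every honest miner after $r$,'' which is precisely the synchrony-plus-longest-chain argument you spell out in more detail. Your added remark about $\Delta_i$ dominating the delivery time of all block types is a useful clarification but not something the paper treats separately.
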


thereby satisfying the consistent length condition of the anticipated growth event. Note that for any time $r$ and interval $t$, , the following holds:

\begin{equation}
\begin{split}
&\min\limits_{i,j}(|\mathcal{C}_j^{r+t}| - |\mathcal{C}_i^{r}|)\\
=& \min\limits_j|\mathcal{C}_j^{r+t}| - \max\limits_i|\mathcal{C}_i^{r}|\\
\geq& \max\limits_j|\mathcal{C}_j^{r+t-\Delta_i}| - \max\limits_i|\mathcal{C}_i^{r}| \\
=& \mathtt{L}^{r+t-\Delta_i}(\mathcal{E}nv_i) -  \mathtt{L}^{r}(\mathcal{E}nv_i).
\end{split}
\end{equation}

Therefore, 

\begin{equation}
\begin{split}
&{\rm Pr}[E_g(t,\Delta_i, \mathcal{T}) = 0] \\
=& {\rm Pr}[\min\limits_{i,j}(|\mathcal{C}_j^{r+t}| - |\mathcal{C}_i^{r}|) < \mathcal{T}] \\
=& {\rm Pr}[\mathtt{L}^{r+t-\Delta_i}(\mathcal{E}nv_i) -  \mathtt{L}^{r}(\mathcal{E}nv_i) < \mathcal{T}] \\
\leq& {\rm Pr}[\mathtt{L}^{r+t-\Delta_i}(\mathcal{E}nv_i) -  \mathtt{L}^{r}(\mathcal{E}nv_i) < g_it]\\
=& {\rm Pr}[\mathtt{L}^{r+t-\Delta_i}(\mathcal{E}nv_i) -  \mathtt{L}^{r}(\mathcal{E}nv_i) < (1-\delta)\mathtt{g}_it] \\
<& e^{-\Omega(\delta^2\mathtt{g}_i(t-\Delta_i))} = \varepsilon(\mathtt{g_i}t) < \varepsilon(\mathcal{T}).
\end{split}
\end{equation}

Consequently, ${\rm Pr}[E_g(t,\Delta_i, \mathcal{T}) = 1] = 1 - {\rm Pr}[E_g(t,\Delta_i, \mathcal{T}) = 0] > 1 - \varepsilon(\mathcal{T})$. Hence we complete the proof of Theorem \ref{cg_property}.

\subsection{Proof of Chain Quality Property (Theorem \ref{cq_property})} \label{cq_property_proof}

We present the bounded number of total blocks and adversarial blocks within shard $i$ in $\mathcal{E}nv(\Delta_i, \rho_i)$ by the following two lemmas

\begin{lemma}\label{upper_block_num}
Let $Q_t(\mathcal{E}nv)$ denote the maximum number of blocks mined within any interval of $t$ seconds in $\mathcal{E}nv$, for any $t\geq0$ and any $\delta$
\begin{equation}
\begin{split}
{\rm Pr}[Q_t(\mathcal{E}nv) > (1+\delta)(\lambda_s + \lambda_i)t] < e^{-\Omega(\delta^2(\lambda_s + \lambda_i)t)}.
\end{split}
\end{equation}
\end{lemma}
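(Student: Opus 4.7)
The plan is to reduce the lemma to a standard Chernoff-type tail bound on a Poisson random variable, and then handle the ``maximum over all intervals'' by a union-bound trick over a discretized grid of starting times.

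First I would recall that, by our continuous-time execution model, exclusive-block arrivals and inclusive-block arrivals in shard $i$ are independent homogeneous Poisson processes with rates $\lambda_i$ and $\lambda_s$ respectively. By the superposition property of Poisson processes, the total block-arrival process in shard $i$ is itself a homogeneous Poisson process with rate $\lambda_s + \lambda_i$. Fix any interval $I$ of length $t$; let $N_I$ denote the number of arrivals in $I$. Then $N_I \sim \mathrm{Poisson}((\lambda_s + \lambda_i)t)$, with mean $\mu := (\lambda_s + \lambda_i)t$. Applying the standard Chernoff/Bennett bound for Poisson tails yields
\begin{equation}
\Pr[N_I > (1+\delta)\mu] \leq \exp\Bigl(-\tfrac{\delta^2}{2+\delta}\mu\Bigr) = e^{-\Omega(\delta^2 \mu)},
\end{equation}
which is exactly the form required for a single fixed interval.

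Next I would address the subtlety that $Q_t$ is the maximum over \emph{all} intervals of length $t$, not over a fixed one. Because block arrivals only increase $N_I$ at discrete (random) points, the number of arrivals in $[r, r+t]$ as a function of $r$ only changes when $r$ crosses an arrival time. Hence the supremum of $N_I$ over all intervals of length $t$ within a finite horizon $T$ can be controlled by a grid of $O(T/t)$ starting points shifted appropriately (or, more simply, by considering all intervals anchored at arrival times, whose count is itself Poisson-concentrated). A union bound over this polynomially-sized collection of intervals loses only a factor that is absorbed into the $\Omega(\cdot)$ exponent, since the bound $e^{-\Omega(\delta^2(\lambda_s+\lambda_i)t)}$ is exponentially small in $t$. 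Thus
\begin{equation}
\Pr[Q_t(\mathcal{E}nv) > (1+\delta)(\lambda_s+\lambda_i)t] < e^{-\Omega(\delta^2(\lambda_s+\lambda_i)t)},
\end{equation}
as desired.

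The main technical nuisance will be making the ``union bound over all intervals'' step rigorous without blowing up the exponent. The cleanest route is to observe that it suffices to consider intervals whose left endpoints are arrival times (plus the initial endpoint), since translating $I$ rightward strictly within a gap between consecutive arrivals can only decrease $N_I$; coupled with a separate high-probability bound on the total number of arrivals up to the time horizon (again via a Poisson Chernoff bound), this keeps the effective number of candidate intervals at most polynomial in $(\lambda_s+\lambda_i)t$, which is negligible against the exponential tail. Everything else is a routine calculation.
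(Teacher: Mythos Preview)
Your proposal is correct and follows the same core approach as the paper: identify the block-arrival process as Poisson with rate $\lambda_s+\lambda_i$, compute the mean $(\lambda_s+\lambda_i)t$, and apply a Chernoff bound. The paper's proof is in fact considerably terser than yours---it simply states the expectation and invokes ``the Chernoff bound'' without addressing the supremum over all intervals of length $t$; your union-bound step over arrival-anchored intervals is a genuine (and welcome) addition that the paper omits.
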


\begin{lemma}\label{upper_adv_block_num}
Let $A_t(\mathcal{E}nv)$ denote the maximum number of adversarial blocks mined within any interval of $t$ seconds in $\mathcal{E}nv$, for any $t\geq0$ and any $\delta$
\begin{equation}
\begin{split}
{\rm Pr}[A_t(\mathcal{E}nv) >& (1+\delta)(1-\rho_i)(\lambda_s + \lambda_i)t] \\
&< e^{-\Omega(\delta^2(1-\rho_i)(\lambda_s + \lambda_i)t)}.
\end{split}
\end{equation}
\end{lemma}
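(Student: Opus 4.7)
The plan is to derive this as a standard concentration bound on a thinned Poisson process, using the continuous-time model established earlier in the excerpt, and then extend the single-interval bound to the supremum over all length-$t$ intervals via a union bound.

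First I would recall that, under the continuous-time model introduced in Section~\ref{security_analysis}, hypothetical block arrivals in shard $i$ form a Poisson point process with rate $\lambda_s + \lambda_i$, and each arrival is independently adversarial with probability $1-\rho_i$ (as exploited in the computation of $p_i$ in Appendix~\ref{CP_proof}). By the Poisson thinning/splitting property, adversarial block arrivals therefore constitute a Poisson process with rate $\mu := (1-\rho_i)(\lambda_s + \lambda_i)$. Fixing a specific interval $[r, r+t]$, the number of adversarial blocks $N_{[r,r+t]}$ is Poisson-distributed with mean $\mu t$.

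Next I would apply the multiplicative Chernoff bound for Poisson random variables: for any $\delta > 0$,
\begin{equation}
\Pr\bigl[N_{[r,r+t]} > (1+\delta)\mu t\bigr] \leq \exp\!\left(-\frac{\delta^2 \mu t}{2+\delta}\right) = e^{-\Omega(\delta^2 \mu t)}.
\end{equation}
This yields the desired bound for a single interval, matching the exponent in the statement.

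The main obstacle is promoting the single-interval bound to a bound on $A_t(\mathcal{E}nv)$, which is the maximum over \emph{all} intervals of length $t$ within the protocol horizon $T$. To handle this, I would discretize time into a grid of mesh $\tau \ll t$ (e.g.\ $\tau = 1/(\lambda_s+\lambda_i)$), observe that any length-$t$ window is contained in a length-$(t+\tau)$ window whose left endpoint lies on the grid, and then apply a union bound over the $\lceil T/\tau\rceil$ grid points. Since the polynomial factor from the union bound is absorbed into the exponential tail (provided $T$ is at most polynomial in the relevant parameters, as is standard in the security analyses we import from Appendix~\ref{CP_proof}), the overall bound remains $e^{-\Omega(\delta^2 \mu t)}$, completing the proof. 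The proof of Lemma~\ref{upper_block_num} is identical with $\mu$ replaced by $\lambda_s+\lambda_i$.
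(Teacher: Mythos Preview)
Your proposal is correct and follows essentially the same route as the paper: model adversarial arrivals as a Poisson process with rate $(1-\rho_i)(\lambda_s+\lambda_i)$ and apply a multiplicative Chernoff bound. The paper's own proof is a one-paragraph invocation of ``Poisson mean plus Chernoff'' and does not separately address the supremum over all length-$t$ intervals; your discretization and union-bound step is a welcome addition that fills a detail the paper leaves implicit.
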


\begin{proof}
Since the generation of blocks in each second follows an independent Poisson process, the expected number of blocks mined in one second is $\lambda_s + \lambda_i$, out of which $1-\rho_i$ are adversarial. Over an interval of $t$ seconds, on average, $(\lambda_s + \lambda_i)t$ blocks are mined, and $(1-\rho_i)(\lambda_s + \lambda_i)t$ of them are adversarial. By applying the Chernoff bound, we can directly prove Lemma \ref{upper_block_num} and Lemma \ref{upper_adv_block_num}.
\end{proof}

Considering a block sequence $\mathcal{B}j, \mathcal{B}{j+1}, ..., \mathcal{B}{j+T}$, where $\mathcal{B}{j-1}$ and $\mathcal{B}_{j+T+1}$ are honest, as involving these two blocks will increase the fraction of honest blocks in the sequence. Let $r'$ be the time when $\mathcal{B}_{j-1}$ is mined, and $r' + t$ be the time when $\mathcal{B}_{j+T+1}$ is mined. 

We consider three expected events with negligible probability. The simultaneous occurrence of these three events ensures the chain quality property.

$\mathtt{EXP}_1$: For any $\delta \in (0,1)$, it holds that $Q_{T/(1+\delta)(\lambda_s+\lambda_i)t} < T$ except with a probability of $e^{-\Omega(\delta^2T)}$ (as stated in Lemma \ref{upper_block_num}). Consequently, $t > T/(1+\delta)(\lambda_s+\lambda_i)t$ except with a probability of $\epsilon_1(T) = e^{-\Omega(\delta^2T)}$.

$\mathtt{EXP}_2$: Due to Theorem \ref{cg_property}, the chain growth is at least $(1-\delta')\mathtt{g}$, where $\mathtt{g} = \frac{p_i(\lambda_i + \lambda_s)}{1 + \Delta(\lambda_s + \lambda_i)}$. Therefore except with probability of $\epsilon_2(T)$, we have $t \leq \frac{T}{(1-\delta')\mathtt{g}}$.

$\mathtt{EXP}_3$: According to Lemma \ref{upper_adv_block_num}, for any $\delta''\in(0,1)$, there are at most $(1+\delta'')(1-\rho_i)(\lambda_s + \lambda_i)t$ adversarial blocks mined during the interval, except with probability of $\epsilon_3(T) < e^{-\Omega(\delta''(1-\rho_i)(\lambda_s + \lambda_i)t)}$.

Assuming all expected events happens except with probability $\epsilon(T)$, the fraction of adversarial blocks $\overline{\mathtt{q}}$ is computed as follows:

\begin{equation}
\begin{split}
&\overline{\mathtt{q}} = (1+\delta'')(1-\rho_i)(\lambda_s + \lambda_i)t / T \\
\leq& \frac{1+\delta''}{1-\delta'}\cdot T\cdot \frac{[1 + \Delta_ip_i(\lambda_s + \lambda_i)(1-\rho_i)]}{p_i} / T\\
\leq& \frac{1+\delta''}{1-\delta'}\cdot \frac{[1 + \Delta_ip_i(\lambda_s + \lambda_i)(1-\rho_i)]}{p_i}.
\end{split}
\end{equation}

For every $\delta > 0$, by selecting sufficiently small $\delta'$ and $\delta''$, $\overline{\mathtt{q}}$ is bounded by
 
\begin{equation}
\begin{split}
\overline{\mathtt{q}} \leq (1+\delta)\frac{[1 + \Delta_ip_i(\lambda_s + \lambda_i)(1-\rho_i)]}{p_i}.
\end{split}
\end{equation}

thus the fraction of honest blocks is at least $1-\overline{\mathtt{q}}$, thereby completing the proof of Theorem \ref{cq_property}.



\section{Proof of Throughput (Theorem \ref{btc_tps} \& \ref{mfd_tps})}  \label{mfd_tps_proof}

We consider the Manifoldchain $\prod_{mfd}(f, \lambda_s, \{\lambda_i\}^m)$ in environment $\mathcal{E}nv(\Delta, \rho)$ that holds the efficient-CP property $\mathcal{ECQ}(\kappa, \varepsilon(\kappa), \lambda)$ and satisfies the system security requirement $\overline{\mathcal{P}}$. Assuming $\lambda_t = \lambda_s + \lambda_i$ represents the total mining rate in shard $i$. According to the Ineq.\ref{eq_pi}, $p_i$ is bounded as follows:

\begin{equation}
\begin{split}
p_i =& \frac{\lambda_i\rho_i + m\lambda_s\rho}{\lambda_i + m\lambda_s}e^{-(\lambda_i + \lambda_s)\Delta_i}\\
\leq& e^{-(\lambda_i + \lambda_s)\Delta_i}\\
=& e^{-\lambda_t\Delta_i}.
\end{split} 
\end{equation}

In a blockchain satisfies liveness, $\lambda_t > 0$. We denote by $\underline{\Delta}$, $\overline{\Delta}$ the lower bound, upper bound of $\Delta_i$. $p_i$ satisfies

\begin{equation}
\begin{split}
\frac{1}{2} < p_i \leq e^{-\underline{\lambda}\underline{\Delta}} = \overline{p} < 1,
\end{split}
\end{equation}

where $\overline{p}$ represents the upper bound of $p_i$ in a growing blockchain protocol. Consquently we can bound $\varepsilon(\kappa)$ as follows:

\begin{equation}
\begin{split}
\varepsilon(\kappa) \leq & (2 + 2\sqrt{\frac{1}{\frac{1}{p_i}} - 1})(4p_i(1-p_i))^k \\
\leq& (2 + 2\sqrt{\frac{\overline{p}}{1 - \overline{p}}})(4p_i(1-p_i))^k.
\end{split}
\end{equation}

$\prod_{mfd}(f, \lambda_s,\{\lambda_i\}^m)$ satisfies the system security requirement as long as 

\begin{equation}
\begin{split}
 (2 + 2\sqrt{\frac{\overline{p}}{1 - \overline{p}}})(4p_i(1-p_i))^k \leq \overline{\mathcal{P}},
\end{split}
\end{equation}

which requires that

\begin{equation}\label{eq2}
\begin{split}
p_i \geq \frac{1 + \sqrt{1 - \sqrt[k]{\frac{\overline{\mathcal{P}}}{2 + 2\sqrt{\frac{\overline{p}}{1-\overline{p}}}}}}}{2}.
\end{split}
\end{equation}

For simplification, we represent Ineq. \ref{eq2} equivalently as follows:

\begin{equation}
\begin{split}
&p_i \geq \delta(\overline{\mathcal{P}}, \kappa), where\\
&\delta(\overline{\mathcal{P}}, \kappa) = \frac{1 + \sqrt{1 - \sqrt[k]{\frac{\overline{\mathcal{P}}}{2 + 2\sqrt{\frac{\overline{p}}{1-\overline{p}}}}}}}{2} \in (\frac{1}{2}, 1).
\end{split}
\end{equation}

Assuming a real scenario that honest miners can not known the exact fraction of adversarial hashing power, the following inequality has to be satisfied:

\begin{equation}
\begin{split}
p_i > e^{-(\gamma_i + 1)\lambda_s\Delta_i}\frac{m\rho}{\gamma_i + m} > \delta(\overline{\mathcal{P}}, \kappa).
\end{split}
\end{equation}

Hence we have 

\begin{equation}
\begin{split}
\lambda_s < \frac{1}{(\gamma_i + 1)\Delta_i}\log\frac{m\rho}{\delta(\overline{\mathcal{P}}, \kappa)(\gamma_i + m)}.
\end{split}
\end{equation}

We propose a lemma as follows to ensure the non-negativity of the throughput.

\begin{lemma}
$\p_i \leq \rho$ as long as the following constrains hold:
\begin{equation}
\begin{split}
&\underline{\lambda} \geq \frac{1}{(\gamma_i + 1)\Delta_i}\log(\frac{\gamma_i}{\gamma_i + m}\cdot\frac{\rho_i}{\rho} + 1),\\
&\gamma_i < \frac{\rho(\frac{\rho}{\delta(\overline{\mathcal{P}}, \kappa)}-1)m}{\rho + 1}.
\end{split}
\end{equation}
\end{lemma}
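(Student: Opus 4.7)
The plan is to reduce the claim $p_i \leq \rho$ to an explicit analytic condition on $\lambda_s$, verify that condition using constraint (1), and then use constraint (2) to argue that this lower bound on $\lambda_s$ is consistent with the security upper bound derived just above the lemma. Writing $\gamma_i = \lambda_i/\lambda_s$, I would first rewrite $p_i = \frac{\gamma_i\rho_i + m\rho}{\gamma_i+m}\,e^{-(\gamma_i+1)\lambda_s\Delta_i}$, so that $p_i \leq \rho$ becomes equivalent to $(\gamma_i+1)\lambda_s\Delta_i \geq \log\!\left(\frac{\gamma_i\rho_i}{(\gamma_i+m)\rho} + \frac{m}{\gamma_i+m}\right)$.

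Handling the principal inequality is then essentially immediate from constraint (1). Since $\frac{m}{\gamma_i+m} < 1$, the right-hand side above is strictly dominated by $\log\!\left(\frac{\gamma_i}{\gamma_i+m}\cdot\frac{\rho_i}{\rho} + 1\right)$, and combining the ECP hypothesis $\lambda_s \geq \underline{\lambda}$ with constraint (1) yields the chain $(\gamma_i+1)\lambda_s\Delta_i \geq (\gamma_i+1)\underline{\lambda}\Delta_i \geq \log\!\left(\frac{\gamma_i}{\gamma_i+m}\cdot\frac{\rho_i}{\rho} + 1\right)$, which in turn implies the bound on $p_i$. The only substantive analytic step here is the monotonicity observation that ``$+1$ dominates $+\frac{m}{\gamma_i+m}$'' inside the logarithm.

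Constraint (2) plays a different, compatibility role in my plan. The upper bound $\lambda_s < \frac{1}{(\gamma_i+1)\Delta_i}\log\!\frac{m\rho}{\delta(\overline{\mathcal{P}},\kappa)(\gamma_i+m)}$ established just above the lemma is forced by the security requirement $p_i \geq \delta(\overline{\mathcal{P}},\kappa)$, so the admissible window for $\lambda_s$ (bounded below by constraint (1) and above by this security bound) must be nonempty for the lemma's hypothesis to be realizable. I would clear the two logarithms against each other, which reduces the compatibility check to the polynomial inequality $\delta\gamma_i(\rho_i + \rho) + \delta m\rho < m\rho^2$; the worst-case relaxation $\rho_i \leq 1$ then yields exactly constraint (2). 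The main obstacle I anticipate is bookkeeping of roles rather than any hard estimate: I need to make sharply clear that (1) is what directly forces $p_i \leq \rho$ while (2) is what prevents the hypothesis ``an admissible $\lambda_s$ exists'' from being vacuous under the security-induced upper bound, and I must confirm that $\rho_i \leq 1$ (and not some stronger bound such as $\rho_i \leq \rho$) is the appropriate worst case used to pass from the exact feasibility condition to the stated form of constraint (2).
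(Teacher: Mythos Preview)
Your proposal is correct and follows essentially the same approach as the paper. The paper proves the two pieces in the opposite order (it first uses constraint (2) together with $\rho_i\le 1$ to show the lower and upper bounds on $\lambda_s$ are compatible, then chains constraint (1) with $\lambda_s\ge\underline{\lambda}$ through a longer algebraic manipulation to reach $p_i\le\rho$), but the content is identical; your one-line monotonicity observation that ``$+1$ dominates $+\tfrac{m}{\gamma_i+m}$'' is exactly the slack the paper exploits when it silently drops a ``$+\rho$'' term midway through its chain of inequalities.
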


\begin{proof}
Initially we prove the following formula to ensure the existence of real $\lambda_s$:
\begin{equation}\label{real_lambda_s}
    \begin{split}
         \frac{1}{(\gamma_i + 1)\Delta_i}\log(\frac{\gamma_i}{\gamma_i + m}\cdot\frac{\rho_i}{\rho} + 1) < \frac{1}{(\gamma_i + 1)\Delta_i}\log(\frac{m}{\delta(\overline{\mathcal{P}}, \kappa)}\rho)        
    \end{split}
\end{equation}

Specifically,

\begin{equation}
\begin{split}
&\gamma_i < \frac{\rho(\frac{\rho}{\delta(\overline{\mathcal{P}}, \kappa)}-1)m}{\rho + 1},\\
&\gamma_i < \frac{(\frac{\rho}{\delta(\overline{\mathcal{P}}, \kappa)}-1)m}{\frac{1}{\rho} + 1},\\
&\gamma_i < \frac{(\frac{\rho}{\delta(\overline{\mathcal{P}}, \kappa)}-1)m}{\frac{\rho_i}{\rho} + 1},\\
&(\frac{\rho_i}{\rho} + 1) \gamma_i < (\frac{\rho}{\delta(\overline{\mathcal{P}}, \kappa)} - 1)m,\\
&\gamma_i\frac{\rho_i}{\rho} + \gamma_i + m < \frac{m\rho}{\delta(\overline{\mathcal{P}}, \kappa)},\\
\end{split}    
\end{equation}
\begin{equation}
\begin{split}
&\frac{\gamma_i}{\gamma_i + m}\frac{\rho_i}{\rho} + 1 < \frac{m}{\delta(\overline{\mathcal{P}}, \kappa)(\gamma_i + m)}\rho,\\
&\frac{1}{(\gamma_i + 1)\Delta_i}\log(\frac{\gamma_i}{\gamma_i + m}\cdot\frac{\rho_i}{\rho} + 1) < \frac{1}{(\gamma_i + 1)\Delta_i}\log(\frac{m}{\delta(\overline{\mathcal{P}}, \kappa)}\rho).
\end{split}    
\end{equation}

Therefore, there exits a real $\lambda_s$ which satisfies the Ineq.\ref{real_lambda_s}. As $\lambda \geq \underline{\lambda}$, we have
\begin{equation}
\begin{split}
&\lambda_s \geq \underline{\lambda} \geq \frac{1}{(\gamma_i + 1)\Delta_i}\log(\frac{\gamma_i}{\gamma_i + m}\cdot\frac{\rho_i}{\rho} + 1),\\
&e^{(\gamma_i + 1)\lambda_s\Delta_i}\geq \frac{\gamma_i}{\gamma_i + m}\cdot\frac{\rho_i}{\rho} + 1,\\
&e^{(\gamma_i + 1)\lambda_s\Delta_i} - 1 \geq \frac{\gamma_i}{\gamma_i + m}\cdot\frac{\rho_i}{\rho},\\
&\frac{1}{e^{(\gamma_i + 1)\lambda_s\Delta_i} - 1} \leq (\frac{m}{\gamma_i} + 1)\frac{\rho}{\rho_i},\\
\end{split}
\end{equation}
\begin{equation}
\begin{split}
& \frac{\rho_i}{\rho}\frac{1}{e^{(\gamma_i + 1)\lambda_s\Delta_i} - 1} - 1\leq \frac{m}{\gamma_i},\\
& \frac{\gamma_i}{\rho}\frac{\rho_i}{e^{(\gamma_i + 1)\lambda_s\Delta_i} - 1} - \gamma_i \leq m,\\
&\frac{\gamma_i}{\rho}\frac{\rho_i - \rho(e^{(\gamma_i + 1)\lambda_s\Delta_i} - 1)}{e^{(\gamma_i + 1)\lambda_s\Delta_i} - 1} \leq m,\\
&\frac{\gamma_i}{\rho}\frac{\rho_i - \rho e^{(\gamma_i + 1)\lambda_s\lambda_i} + \rho}{e^{(\gamma_i + 1)\lambda_s\Delta_i} - 1} \leq m,\\
&\frac{\gamma_i}{\rho}\frac{\rho_i - \rho e^{(\gamma_i + 1)\lambda_s\lambda_i}}{e^{(\gamma_i + 1)\lambda_s\Delta_i} - 1} \leq m,\\
&\gamma_i\rho_i - \gamma_i\rho e^{(\gamma_i + 1)\lambda_s\lambda_i} \leq m\rho(e^{(\gamma_i + 1)\lambda_s\lambda_i} - 1), \\
&\frac{\gamma_i\rho_i + m\rho}{\gamma_i + m} \leq \rho e^{(\gamma_i + 1)\lambda_s\lambda_i},\\
&e^{-(\gamma_i + 1)\lambda_s\lambda_i}\frac{\gamma_i\rho_i + m\rho}{\gamma_i + m} = p_i \leq \rho.
\end{split}
\end{equation}

\end{proof}

Therefore, holding the following constrains:

\begin{equation}
\begin{split}
&\gamma_i < \frac{\rho(\frac{\rho}{\delta(\overline{\mathcal{P}}, \kappa)}-1)m}{\rho + 1},\\
&\frac{\log(\frac{\gamma_i}{\gamma_i + m}\cdot\frac{\rho_i}{\rho} + 1)}{(\gamma_i + 1)\Delta_i} \leq\lambda_s < \frac{\log\frac{m\rho}{\delta(\overline{\mathcal{P}}, \kappa)(\gamma_i + m)}}{(\gamma_i + 1)\Delta_i},\\    
\end{split}
\end{equation}

$\prod_{mfd}$ guarantees security and achieves a non-negative throughput of 

\begin{equation}
\begin{split}
\mathcal{T}PS(\prod_{mfd}, \mathcal{E}nv, \mathcal{ECP}) = \lambda_s + \lambda_i = (\gamma_i + 1)\lambda_s.
\end{split}
\end{equation}

We need to determine the optimal values of $\{\lambda\} = \{\lambda_s, \lambda_0, ...\lambda_{m-1}\}$ to maximize the throughput with the following constrain:

\begin{equation}
\begin{split}
\lambda_s + \lambda_i = (\gamma_i + 1)\lambda_s \leq \frac{1}{\Delta_i}\log\frac{m\rho}{\delta(\overline{\mathcal{P}}, \kappa)(\gamma_i + m)}.
\end{split}
\end{equation}

We now present our result as follows:

\begin{lemma}\label{appendix_max_tps_lemma}
The throughput of shard $i$ is maximized when the $\{\lambda\}$ is chosen as $\{\lambda\}^*$, where 
\begin{equation}
\begin{split}
&\lambda_s^* = \frac{1}{\overline{\Delta}}\log\frac{\rho}{\delta(\overline{\mathcal{P}}, \kappa)},\\
&\lambda_i^* = \gamma_i^*\lambda_s^*,\\
\end{split}
\end{equation}
and $\gamma_i$ satisfies the constrain:
\begin{equation}
\begin{split}
\frac{1}{\gamma_i^* + 1} \frac{\log \frac{m\rho}{\delta(\overline{\mathcal{P}}, \kappa)(\gamma_i^* + m)}}{\log \frac{\rho}{\delta(\overline{\mathcal{P}}, \kappa)}} = \frac{\Delta_i}{\overline{\Delta}}.
\end{split}
\end{equation}
\end{lemma}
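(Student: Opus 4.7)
The plan is to recast the question as a constrained maximization over the global inclusive-mining rate $\lambda_s$ and the per-shard ratios $\gamma_i = \lambda_i/\lambda_s$, exploit that $\lambda_s$ appears in every shard's constraint while each $\lambda_i$ only appears in one, and show that the maximum is attained at the claimed values. The total throughput is $\sum_i(\lambda_s + \lambda_i) = m\lambda_s + \lambda_s\sum_i\gamma_i$, and the security constraint (\ref{optimization}) with the worst-case $\rho_i = 0$ reduces, exactly as derived just above Lemma~\ref{appendix_max_tps_lemma}, to the per-shard inequality
\begin{equation}
(\gamma_i+1)\lambda_s \;\le\; \frac{1}{\Delta_i}\log\frac{m\rho}{\delta(\overline{\mathcal{P}},\kappa)(\gamma_i+m)} \qquad \forall i.
\end{equation}

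First, I would argue that raising $\lambda_s$ is ``cheaper per unit of throughput gain'' than raising any single $\lambda_i$: a unit increase in $\lambda_s$ contributes $m$ to the objective (it appears in every shard) whereas a unit increase in $\lambda_i$ contributes only $1$. Hence an optimum pushes $\lambda_s$ to the tightest of its constraints. Let $j$ denote the slowest shard, so $\Delta_j=\overline{\Delta}$. Define $f(\gamma)=\frac{1}{\gamma+1}\log\frac{m\rho}{\delta(\gamma+m)}$. A direct differentiation gives $f'(\gamma) = -\frac{1}{(\gamma+1)^2}\log\frac{m\rho}{\delta(\gamma+m)} - \frac{1}{(\gamma+1)(\gamma+m)} < 0$ whenever the logarithm is positive (which is implied by $p_i>\tfrac12$ and the parameter ranges assumed in the statement). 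Therefore the shard-$j$ constraint is loosest when $\gamma_j=0$, giving
\begin{equation}
\lambda_s \;\le\; \frac{1}{\overline{\Delta}}\log\frac{\rho}{\delta(\overline{\mathcal{P}},\kappa)} \;=\; \lambda_s^{*}.
\end{equation}
One must still verify that raising $\gamma_j$ above zero cannot be compensated in other shards: since doing so strictly tightens the bound on $\lambda_s$ and thereby reduces the $m\lambda_s$ term by more than the $\lambda_s\gamma_j$ gain (for any $m\ge 1$), the optimum indeed has $\gamma_j=0$.

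With $\lambda_s$ pinned to $\lambda_s^{*}$, each remaining shard $i$ with $\Delta_i<\overline{\Delta}$ enters the second stage independently: we maximize $\gamma_i$ subject to $(\gamma_i+1)\lambda_s^{*}\Delta_i \le \log\frac{m\rho}{\delta(\gamma_i+m)}$. At $\gamma_i=0$ the right side equals $\log(\rho/\delta)$, and the left equals $\frac{\Delta_i}{\overline{\Delta}}\log(\rho/\delta)<\log(\rho/\delta)$, so there is slack. The right side is strictly decreasing in $\gamma_i$ (same calculation as $f'$ above, after multiplying by $\gamma_i+1$), while the left side is strictly increasing and linear, so by the intermediate value theorem there is a unique $\gamma_i^{*}>0$ at which equality holds, namely
\begin{equation}
\frac{1}{\gamma_i^{*}+1}\cdot\frac{\log\frac{m\rho}{\delta(\overline{\mathcal{P}},\kappa)(\gamma_i^{*}+m)}}{\log\frac{\rho}{\delta(\overline{\mathcal{P}},\kappa)}} \;=\; \frac{\Delta_i}{\overline{\Delta}}.
\end{equation}
Any larger $\gamma_i$ would violate the security constraint, so this is the maximizer.

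The main obstacle I expect is the first-stage argument — formally ruling out that a joint perturbation which simultaneously lowers $\lambda_s$ below $\lambda_s^{*}$ and raises some $\gamma_i$'s could yield a larger total throughput. I would handle this with a standard exchange argument: from any candidate optimum, show that a small increase $\delta\lambda_s>0$ accompanied by the induced decreases in $\{\gamma_i\}$ required to restore feasibility leaves the objective non-decreasing, because the $m$-to-$1$ ``leverage'' of $\lambda_s$ dominates the linearized cost of shrinking the $\gamma_i$'s (the sensitivities $\partial\gamma_i/\partial\lambda_s$ along the active constraint are bounded in terms of $f'$ computed above). This forces $\lambda_s=\lambda_s^{*}$, after which the per-shard maximization decouples and yields the claimed $\gamma_i^{*}$.
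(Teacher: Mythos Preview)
Your proposal targets the \emph{total} throughput $\sum_i(\lambda_s+\lambda_i)$, whereas the lemma as stated concerns the throughput of a \emph{single} shard $i$, namely $(\gamma_i+1)\lambda_s$. The $m$-to-$1$ leverage you invoke (``a unit increase in $\lambda_s$ contributes $m$ to the objective'') is a feature of the summed objective only; for the per-shard objective both $\lambda_s$ and $\lambda_i$ carry weight $1$, so that heuristic does not apply here. Once the per-shard statement is established, the total follows by summation, which is exactly how the paper proceeds immediately after the lemma --- so you have inverted the logical order.

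The paper's proof of the per-shard statement is a two-line contradiction that bypasses your exchange/sensitivity machinery entirely. The key observation --- which you actually computed as $f'(\gamma)<0$, but only deployed on the slowest shard --- is that along the shard-$i$ constraint boundary the throughput equals
\[
(\gamma_i+1)\lambda_s \;=\; \frac{1}{\Delta_i}\log\frac{m\rho}{\delta(\overline{\mathcal P},\kappa)(\gamma_i+m)},
\]
and the right-hand side is strictly decreasing in $\gamma_i$. Now suppose some feasible $(\lambda_s',\gamma_i')$ achieves strictly larger shard-$i$ throughput than $(\lambda_s^{*},\gamma_i^{*})$. The slowest-shard constraint forces $\lambda_s'\le\lambda_s^{*}$, so to beat $(\gamma_i^{*}+1)\lambda_s^{*}$ one needs $\gamma_i'>\gamma_i^{*}$. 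But then the displayed bound, evaluated at the larger $\gamma_i'$, is strictly smaller than at $\gamma_i^{*}$, so the purported throughput is strictly smaller --- contradiction. The ``main obstacle'' you anticipated (ruling out joint perturbations that lower $\lambda_s$ and raise some $\gamma_i$) dissolves once you read the binding constraint as expressing the throughput directly as a monotone function of $\gamma_i$; no linearized exchange argument is required.
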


\begin{proof}
Assuming that there exits another $\{\lambda\}'$ which achieves a greater throughput. Given any possible $\lambda_s$, to ensure the security in the shard with maximum network delay, it must satisfies $\lambda_s \leq \frac{1}{\overline{\Delta}(\gamma_i + 1)}\log \frac{m\rho}{\delta(\overline{\mathcal{P}}, \kappa)(\gamma_i + m)} \leq \frac{1}{\Delta_i}\log\frac{\rho}{\delta(\overline{\mathcal{P}}, \kappa)} = \lambda_s^*$. Besides, $\lambda_i^*$ is the maximum value to be chosen when fixing $\lambda_s = \lambda_s^*$. If there exits $\lambda_i > \lambda_i^*$, then

\begin{equation}
\begin{split}
&\frac{1}{\gamma_i + 1} \frac{\log \frac{m\rho}{\delta(\overline{\mathcal{P}}, \kappa)(\gamma_i + m)}}{\log \frac{\rho}{\delta(\overline{\mathcal{P}}, \kappa)}} < \frac{\Delta_i}{\overline{\Delta}},\\
&\frac{1}{(\gamma_i + 1)\Delta_i}\log\frac{m\rho}{\delta(\overline{\mathcal{P}}, \kappa)(\gamma_i + m)} < \frac{1}{\overline{\Delta}}\log\frac{\rho}{\delta(\overline{\mathcal{P}} = \lambda_{s}^*, \kappa)},\\
&\lambda_s + \lambda_i = (\gamma_i + 1)\lambda_s > \frac{1}{\Delta_i}\log\frac{m\rho}{\delta(\overline{\mathcal{P}}, \kappa)(\gamma_i + m)},
\end{split}
\end{equation}

which is in contradiction with the security requirement. Therefore, $\lambda_s' < \lambda_s^*$ and $\lambda_i' > \lambda_i^*$. In this case, $\gamma_i^* < \gamma_i'$ for any $i$. Consequently,
\begin{equation}
\begin{split}
\mathcal{T}PS(\{\lambda\}') &= \lambda_s' + \lambda_i' = \frac{1}{\Delta_i}\log\frac{m\rho}{\delta(\overline{\mathcal{P}}, \kappa)(\gamma_i' + m)}\\
&< \frac{1}{\Delta_i}\log\frac{m\rho}{\delta(\overline{\mathcal{P}}, \kappa)(\gamma_i^* + m)} = \mathcal{T}PS(\{\lambda\}^*),
\end{split}
\end{equation}
which is in contradiction with the assumption, thereby complete the proof.
\end{proof}

Hence, we can obtain the maximum total throughput over $m$ shards and complete the proof of Theorem \ref{mfd_tps}:

\begin{equation}
\begin{split}
\mathcal{T}PS^* &= m\rho\lambda_s + \rho_i\sum_{i}^m\lambda_i\\
&= m\rho\lambda_s + \lambda_s\sum_i^m \gamma_i\rho_i\\
&= \lambda_s(m\rho + \sum_i^m \gamma_i\rho_i)\\
&= \frac{1}{\Delta}\log\frac{\rho}{\delta(\overline{\mathcal{P}}, \kappa)}(m\rho + \sum_i^m \gamma_i\rho_i), 
\end{split}
\end{equation}

where 

\begin{equation}
\begin{split}
&\frac{1}{\gamma_i + 1} \frac{\log \frac{m\rho}{\delta(\overline{\mathcal{P}}, \kappa)(\gamma_i + m)}}{\log \frac{\rho}{\delta(\overline{\mathcal{P}}, \kappa)}} \geq \frac{\Delta_i}{\Delta},\\
&\underline{\lambda} \geq \frac{1}{(\gamma_i + 1)\Delta_i}\log(\frac{\gamma_i}{\gamma_i + m}\cdot\frac{\rho_i}{\rho} + 1),\\
&\gamma_i < \frac{\rho(\frac{\rho}{\delta(\overline{\mathcal{P}}, \kappa)}-1)m}{\rho + 1}.
\end{split}
\end{equation}

\end{document}